\documentclass[11pt]{article}

\usepackage[margin=1in]{geometry}

\usepackage[utf8]{inputenc}
\usepackage[T1]{fontenc}
\usepackage{lmodern}
\usepackage{mathpazo}

\usepackage{xspace}
\usepackage{authblk}

\usepackage{enumitem}
\usepackage[dvipsnames]{xcolor}
\definecolor{ptblue}{RGB}{15,76,129}
\definecolor{cobalt}{rgb}{0.0, 0.28, 0.67}
\usepackage{tikz}
\usetikzlibrary{arrows.meta, positioning, calc, decorations.pathreplacing, fit, backgrounds}

\usepackage{amsmath,amsfonts,amssymb,amsthm,mathtools}

\let\OLDforall\forall
\renewcommand{\forall}{\;\OLDforall\:}
\let\OLDexists\exists
\renewcommand{\exists}{\;\OLDexists\,}

\usepackage[square,sort]{natbib}
\usepackage{hyperref}
\hypersetup{
linktocpage,
colorlinks=true,
citecolor=cobalt,
urlcolor=ptblue,
linkcolor=Plum,
pdftitle={Best-of-Both-Worlds Fairness for Mixed Goods and Chores},
pdfauthor={Haris Aziz, Xiaolin Bu, Xinhang Lu, Simon Mackenzie, Mashbat Suzuki, Biaoshuai Tao, and Toby Walsh},
}
\usepackage{cleveref}
\usepackage{crossreftools}
\theoremstyle{plain}
\newtheorem{theorem}{Theorem}[section]

\newtheorem*{claim-non}{Claim}

\newtheorem{lemma}[theorem]{Lemma}
\newtheorem{proposition}[theorem]{Proposition}

\theoremstyle{definition}
\newtheorem{definition}[theorem]{Definition}

\theoremstyle{remark}

\makeatletter
\newcommand{\EF}[1]{\if\relax\detokenize\expandafter{\@firstofone#1{}}\relax \text{EF}\xspace\else \text{EF#1}\fi}
\makeatother
\newcommand{\EFOne}{\EF{1}\xspace}
\newcommand{\EFX}{\EF{X}\xspace}
\newcommand{\EFM}{EF{M}\xspace}
\newcommand{\SD}{\text{SD}\xspace}
\newcommand{\SDEF}{\text{SD-EF}\xspace}
\newcommand{\SDEFOne}{\text{SD-EF1}\xspace}
\newcommand{\SDEFTwo}{\text{SD-EF2}\xspace}

\newcommand{\bbR}{\mathbb{R}}
\newcommand{\exan}{ex-ante\xspace}
\newcommand{\expo}{ex-post\xspace}

\newcommand{\PP}{\mathbb{P}}
\newcommand{\EE}{\mathbb{E}}
\newcommand{\RR}{\mathbb{R}}

\title{Best-of-Both-Worlds Fairness for Mixed Goods and Chores}
\author[1]{Haris Aziz}
\author[2]{Xiaolin Bu}
\author[3]{Xinhang Lu}
\author[1]{Simon Mackenzie}
\author[1]{Mashbat Suzuki}
\author[2]{Biaoshuai~Tao}
\author[1]{Toby Walsh}
\affil[1]{UNSW Sydney, \nolinkurl{{haris.aziz, simon.w.mackenzie, mashbat.suzuki, t.walsh}@unsw.edu.au}}
\affil[2]{Shanghai Jiao Tong University, \nolinkurl{{lin_bu, bstao}@sjtu.edu.cn}}
\affil[3]{Kyushu University, \nolinkurl{xinhang.lu@inf.kyushu-u.ac.jp}}
\date{}

\begin{document}
\maketitle

\begin{abstract}
We study the fundamental problem of fairly dividing indivisible items among agents with additive utilities.
In our model, an item can be a good yielding non-negative utilities to some agents and simultaneously a chore yielding negative utilities to others.
We take the best-of-both-worlds perspective and our goal is to construct a randomized allocation that is exactly fair ex ante while also being supported on ex post approximately fair allocations.
The fairness notions examined in this paper are \emph{envy-freeness (EF)} and its well-known relaxation \emph{envy-freeness up to one item (EF1)}.
Our main result is that ex-ante EF and ex-post EF1 can be achieved simultaneously.
To achieve this, we introduce a novel probabilistic Hall-type matrix decomposition
that intricately correlates the fractional assignments of goods and chores. We resolve this
decomposition problem by combining continuous minimax duality---via Sion's minimax
theorem---with carefully designed biased flow networks.
\end{abstract}

\section{Introduction}

The \emph{fair} division of valuable scarce resources or undesirable burden among agents is a common problem that arises frequently in society, and has attracted extensive and ongoing interest in the fields of mathematics, economics, operations research, and computer science~\citep{BramsTa96,BrandtCoEn16,Moulin03,Moulin19,Rothe24}.
The quintessential notion of fairness is arguably \emph{envy-freeness (EF)}, which stipulates that each agent values their own allocated bundle weakly more than any other bundle in the allocation~\citep{Foley67}.
In other words, each agent does not envy any other agent.

In this paper, we are concerned with the fundamental problem of fairly dividing \emph{indivisible items}.
Due to its potential to capture a wide range of real-world applications, indivisible-items allocation has received significant attention in recent years, with focus mainly on allocating (purely) goods~\citep{AmanatidisAzBi23} or (purely) chores~\citep{GuoLiDe23}.
Note that \emph{goods} (resp., \emph{chores}) are items yielding non-negative (resp., negative) utilities to the agents.
In addition to theoretical research, the developed algorithms have already been deployed in real life and made accessible to the public through web applications; see, e.g., Course Match~\citep{BudishCaKe17}, Spliddit~\citep{GoldmanPr15} and Kajibuntan~\citep{IgarashiYo23}.

We focus on a more general model that allows agents to report more expressive utilities over the set of indivisible items to be divided.
More specifically, in our model of mixed goods and chores, an indivisible item can be a good to some agents and simultaneously a chore to other agents.
Use cases include household chore division between couples or tenants, the division of assets and liabilities between multiple parties, etc.
In our first example, an agent may want to avoid cooking, while another agent may find cooking is desirable and relaxing.
Similarly, in our second example, while some stakeholders seek risky assets as they have the potential to yield high returns, others may be risk-averse and would like to receive low-risk assets.

When items are indivisible, envy-freeness is too demanding to achieve, as envy-free allocations may not always exist.
One common way to circumvent this issue is to weaken envy-freeness.
For instance, \emph{envy-freeness up to one item (\EFOne)} is a very well-known relaxation of envy-freeness, which was first introduced in the goods setting~\citep{LiptonMaMo04,Budish11} and later generalized to the mixed goods and chores setting~\citep{AzizCaIg22}.
An allocation of indivisible items is said to be \EFOne if any envy an agent has towards another agent can be eliminated when we (hypothetically) remove a chore from the envious agent's bundle or a good from the envied agent's bundle.
With a mix of goods and chores, an \EFOne allocation always exists when agents have additive utilities~\citep{AzizCaIg22,BhaskarSrVa21}.

Another natural and common alternative to achieve fairness
is to use randomization, with which we can specify a probability distribution (i.e., \emph{lottery}) over a set of deterministic allocations.
Randomization has been used in a variety of contexts such as assignment problem~\citep{BogomolnaiaMo01}, apportionment~\citep{Grimmett04,GolzPePr26}, collective choice~\citep{BogomolnaiaMoSt05,BrandlBrPe21}, and resource allocation~\citep{BudishChKo13}.
With the power of randomization, it is easy to achieve \emph{ex ante} envy-freeness, e.g., via the following procedure: we choose a single agent uniformly at random and then allocate all of the items to the agent.
It can be seen that no agent envies another in expectation, although each realized deterministic allocation introduces a large amount of envy \emph{ex post}.

In the work of \citet{freeman2020best}, they focused on the goods setting, and initiated the study of constructing a randomized allocation (i.e., a probabilistic distribution over deterministic allocations) that is exactly fair ex ante (before the randomness is realized) and approximately fair ex post (after the randomness is realized).
This ``best of both worlds (BoBW)'' approach has attracted an active line of research.
Among other results, \citeauthor{freeman2020best} showed that for additive utilities, \exan \EF{} and \expo \EFOne are compatible.
Put differently, there always exists a randomized allocation that is envy-free in expectation and can be decomposed into a set of deterministic allocations, all of which satisfy \EFOne.
This motivates a natural question in our setting:\footnote{A more general question was stated as Open Question~4 in the survey of \citet{LiuLuSu24}.}
\begin{quote}
\itshape
With a set of mixed goods and chores and agents having additive utilities, can we achieve simultaneously \exan \EF{} and \expo \EFOne?
\end{quote}

\subsection{Our Result and Technical Overview}

We settle the above question in the affirmative.

\begin{theorem}
\label{thm:main}
Under the mixed goods and chores setting with additive valuations, \exan EF and \expo \EFOne are compatible.
\end{theorem}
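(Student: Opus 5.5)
We will follow the template of \citet{freeman2020best}: first fix a fractional allocation that is envy-free, then show that this particular fractional allocation decomposes into deterministic allocations that are all \EFOne. For each item $o$, let $G_o$ be the agents with $u_i(o)\ge 0$ and $C_o$ the agents with $u_i(o)<0$.

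\textbf{Step 1: a structured fractional EF allocation.} We allocate items by a continuous ``eating'' process. Each item that is a good to somebody ($G_o\neq\emptyset$) is eaten only by agents in $G_o$. The remaining items are pure chores, i.e., chores to every agent, and these are eaten by everyone, each agent working on the chore it dislikes least. All agents eat or work at unit rate, and the process follows a probabilistic-serving-style rule. Standard arguments give ex-ante EF.
\begin{itemize}[leftmargin=*]
\item For the goods phase, the argument is that of \citet{freeman2020best}: each agent always consumes its favourite remaining good, so its own share dominates any other agent's share.
\item For the pure-chore phase, the chore-PS analogue gives the same dominance. Since an agent never consumes a chore that has positive value to someone else, this is consistent with the goods phase.
\end{itemize}
Because all agents eat at the same rate, we can discretize time into rounds of length one. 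In round $t$, agent $i$ receives a fractional bundle of total mass exactly one (goods in the goods phase, chores in the chore phase). These rounds yield a sequence of fractional perfect-matching-like layers.

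\textbf{Step 2: decomposition.} We want a lottery over deterministic allocations in which, in every goods round, each agent receives at most one good from its round-$t$ fractional support. Symmetrically, in every chore round each agent receives at most one chore, and the marginals must match. We will phrase this as a Birkhoff--von Neumann / bihierarchy decomposition: one constraint family consists of agent-round rows, the other of item columns. In each realization, agent $i$ then gets at most one good per round, and each such good is weakly preferred, by $i$, to anything $j$ gets in any later round. Agent $i$ also gets at most one chore per chore round, and it is weakly less disliked than chores given to others in earlier rounds. The usual shifting argument then shows that envy can be eliminated by dropping $j$'s first-round good or $i$'s last-round chore.

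\textbf{Main obstacle.} The real difficulty is the mixing of the two kinds of items. Suppose $i$'s goods dominate $j$'s by ``one good'' and $i$'s chores are worse than $j$'s by ``one chore''. Then \EFOne fails, because we may remove only one item in total. The decomposition must therefore correlate the two parts. When agent $j$ receives an extra good ahead of $i$ (in the rounding sense), $i$ must not also receive an extra chore relative to $j$. The number of goods each agent obtains is random, so we need a joint condition. We require that in each realization, for every ordered pair $(i,j)$, the discrepancy from rounding goods and the discrepancy from rounding chores do not both go against $i$.

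To achieve this, we will formulate the existence of a good lottery as a Hall-type condition. We need a distribution over integral matchings, consistent with the marginals, that avoids all ``double-bad'' events. We will prove it via minimax duality. The adversary chooses a weighting over pair-events, and we must exhibit, for every such weighting, a lottery with small weighted violation; Sion's minimax theorem then swaps the quantifiers. For a fixed adversary weighting, we will construct the lottery by a biased flow network, biasing the goods rounding toward agents whom the adversary's weights mark as likely to be disadvantaged on chores. Verifying that this biased flow still meets the marginals is where I expect most of the work to lie.
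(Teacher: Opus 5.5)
\paragraph{Verdict.} There is a genuine gap. You correctly locate the obstacle: independent rounding of goods and chores only gives \expo EF2, so the two lotteries must be correlated. But the proposal stops exactly where the proof has to begin, and the tool you propose for that step cannot work as described.

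\paragraph{Why the proposed minimax step fails.} Your requirement is that no realization has an ordered pair $(i,j)$ for which both the goods rounding and the chore rounding go against $i$. This is a family of pairwise, valuation-dependent \emph{support} constraints. Since \expo \EFOne must hold in every realization, each ``double-bad'' event must have probability exactly $0$. Suppose the adversary weights these events and you must make the weighted violation zero. Then for a full-support weighting the inner problem $\min_\lambda \sum_e w_e \Pr_\lambda(e)=0$ is literally the original problem, so Sion's theorem buys nothing. Minimax helps only for expectation-type constraints with nontrivial slack.

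\paragraph{Other problems with the plan.} You give no argument that your particular target (PS on each good among the agents valuing it nonnegatively, recursive PS on chores) admits any correlated \EFOne decomposition. Goods are spread over many rounds, and whether $i$ envies $j$ on goods depends on $i$'s valuation and the realized bundles. Nothing in your plan reduces this to a tractable condition. Meeting the marginals, which you flag as the main work, is not the difficulty: in a flow construction the marginals are built into the fractional flow.

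\paragraph{The missing reduction.} The paper performs a reduction before any minimax.
\begin{itemize}
\item \emph{Bundling preprocessing.} All goods an agent values nonnegatively are merged, and objective chores are merged into a good while someone still values it nonnegatively. This yields pairwise disjoint interest sets $T_j$ and chore-maximal goods, i.e.\ $v_i(M_j\cup\{c\})<0$ for all $i$ and all $c\in Z$.
\item \emph{A valuation-free \EFOne rule.} $M_j$ is split equally over $T_j$, and chores are allocated by recursive PS padded with $n-r$ dummies. Then \expo \EFOne follows from Lemma~\ref{lem:expo-ef1}: $M_j$ may go only to an agent of $T_j$ who receives a real chore in the first round, or to anyone in $T_j$ if no one in $T_j$ does. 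Chore-maximality makes ``good plus first-round chore'' negative to every agent.
\item \emph{A probabilistic Hall condition.} Fix a lottery over first-round chore matchings. Allocating each $M_j$ with marginals $1/|T_j|$ under this rule is a flow problem. Its min-cut condition is the inequality $\Pr(K_\mu\cap T_j\neq\emptyset,\ K_\mu\cap T_j\subseteq J)\le |J|/|T_j|$.
\item \emph{Where minimax enters.} This inequality has a positive right-hand side. It is lifted via $\varphi_w(S)=\min_{i\in S}w_i$, which is concave in $w$, and then handled by Sion. For a fixed $w$ only the nested chain of prefixes in the $w$-order must be controlled, which the chain-node network with $\lfloor\cdot\rfloor/\lceil\cdot\rceil$ capacities does. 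The bias is placed on the chore matching, not on the goods.
\end{itemize}

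\paragraph{The few-chores regime.} When at most $n$ objective chores remain, an agent may hold no real chore to remove, and a bundled good cannot be removed as a single original item. The paper therefore uses a separate construction:
\begin{itemize}
\item good-minimal and small-goods-only preprocessing;
\item random chore recipients who take their nonnegatively valued goods by serial dictatorship;
\item PS for the remaining agents;
\item a coupling argument for \exan EF.
\end{itemize}
Your plan has no counterpart for this regime.
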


As we will describe shortly in Section~\ref{sec:related}, a majority of prior fair division research on best-of-both-worlds fairness is concerned with only goods.
Moreover, with appropriate and straightforward adaptation, the \exan \EF{} and \expo \EFOne BoBW result for the goods setting continue to hold for the chores settings~\citep[see, pp.~1685 of][]{AzizFrSh24}.
However, showing \exan \EF{} is compatible with \expo \EFOne for mixed goods and chores becomes tricky and calls for novel conceptual ideas and mathematical techniques, which may be of independent interest.

In addition to our main result, in Appendix~\ref{sect:SD-BoBW}, we also study the setting where only ordinal preferences of agents are available, with the adapted fairness notions SD-EF and SD-\EFOne defined in terms of stochastic dominance.
We show that \exan SD-EF is compatible with \expo SD-EF2, but not SD-\EFOne.

In what follows, we first describe technical challenges, followed by our core technical innovation which helps tackle the challenges and finally a proof sketch for Theorem~\ref{thm:main}.

\subsubsection{Technical Challenges}
It is known that the Probabilistic Serial (PS) lottery and its variant, the recursive Probabilistic Serial lottery, can achieve both \exan EF and \expo \EFOne when the items are goods only.
In our setting with mixed goods and chores, both PS-lottery and recursive PS-lottery can be extended to handle the following two special cases:
\begin{enumerate}
\item Each item is valued non-negatively by at least one agent (we call such an item a \emph{subjective good}).

\item Each item is valued negatively by all agents (we call such an item an \emph{objective chore}).
\end{enumerate}
We will discuss these in detail in Section~\ref{sect:PS}.

Since every item is either a subjective good or an objective chore, a naive approach to solving the general case is to
independently run known best-of-both-worlds mechanisms (such as the Probabilistic Serial (PS) or recursive PS rule) for both the set of subjective goods and the set of objective chores.
However, this only yields ex-ante EF and ex-post EF2.\footnote{Loosely speaking, EF2 requires that any envy from one agent towards another agent can be eliminated by the hypothetical removal of two items.}

To improve the \expo guarantee to EF$1$, the lotteries for subjective goods and objective chores must be intricately correlated.
This is the main technical challenge for our problem.
To overcome this challenge, we introduce many mathematical techniques that have not been used in the fair division literature before.
Notably, we derive a novel probabilistic Hall-type matrix decomposition theorem.

\subsubsection{Core Technical Innovation: A Probabilistic Hall-Type Decomposition}
\label{sect:core_tech}

To resolve the technical challenge of correlating the lotteries for subjective goods and objective chores, we formulate and solve the following abstract mathematical problem in Theorem~\ref{thm:multi-T}.

\paragraph{The Probabilistic Hall-Type Decomposition Problem:}
Let $X \in [0, 1]^{n \times m}$ be a fractional bipartite matching matrix where each row sums to~$r/n$ (for some integer~$r < n$) and each column sums to at most~$1$.
Given a collection of disjoint sets $T_1, \dots, T_\ell \subseteq [n]$, does there exist a probability distribution~$\lambda$ over size-$r$ integral matchings that implements~$X$, such that for every set~$T_i$ and every subset~$J \subseteq T_i$, the set of matched rows~$K_\mu$ satisfies:
\begin{equation*}
\mathbb{P}_{\mu \sim \lambda}(K_{\mu} \cap T_i \neq \emptyset \text{ and } K_{\mu} \cap T_i \subseteq J) \le \frac{|J|}{|T_i|}?
\end{equation*}

This condition can be viewed as a probabilistic analog to Hall's Marriage Condition.
Intuitively, it prevents the matching from overly concentrating on any subset $J\subseteq T_i$.

Proving the existence of a single lottery~$\lambda$ that simultaneously satisfies this exponential family of constraints is highly non-trivial.
To overcome this, we introduce the following mathematical techniques that are largely distinct from the standard fair division toolkit.
We believe this mathematical framework will be of independent interest for tackling correlated decomposition problems in algorithmic mechanism design.

We begin by describing our techniques for handling the special case with a single set, i.e., with $\ell=1$, in Theorem~\ref{thm:fixed-T}.
In the analysis below, let $T$ be this single set.

\paragraph{Continuous Weighted Reformulation.}
We first lift the discrete subset constraints into a continuous domain.
Instead of reasoning about all subsets~$J$, we introduce a weight vector~$w \in \Delta(T)$ over the standard simplex and show that the subset constraints can be captured by the following expected-value inequality $$\mathbb{E}_{\mu\sim\lambda}[\varphi_w(K_\mu \cap T)] \le \frac{1}{|T|} \sum_{j \in T} w_j,$$ which must hold for all weight vectors~$w$.

\paragraph{Sion's Minimax Theorem.}
Even after lifting, finding a universal~$\lambda$ to satisfy the inequality for all weight vectors~$w$ simultaneously remains challenging.
This is equivalent to showing the minimax problem $\min_{\lambda} \max_{w} g(\lambda, w) \le 0$ for all~$w$ with
$$g(\lambda, w) \coloneqq \mathbb{E}_{\mu \sim \lambda}[\varphi_w(K_\mu \cap T)] - \frac{1}{|T|} \sum_{j \in T} w_j.$$
To address this, by invoking \emph{Sion's minimax theorem}, we swap the order of the optimization and obtain $\max_{w} \min_{\lambda} g(\lambda, w)$.
This quantifier swap is a crucial simplification.
Instead of finding a universal distribution~$\lambda$, it suffices to find a distribution~$\lambda_w$ tailored for each \emph{fixed} weight vector~$w$.

\paragraph{Biased Flow Networks.}
Finally, to construct~$\lambda_w$ for a fixed weight vector~$w$, we first sort the rows according to~$w$.
This sorting step is the \emph{only} point where the resulting distribution~$\lambda_w$ depends on the specific weight vector~$w$.
Based on this ordered structure, we build a specialized \emph{biased flow network} whose capacities are integral and encode the marginal constraints that each row sums to~$r/n$.
We route a feasible fractional flow through a sequence of chain nodes.
By applying the flow-integrality lemma, we decompose this flow into a probability distribution~$\lambda_w$ over integral matchings.
The structure of the network ensures that $g(\lambda_w, w) \le 0$.

\paragraph{Resolving the General Case with Multiple Sets.}
For multiple pairwise disjoint sets, the matchings are intrinsically coupled across different sets through a single probability distribution $\lambda$.
As a result, we cannot simply compute independent lotteries for each $T_i$.
To overcome this, we generalize our framework for the single-set case, which shares the same conceptual structure while involving more details that need to be carefully handled; full details are deferred to Appendix~\ref{append:proof-thm-multi-T}.

At a high level, we first expand the continuous weighted reformulation from a single simplex to a product space of $w = (w^1, \dots, w^\ell) \in \prod_{i=1}^\ell \Delta(T_i)$.
We aggregate the expected-value constraints $g^i(\lambda, w^i)$ for each $T_i$ into a global objective function $G(\lambda, w) = \sum_{i=1}^\ell g^i(\lambda, w^i)$, where $g^i$ corresponds to the constraint associated with $T_i$.
The summation preserves the desired properties of each $g^i$ to apply Sion's minimax theorem, which allows us to again fix an arbitrary choice of $w$ to find a tailored distribution $\lambda_w$.
In the final step, we exploit the pairwise disjointness of the sets by sorting the rows \emph{locally} within each $T_i$ according to $w^i$.
This enables us to construct a collection of \emph{parallel} chain-nodes gadgets, one for each set.
A distribution $\lambda_w$ is obtained through a similar analysis, which will simultaneously satisfy $g^i(\lambda_w, w^i)\le 0$ for each set.

\subsubsection{Proof Sketch for Theorem~\ref{thm:main}}

Our algorithmic framework accomplishes this in three parts: a bundling preprocessing phase, followed by two distinct algorithmic approaches depending on the number of remaining objective chores.

\paragraph{Bundling Preprocessing (Section~\ref{sect:preprocessing}).}
As mentioned before, items are partitioned into subjective goods and objective chores.
In the first step, we group items carefully such that each group is viewed as a single ``meta-item''.
At the end, we reduce the original instance to an instance with three crucial structural properties:
\begin{enumerate}
\item each subjective good has a non-empty \emph{interest set} of agents who value it non-negatively;

\item the interest sets are pairwise disjoint, meaning that each agent values at most one subjective good non-negatively (this implies no more than $n$ subjective goods); and

\item each subjective good is \emph{chore-maximal}, meaning that the addition of any objective chore renders its overall value strictly negative for all agents.
\end{enumerate}
We then divide our analysis based on the number of remaining objective chores.

\paragraph{The Hard Case: More than $n$ Objective Chores (Section~\ref{sect:hardcase}).}
When the number of objective chores strictly exceeds the number of agents~$n$, we explicitly define a target fractional allocation that is \exan EF.
Subjective goods are divided equally among the agents in their respective interest sets, while objective chores (padded with dummy items) are fractionally allocated using the recursive PS rule.

The primary technical hurdle is decomposing this combined fractional allocation into a lottery over integral allocations that strictly satisfy \expo EF$1$.
We introduce a synchronized rounding scheme: an agent can be allocated a subjective good only if she is simultaneously assigned an objective chore (rather than a dummy item) in the \emph{first round} of the recursive PS decomposition, or if no agent in their interest set receives a chore in that round.
Due to the chore-maximality property, the disutility of the assigned chore outweighs the value of the assigned subjective good in every agent's perspective.
This makes the \expo fairness analysis similar to the case of chores-only instances, and it ensures \expo EF$1$.

The remaining difficulty is to show that such a correlated decomposition exists while preserving the \exan marginal probability.
For this, we impose a probabilistic Hall-type condition on the support of the first-round objective chore matchings.
We prove that if this condition holds, then subjective goods can be successfully allocated to satisfy both \exan and \expo constraints via a max-flow min-cut argument.
Proving the existence of a first-round chore decomposition that satisfies this Hall-type condition is the most technical part of our paper, which is stated in Section~\ref{sect:core_tech}.

\paragraph{The Easy Case: At Most $n$ Objective Chores (Section~\ref{sect:easycase}).}
When there are at most $n$ objective chores, we apply a distinct randomized combinatorial algorithm.

We first execute a secondary preprocessing phase to ensure that the subjective goods are \emph{good-minimal}, which guarantees that any \expo EF1 condition satisfied by removing a bundled-subjective good is inherently satisfied by removing a subjective good from the original instance, and the instance satisfies a \emph{small-goods-only} property, i.e., the disutility of any objective chore strictly outweighs the total utility of all subjective goods an agent values non-negatively.
Next, we design a three-step randomized lottery:
\begin{enumerate}
\item We uniformly at random sample a subset of agents (equal to the number of chores) and assign exactly one objective chore to each.
\item We run a serial dictatorship among these selected agents to distribute any subjective goods they value non-negatively.
\item Finally, the remaining agents (who received no chores) divide the leftover subjective goods via a standard PS-lottery padded with dummy items.
\end{enumerate}

Ex-post EF$1$ follows naturally from the small-goods-only and good-minimality property.
Ex-ante EF is established by employing a delicate \emph{coupling argument} and pairing symmetric outcomes of the lottery, such as swapping the roles or selection orders of two agents.

\subsection{Related Work}
\label{sec:related}

The literature of fair division has been developed rapidly~\citep[see, e.g., the surveys of][]{AmanatidisAzBi23,GuoLiDe23,LiuLuSu24,NguyenRo23,Suksompong21,Suksompong25}.
We will present here the most relevant papers to our work.
We first summarize the development of fairly allocating mixed goods and chores, followed by the best-of-both-worlds fairness results.

\subsubsection*{Fair Division of Mixed Goods and Chores}

\noindent
\textbf{Mixed \emph{Indivisible} Goods and Chores.}
An \EFOne allocation always exists~\citep{AzizCaIg22,BhaskarSrVa21}.
The problem becomes trickier if we ask for both \EFOne and the economic efficiency notion \emph{Pareto optimality (PO)}, and it remains open to date.\footnote{For additive utilities, \EFOne and PO allocations always exist if we allocate purely goods~\citep{CaragiannisKuMo19} or purely chores~\citep{Mahara26}.}
By introducing other envy-freeness relaxations, \citet{BarmanHVSe25} and \citet{BarmanVe26} showed that their proposed envy-freeness relaxations are compatible with PO.

When weakening \EFOne to \emph{proportionality up to one item (PROP1)}, for additive utilities, a PROP1 and PO allocation can be computed in strongly polynomial time, even if agents have asymmetric weights~\citep{AzizMoSa20}.

The existence and computation of other fairness notions, such as \emph{envy-freeness up to any item (EFX)}, \emph{maximin share (MMS) fairness} and relaxations of \emph{equitability}, as well as their compatibility with PO, have also been studied~\citep[e.g.,][]{HosseiniSiVa23,HosseiniSe25,KulkarniMeTa21,KulkarniMeTa21-AAAI,LivanosMeMu22}.
A more detailed account of these works can be found in the overview of \citet[Section~4]{LiuLuSu24}.

\medskip
\noindent
\textbf{Mixed \emph{Divisible} Goods and Chores (i.e., Mixed Manna).}
\citet{BogomolnaiaMoSa17} initiated the study of dividing a mix of (homogeneous) \emph{divisible} goods and chores (i.e., mixed manna) and focused on an economic concept called \emph{competitive equilibrium}.
They established the existence of competitive equilibria.
Follow-up research has then devoted to studying the computation of an equilibrium~\citep{ChaudhuryGaMc23,GargMcHo23}.

\medskip
\noindent
\textbf{Mixed Indivisible Goods and Chores plus a Cake.}
Most recently, \citet{AzizLuMa26} studied a more general model in which a heterogeneous divisible cake and a set of mixed indivisible goods and chores are divided among agents who have additive utilities over the resources.
They investigated an envy-freeness relaxation called \emph{envy-freeness for mixed resources (\EFM)}, which combines \EF{} and \EFOne in a natural way.
\citeauthor{AzizLuMa26} showed that an \EFM allocation always exists in the aforementioned setting.

\subsubsection*{Best-of-Both-Worlds (BoBW) Fairness}

In the literature of fair division, BoBW fairness has been mostly studied for the goods setting, with only a few exceptions.

\citet{SunCh25} studied the mechanism design problem of indivisible-items allocation.
By restricting agents' utility functions, they proposed randomized mechanisms that are strategyproof in expectation as well as fair and efficient both ex ante and ex post, for chores setting and for the setting with mixed goods and chores, respectively.

The literature we surveyed below is mostly concerned with goods and we will make it clear when chores setting is studied.
For additive utilities, \exan \EF{} and \expo \EFOne are compatible~\citep{AzizFrSh24}.
Furthermore, \citeauthor{AzizFrSh24} showed several impossibility results if we want to achieve both fairness and economic efficiency; however, the compatibility between \exan \EF{} (or even proportionality), \expo \EFOne and \expo PO remains open to date.
Nevertheless, by restricting agents' utilities to be \emph{bi-valued}, \citet{BuLiLi24} showed that there always exists a randomized allocation that is \exan \EF{}, \expo \EFX and \expo \emph{fractionally Pareto optimal (fPO)}.\footnote{For \expo deterministic allocations, fractionally Pareto optimality (fPO) is a stronger notion than PO.}

In a similar vein, when restricted to \emph{lexicographic} preferences, \citet{KavithaPaVa25} showed the compatibility between \exan \EF{}, \expo \EFOne and \expo PO as well as the compatibility between \exan $\frac{9}{10}$-\EF{}, \expo \EFX and \expo PO.

When assuming that agents' utility functions are more general than additive utilities, \exan $\frac{1}{2}$-EF and \expo EFX-with-charity are compatible for \emph{monotone} utilities~\citep{KavithaPaVa25}.
For \emph{subadditive} utilities, we have the following two sets of compatibility results: \exan $\frac{1}{2}$-\EF{}, \expo $\frac{1}{2}$-\EFX and \expo \EFOne~\citep{FeldmanMaNa24}, as well as \exan $\frac{1}{2}$-proportionality and \expo EFX-with-bounded-charity are compatible~\citep{KavithaPaVa25}.

\citet{HoeferScVa24} and \citet{AzizGaMi23} studied the setting where agents have entitlements and showed that unlike the equal entitlement setting, weighted envy-freeness (WEF) is even incompatible with weak WEF1.
On the positive side, \exan WEF, \expo WEF$(1, 1)$, and \expo weighted proportionality up to one good (WPROP1) are compatible~\citep{HoeferScVa24}.
Analogous results are known for chores: \exan WEF is compatible either with \expo WEF$(1, 1)$~\citep{WuZhZh25} or with \expo WPROP1~\citep{HVNi24}.

Other fairness notions, such as proportionality, maximin share, and equitability, have also been explored to achieve best-of-both-worlds fair division of indivisible goods~\citep{AkramiGaSh24,AkramiMeSe23,BabaioffEzFe22,BhaskarHVSe26}.

\citet{BuLiLi24} studied a model concerning a mix of \emph{divisible and indivisible} goods.
For bi-valued utilities, \citeauthor{BuLiLi24} showed that \exan proportionality is compatible with \expo \EFM.

In addition to resource allocation, best-of-both-worlds fairness has also been investigated in the context of committee voting~\citep{AzizLuSu23,Peters25,SuzukiVo24}, participatory budgeting~\citep{AzizLuSu24}, etc.

\section{Preliminaries}
For any positive integer~$k$, let $[k] \coloneqq \{1, 2, \dots, k\}$.
Denote by $N = [n]$ the set of $n$ agents and $M = [m]$ the set of $m$ indivisible items.
Each agent $i\in N$ has a valuation function $v_i:\{0,1\}^M\rightarrow \mathbb{R}$, where $v_i(\emptyset)=0$.
Throughout the paper, we assume each agent's valuation function is \emph{additive}, that is, $v_i(S)=\sum_{h\in S}v_i(\{h\})$ for $S\subseteq M$.
For simplicity, we use $v_i(h)$ to denote $v_i(\{h\})$.

Under the setting with both goods and chores, we classify item $h$ as an \emph{objective good} if $v_i(h)\ge 0$ for all agents $i\in N$, and an \emph{objective chore} if $v_i(h)<0$ for all agents $i\in N$.
We classify item $h$ as a \emph{subjective good} if $v_i(h)\ge0$ for at least one agent $i\in N$.
Note that with a slight abuse of terminology, this definition of subjective goods includes all objective goods.
In the following part of our paper, we partition the item set $M$ into two disjoint sets $G$ and $Z$, where $G$ contains all the subjective goods and $Z$ contains all the objective chores.
We denote a subjective good by $g\in G$ and an objective chore by $c\in Z$.

\paragraph{Allocation.}
A \emph{fractional allocation} is specified by a non-negative $n \times m$ matrix $X = (X_{i h})_{i \in N, h \in M}$, where $X_{i h} \in [0, 1]$ denotes the fraction of item~$h$ being allocated to agent~$i$, and $\sum_{i \in N} X_{i h} = 1$ for each item~$h \in M$.
Each row $X_i = (X_{i h})_{h\in M}$ represents the fractional allocation of agent~$i$.
Each agent $i$'s utility for $X_i$ is naturally extended as $v_i(X_i)=\sum_{h\in M}X_{ih}\cdot v_i(h)$.

An \emph{integral allocation} is a fractional allocation $X = (X_{i h})_{i \in N, h \in M}$ with $X_{i h} \in \{0, 1\}$ for all~$i \in N$ and $h \in M$.
For simplicity, we denote an integral allocation by $A=(A_1,\ldots,A_n)$, where $A$ is a partition of $M$ and $A_i$ is the bundle assigned to agent $i$.

A \emph{randomized allocation}, also called a \emph{lottery} of integral allocations, is a probability distribution over integral allocations $\{(p_k, A^k)\}_{k \in [K]}$, where, for every $k \in [K]$, $A^k$ is an integral allocation implemented with probability $p_k \in [0, 1]$, and $\sum_{k \in [K]} p_k = 1$.
We say that randomized allocation $\{(p_k, A^k)\}_{k \in [K]}$ implements a fractional allocation~$X$, or $\{(p_k, A^k)\}_{k \in [K]}$ is a decomposition of $X$, if for each agent $i\in N$ and each item $h\in M$, $X_{ih}$ is the marginal probability that $h$ is allocated to agent $i$, that is, $X_{ih} = \sum_{k \in [K]} p_k \cdot \mathbb{I}[h \in A^k_i]$.

A randomized allocation is said to satisfy some \emph{\exan} property if the fractional allocation it implements satisfies this property, and is said to satisfy some \emph{\expo} property if every integral allocation in its support satisfies this property.

\paragraph{Fairness Notion.}
One of the most natural fairness notions is envy-freeness, which requires that no agent envies any other agent.
\begin{definition}
    A fractional allocation $X$ satisfies \emph{envy-freeness (EF)}, if $v_i(X_i)\ge v_i(X_j)$ for any pair of agents $i$ and $j$.
\end{definition}

For integral allocations, envy-freeness cannot always be guaranteed.
We therefore consider its standard relaxation, which allows envy to be eliminated by the removal of one item.

\begin{definition}
    An integral allocation $A$ satisfies \emph{envy-freeness up to one item (EF$1$)}, if for any pair of agents $i$ and $j$, either of the following holds:
    \begin{itemize}
        \item $i$ does not envy $j$: $v_i(A_i)\geq v_i(A_j)$, or
        \item there exists $h\in A_i\cup A_j$ such that $v_i(A_i\setminus\{h\})\geq v_i(A_j\setminus \{h\})$.
    \end{itemize}
\end{definition}
Alternatively, an integral allocation $A$ is EF1 if, for any pair of agents $i$ and $j$, either of the following holds:
\begin{itemize}
    \item $v_i(A_i) \geq v_i(A_j \setminus \{h\}) $ for some $h\in A_j$ with $v_i(h)\geq 0$, or
    \item $v_i(A_i \setminus \{h\}) \geq v_i(A_j) $ for some $h\in A_i$ with $v_i(h)< 0$.
\end{itemize}

\subsection{Probabilistic Serial Rule}
\label{sect:PS}
The probabilistic serial (PS) rule, introduced by~\citet{BogomolnaiaMo01}, gives randomized allocations that are \exan EF.
Its recursive variant was applied by~\citet{freeman2020best} to show the compatibility of \exan EF and \expo EF1.
\citet{aziz2020simultaneously} later showed that the original PS rule can also be used to guarantee both \exan EF and \expo EF1.\footnote{The two papers \citep{freeman2020best} and \citep{aziz2020simultaneously} have been combined to a single journal paper \citep{AzizFrSh24}.}

\paragraph{PS rule.}
Below, we describe the probabilistic serial (PS) rule by~\citet{BogomolnaiaMo01} in the context of \emph{divisible} objective goods.
Starting at time $t=0$, each agent simultaneously ``eats'' their favorite available item at a unit speed.
If multiple agents prefer the same item, it is consumed collectively at a rate equal to the number of agents eating it.
Once an item is fully depleted, the agents consuming it immediately move to their next most-preferred available item.
This process continues until all items are consumed, by which time $t=m/n$.

It is easy to see that the fractional allocation output by the PS rule is envy-free.

\paragraph{PS-lottery.}
The above-mentioned PS rule yields a fractional allocation that is envy-free.
When the objective goods are indivisible, \citet{AzizFrSh24} show that this fractional allocation can be decomposed into a probability distribution over integral EF$1$ allocations, and thus obtain a best-of-both-worlds fairness for objective goods.
To see this, we first assume without loss of generality that $m=nT$ for $T\in \mathbb{Z}_{\ge 0}$ (so that the PS rule ends in $T$ units of time), for otherwise dummy items with value $0$ to each agent are added to $M$.
For each $i\in N$, $h\in M$, and $t\in[T]$, let $x_{i,t,h}$ be the fraction of item $h$ eaten by agent $i$ in the time interval $[t-1,t]$ under the PS rule and $M_{i,t}=\{h:x_{i,t,h}>0\}$ be the set of items (possibly partially) consumed by agent $i$ in the time interval $[t-1,t]$.
Consider the polytope in $\mathbb{R}^{nTm}=\mathbb{R}^{m^2}$ defined by the following two sets of constraints:
\begin{equation}\label{eqn:PS1}
    \forall i\in N, t\in[T]: \sum_{h\in M_{i,t}}x_{i,t,h}=1
\end{equation}
\begin{equation}\label{eqn:PS2}
    \forall h\in M: \sum_{i=1}^n\sum_{t=1}^Tx_{i,t,h}=1
\end{equation}
The fractional allocation output by the PS rule satisfies the two sets of constraints above, and is thus in the polytope.
Moreover, the coefficient matrix of this polytope is totally unimodular, so each vertex corresponds to an integral allocation $A=(A_1,\ldots,A_n)$ where each agent $i$ gets exactly one item from $M_{i,t}$ for each $t\in[T]$ (due to constraint (\ref{eqn:PS1}) and the integrality of the variables).
It then suffices to show that each vertex gives an EF$1$ allocation.
For each vertex, we call the unique item in $M_{i,t}$ allocated to agent $i$ \emph{the item allocated to agent $i$ at the $t$-th round}, and name this item $h_{i,t}$.
Then, for any two agents $i,j$ and any $t_1<t_2$, we have $v_i(h_{i,t_1})\ge v_i(h_{j,t_2})$ since agent $i$ values each item in $M_{i,t_1}$ weakly higher than each item in $M_{j,t_2}$ due to the property of PS (that items with higher values are consumed first).
Thus, agent $i$ will not envy agent $j$ if $h_{j,1}$ were removed from agent $j$'s bundle.

\paragraph{Recursive PS-lottery.}
Prior to \citet{AzizFrSh24}, \citet{freeman2020best} introduces the \emph{recursive PS-lottery}, which also admits an interpretation as a simultaneously ``eating'' process at a unit speed.
In contrast to the PS-lottery, recursive PS proceeds for $T$ discrete rounds (again, we assume $T\in\mathbb{Z}_{\geq0}$).
In each round, each agent will consume their most-preferred available item \emph{for one unit of time}, resulting in a partial fractional allocation.
This partial fractional allocation will be immediately decomposed into a lottery of partial integral allocations where each agent receives exactly one item from the set of items she has (possibly partially) consumed in this round.
When multiple lotteries exist, we may choose any of them.
Under each partial integral allocation, the process is then recursively executed for another unit of time among the unallocated items.
The process terminates after all items are allocated, i.e., after $T$ rounds.
In contrast to the PS-lottery where we first complete the PS rule and then decompose the obtained fractional allocation, the recursive PS-lottery alternately performs the eating (PS for one unit of time) and the decomposition steps in every unit of time.

The recursive PS-lottery is also \exan EF and \expo EF$1$.
It is \exan EF since EF holds for the partial fractional allocation in each time unit.
By the nature of the recursive PS-lottery, for each agent, items with higher values are consumed first.
Therefore, the \expo EF$1$ property holds for the same reason as it is in PS-lottery.

For both PS-lottery and recursive PS-lottery, we have shown the following proposition (which will be used later) for each realized integral allocation.
\begin{proposition}[\citet{AzizFrSh24}, \citet{freeman2020best}]
\label{prop:ps-lottery}
    Fix an integral allocation in the PS-lottery or the recursive PS-lottery.
    Let the item received by agent $i$ in round $t$ be $h_{i,t}$.
    Then, for any two agents $i,j$ and any two rounds $t_1<t_2$, we have $v_i(h_{i,t_1})\ge v_i(h_{j,t_2})$.
\end{proposition}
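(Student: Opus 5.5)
The plan is to derive the statement directly from the defining property of the eating process, handling the PS-lottery and the recursive PS-lottery separately. In both cases the conclusion for the realized allocation follows because the item $h_{i,t}$ is, by construction, one of the items in the set $M_{i,t}$ that agent $i$ was (partially) eating during the time interval $[t-1,t]$. So it suffices to show that every item in $M_{i,t_1}$ is weakly preferred by agent $i$ to every item in $M_{j,t_2}$ whenever $t_1<t_2$.

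\textbf{Step 1 (PS-lottery).} In the PS rule on divisible objective goods, agent $i$ always eats her favourite item among those still available. Fix $h\in M_{i,t_1}$ and $h'\in M_{j,t_2}$. Agent $j$ eats a positive amount of $h'$ at some time $s>t_2-1\ge t_1$, so $h'$ is still available at time $s$. Agent $i$ eats $h$ at some time $s'\le t_1$, and at that moment $h'$ is also available, because items never reappear once depleted and $h'$ survives until time $s>s'$. Since $i$ chose $h$ over the available $h'$, we get $v_i(h)\ge v_i(h')$.

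There is one subtlety: a boundary case where $s'$ equals $t_1$ and $s$ is close to $t_2-1=t_1$. Using open intervals of positive eating rate handles this, since positivity of $x_{i,t,h}$ gives an interval of positive length. Applying this with $h=h_{i,t_1}$ and $h'=h_{j,t_2}$, which lie in the polytope's support sets by constraint (\ref{eqn:PS1}) and integrality, gives the claim.

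\textbf{Step 2 (recursive PS-lottery).} Here the realized allocation comes from a sequence of rounds. In round $t$, conditioned on the partial integral allocation of rounds $1,\dots,t-1$, agents run PS for one unit of time on the remaining items, and $h_{i,t}$ is an item $i$ consumed in that round. Take $t_1<t_2$. The item $h_{j,t_2}$ is unallocated at the start of round $t_2$, hence also at the start of round $t_1$ and throughout round $t_1$'s eating phase, since the eating in a round does not remove items before the decomposition.

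Agent $i$ in round $t_1$ eats only items that are her favourite among the available ones at that moment. At every moment of round $t_1$, $h_{j,t_2}$ is available: it could only be depleted within round $t_1$ if it were fully eaten, but then it would be allocated in round $t_1$, contradicting its allocation in round $t_2$. Hence $v_i(h_{i,t_1})\ge v_i(h_{j,t_2})$. The one point needing care is that "fully depleted in the fractional round" implies "allocated in that round", which holds because each item's marginals within a round sum to at most one and the decomposition respects them.

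I expect no real obstacle. The only delicate points are this availability argument within a round for the recursive variant and the measure-zero timing boundaries in the PS-lottery case. Both are resolved by using positivity of consumption to pick interior times.
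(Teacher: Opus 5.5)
Your proof is correct and follows the same idea as the paper, which simply observes that PS (and each round of recursive PS) eats higher-valued items first, so every item in $M_{i,t_1}$ is weakly preferred by $i$ to every item in $M_{j,t_2}$. Your availability arguments make explicit what the paper leaves implicit: the interior eating times handle PS, and for recursive PS you note that an item allocated in round $t_2$ cannot have been depleted in round $t_1$, since it would then have been allocated there with probability one.
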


\paragraph{Extensions beyond objective goods.}
Both of the above rules can be extended to the setting with only subjective goods or with only objective chores to get \exan EF and \expo EF$1$ guarantee.
In an instance that only contains subjective goods where each item is non-negatively valued by at least one agent, we can apply either rule after adding a sufficiently large number of dummy items, each of which has value $0$ to each agent, to $M$, so that each agent will not consume any item with negative value.
The \exan EF property holds for the same reason, and each integral allocation is \expo EF$1$ as envy from agent $i$ to agent $j$ will be eliminated by removing the item received in the first round in the integral allocation by $j$.

In an instance that only contains objective chores, applying either rule results in an \exan EF lottery for the same reason, and \expo EF$1$ is satisfied as envy from agent $i$ to $j$ will be eliminated by removing the item received in the last round by $i$.

For our setting with both subjective goods $G$ and objective chores $Z$, each rule satisfies \exan EF and \expo EF$2$.
Here, EF$2$ is a weaker notion than EF$1$, which allows envy to be eliminated by the removal of up to two items.
In particular, we can execute PS or recursive PS on $G$ and $Z$ separately as mentioned above, and then combine the lotteries.
Denote the fractional allocations implemented on $G$ and $Z$ as $X^G$ and $X^Z$ respectively, then the combined fractional allocation is naturally defined by setting $X_{i h} = X^G_{i h}$ for $h \in G$ and $X_{i h} = X^Z_{i h}$ for $h \in Z$.
The integral allocation can be arbitrarily combined while maintaining the marginal probability (e.g., taking their marginal product).
As both lotteries for $G$ and $Z$ are \exan EF and \expo EF$1$, the combined lottery is therefore \exan EF and \expo EF$2$.

\section{Bundling Preprocessing}
\label{sect:preprocessing}
Given an instance with mixed goods and chores $G\cup Z$, we first execute a bundling preprocessing phase to reduce the instance into a more desirable one.
This is done by first iteratively applying update rule (i), followed by update rule (ii), until no update is available.
\begin{itemize}
    \item[(i)] While there exists an agent $i\in N$ who values more than one subjective good in $G$ non-negatively, let $G_i\subseteq G$ be the set such that $G_i=\{g\in G:v_i(g)\ge 0\}$.
    We consider the items in $G_i$ as a single packed subjective good $g'$, that is, we remove $G_i$ from $G$ and add $g'$ to $G$;

    \item[(ii)] While $Z\neq\emptyset$, if there exists an agent $i\in N$, a subjective good $g\in G$ (possibly a packed item resulting from update rule (i)), and an objective chore $c\in Z$ such that $v_i(\{g,c\})\ge 0$, we consider $\{g,c\}$ as a single packed subjective good $g'$, that is, we remove $g$ and $c$ respectively from $G$ and $Z$, and add $g'$ to $G$.
\end{itemize}

In the resulting instance, we have a set $\{M_1,\ldots,M_\ell\}$ of subjective goods, and we still use $G$ to denote this set.

The set of objective chores, still denoted by $Z$, is a subset of that in the original instance.
Assume that $|G\cup Z|=m$.
For each subjective good $M_j$, define its \emph{interest set} $T_j$ as the set of agents who value it non-negatively, that is, $T_j=\{i\in N:v_i(M_j)\ge 0\}$.
It is straightforward to see that the new instance satisfies Proposition~\ref{prop:bundling}.

\begin{proposition}\label{prop:bundling}
    The bundling preprocessing phase outputs a set of subjective goods $G=\{M_1,\ldots,M_\ell\}$ and a set of objective chores $Z$, which have the following properties:
    \begin{enumerate}
        \item for any subjective good $M_j\in G$, its interest set is non-empty, i.e., $T_j\neq\emptyset$;
        \item each agent $i\in N$ non-negatively values at most one subjective good, i.e., for any pair of subjective goods $M_{j_1},M_{j_2}\in G$, their interest sets are disjoint, i.e., $T_{j_1}\cap T_{j_2}=\emptyset$;
        \item each subjective good $M_j\in G$ is \emph{chore-maximal}, that is, $v_i(M_j\cup \{c\})<0$ for each agent $i\in N$ and $c\in Z$.
    \end{enumerate}
\end{proposition}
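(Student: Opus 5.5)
We prove the three properties by analyzing the invariants maintained by update rules (i) and (ii) and the conditions under which the procedure halts. Two preliminary facts are needed. First, the procedure terminates: rule (i) strictly decreases $|G|$, since it merges at least two items into one. Rule (ii) strictly decreases $|Z|$ and leaves $|G|$ unchanged. The potential $|G|+|Z|$ therefore strictly decreases with every update, so only finitely many updates occur. Second, every packed item produced by either rule is again a subjective good, so the partition into $G$ and $Z$ remains well defined. For rule (i), agent $i$ satisfies $v_i(g')=\sum_{g\in G_i}v_i(g)\ge 0$ by additivity. For rule (ii), the agent $i$ witnessing the update has $v_i(\{g,c\})\ge 0$. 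The objective chores in $Z$ are never modified, so each one still satisfies $v_i(c)<0$ for all $i$.

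\emph{Property 1.} A bundle $M_j$ is in $G$ only because it is a subjective good. This holds either in the original instance or through the witness exhibited above. So some agent values $M_j$ non-negatively, and $T_j\neq\emptyset$.

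\emph{Property 2.} Suppose some agent $i$ lies in $T_{j_1}\cap T_{j_2}$ with $j_1\neq j_2$. Then $i$ non-negatively values at least two members of $G$, so rule (i) is still applicable. This contradicts termination.

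\emph{Property 3.} Suppose some agent $i$ and chore $c\in Z$ satisfy $v_i(M_j\cup\{c\})\ge 0$. Then $Z\neq\emptyset$, and rule (ii) applies to the triple $(i,M_j,c)$, again contradicting termination.

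The one point that needs care is the precise meaning of ``no update is available''. The procedure stops only when neither rule applies. An application of rule (ii) can create a bundle that a new agent values non-negatively, which can re-enable rule (i), so the two rules must be iterated until neither applies. We therefore read the phase as repeating (i) and then (ii) until both are simultaneously inapplicable. Under this reading, Properties 2 and 3 hold simultaneously at termination, and termination is guaranteed by the potential argument above.
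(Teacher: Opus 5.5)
Your proof is correct and follows the paper's argument. Property 1 is an invariant of both rules, and Properties 2 and 3 hold because otherwise rule (i) or rule (ii), respectively, would still be applicable; your explicit termination potential $|G|+|Z|$ is a useful addition.

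One correction to your final paragraph: rule (ii) can never re-enable rule (i). Since $c$ is an objective chore, $v_a(g\cup\{c\})=v_a(g)+v_a(c)<v_a(g)$ for every agent $a$. So the interest set of the new bundle is contained in that of $g$, and no agent's number of non-negatively valued goods increases. Hence the sequential reading (exhaust (i), then exhaust (ii)) also yields all three properties, and your more conservative reading is unnecessary, though harmless.
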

\begin{proof}
    Property 1 is invariant under each of the two update rules. For Property 2, if an agent values two subjective goods non-negatively, we should continue to apply rule (i). If Property 3 is violated, we should continue to apply rule (ii).
\end{proof}

In the following, when given an instance with mixed goods and chores, we assume that it has been preprocessed as described above.
We will discuss two cases:
\begin{itemize}
    \item The hard case: there are more than $n$ objective chores after preprocessing (Section~\ref{sect:hardcase});
    \item The easy case: there are at most $n$ objective chores after preprocessing (Section~\ref{sect:easycase}).
\end{itemize}

\section{Hard Case: More than \texorpdfstring{$n$}{n} Objective Chores}
\label{sect:hardcase}
In this section, we assume there are more than $n$ objective chores after the preprocessing phase described in Section~\ref{sect:preprocessing}.

To construct a randomized allocation that is \exan EF and \expo EF$1$, we first define the target fractional allocation.
We separately consider the fractional allocation for subjective goods $G$ and objective chores $Z$, denoted by $X^G$ and $X^Z$ respectively.
Recall that the interest set of a subjective good $M_j\in G$ is defined as $T_j=\{i\in N:v_i(M_j)\ge 0\}$.
\begin{itemize}
    \item[$X^G$:] For each subjective good $M_j\in G$, let $M_j$ be divided among $T_j$ equally.
    Equivalently, each agent in $T_j$ receives a fraction of $\frac{1}{|T_j|}$ of $M_j$, and each agent outside $T_j$ receives no fraction of $M_j$.
    \item[$X^Z$:] Suppose the set of the objective chores has size $|Z|=kn+r$ for some integers $k\ge 1$ and $0\le r< n$.
    We introduce a set $D$ of $n-r$ dummy items where $v_i(d)=0$ for each agent $i\in N$ and $d\in D$.
    We then apply the recursive PS procedure on the combined set $Z\cup D$ and obtain a fractional allocation $X^{Z\cup D}$ where each agent receives a total fraction of $k+1$ items.
    Finally, $X^Z$ is obtained by removing the fractional allocation of the dummy items from $X^{Z\cup D}$.
\end{itemize}

The following lemma shows that the fractional allocation satisfies \exan envy-freeness.

\begin{lemma}\label{lem:exan-ef}
    A randomized allocation is \exan EF if it implements the fractional allocation defined above.
\end{lemma}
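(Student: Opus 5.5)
By definition, a randomized allocation is \exan EF exactly when the fractional allocation it implements is EF. So it suffices to show that $X=(X^G,X^Z)$ satisfies $v_i(X_i)\ge v_i(X_j)$ for all $i,j$. By additivity, $v_i(X_i)-v_i(X_j)$ splits into a goods part $v_i(X^G_i)-v_i(X^G_j)$ and a chores part $v_i(X^Z_i)-v_i(X^Z_j)$. The plan is to show that each part is non-negative separately.

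\textbf{Chores part.} Run the recursive PS procedure on $Z\cup D$ and let $X^{Z\cup D}$ be the resulting fractional allocation. In every round, each agent eats for one unit of time the item she most prefers among those still available. The standard PS envy-freeness argument applies round by round. At every moment of a round, the item agent $i$ is eating is weakly preferred by $i$ to whatever agent $j$ is eating at that moment, since $j$'s item is still available. Integrating over the unit of time and taking expectation over the decompositions of earlier rounds gives $v_i(X^{Z\cup D}_i)\ge v_i(X^{Z\cup D}_j)$. This is exactly the \exan EF property of the recursive PS-lottery recalled in Section~\ref{sect:PS}, which I take as given. Dummy items have value $0$ to every agent, so removing them changes no agent's valuation of any row. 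Hence $v_i(X^Z_i)=v_i(X^{Z\cup D}_i)\ge v_i(X^{Z\cup D}_j)=v_i(X^Z_j)$.

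\textbf{Goods part.} By Proposition~\ref{prop:bundling}, the interest sets $T_1,\dots,T_\ell$ are pairwise disjoint. So each agent $i$ belongs to at most one interest set, and $v_i(M_{j'})<0$ for every good $M_{j'}$ with $i\notin T_{j'}$.
\begin{itemize}
\item If $i\notin\bigcup_{j'} T_{j'}$, then $X^G_i=0$, so $v_i(X^G_i)=0$. Every good has negative value to $i$, so $v_i(X^G_j)\le 0$.
\item If $i\in T_{j'}$, then $v_i(X^G_i)=v_i(M_{j'})/|T_{j'}|\ge 0$. For agent $j$'s row, there are two sub-cases. If $j\in T_{j'}$, then $j$ holds the same fraction of $M_{j'}$ as $i$ and nothing else, so the two values are equal. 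Otherwise $j$ holds only fractions of goods that $i$ values negatively (or nothing), so $v_i(X^G_j)\le 0\le v_i(X^G_i)$.
\end{itemize}

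Adding the two parts gives EF of $X$. There is no real obstacle here. The only point needing care is the chores part, namely confirming that the dummy padding and the randomness of the recursive decomposition do not break the per-round envy-freeness. Both are handled by linearity of expectation and the zero value of dummy items.
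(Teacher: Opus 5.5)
Your proof is correct and follows the same route as the paper. You split $v_i(X_i)-v_i(X_j)$ into a $Z$-part, handled by ex-ante EF of recursive PS (the zero-valued dummies drop out), and a $G$-part, handled by the pairwise disjointness of the interest sets; you simply spell out the case analysis that the paper's two-sentence proof leaves implicit.
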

\begin{proof}
    For any two agents $i$ and $j$, the recursive PS guarantees $v_i(X^Z_i)\ge v_i(X^Z_j)$.
    Moreover, $v_i(X^G_i)=v_i(X^G_j)$ if $i$ and $j$ belong to the same interest set, and $v_i(X^G_i)\ge v_i(X^G_j)$ otherwise.
\end{proof}

We then need to define a randomized allocation rule that implements the fractional allocation rule described above.
This randomized allocation rule consists of two parts \emph{that are dependent}: a lottery over the fractional allocation $X^G$ of $G$ and a lottery over the fractional allocation $X^Z$ of $Z$.
\begin{itemize}
    \item Lottery over $X^Z$: To obtain \expo EF$1$ guarantee, for objective chores, we restrict our attention to the integral allocations obtained by decomposing $X^{Z\cup D}$ according to the recursive PS rule, and removing all the dummy items from each integral allocation.
    Noticing that during the first unit of time of the recursive PS, all agents will first consume all $n-r$ dummy items, and then a total fraction of $r$ objective chores.
    Therefore, in any integral allocation corresponding to this first unit of time, exactly $r$ objective chores will be assigned to the agents.
    We refer to this allocation of the first $r$ objective chores as the first-round allocation.
    We will carefully design the allocation rule \emph{for the first round}, while the allocation rule for the remaining rounds can be arbitrarily (as long as it follows the recursive PS rule).
    \item Lottery over $X^G$: For each subjective good $M_j\in G$, we will carefully design an allocation rule such that each agent in $T_j$ receives $M_j$ with probability $\frac1{|T_j|}$. This allocation rule is dependent on the first-round allocation rule for the lottery over $X^Z$.
\end{itemize}

Lemma~\ref{lem:expo-ef1} below gives a sufficient condition on the first-round recursive PS decomposition of $X^Z$ and the lottery over $X^G$ such that \expo EF$1$ is guaranteed.

\begin{lemma}\label{lem:expo-ef1}
    A randomized allocation is \expo EF$1$ if for any integral allocation in its support, the objective chores are allocated according to the above-mentioned decomposition; additionally, each subjective good $M_j\in G$ is allocated to an agent $i\in T_j$ only if either of the following holds:
    \begin{itemize}
        \item agent $i$ is assigned an objective chore from $Z$ (rather than a dummy item from $D$) in the first-round allocation, or
        \item no agent in $T_j$ is assigned any objective chore from $Z$ in the first-round allocation.
    \end{itemize}
\end{lemma}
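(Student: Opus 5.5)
Fix an integral allocation $A$ in the support and two agents $i,j$. Every agent holds exactly one item from each round $t\in[k+1]$ of the recursive PS decomposition of $X^{Z\cup D}$; call it $h_{i,t}$, where round $1$ may be a dummy. Every agent also holds at most one subjective good. By Proposition~\ref{prop:bundling}, agent $i$ values at most one subjective good non-negatively, namely the $M_a$ with $i\in T_a$, and values every other subjective good negatively. By Proposition~\ref{prop:ps-lottery}, $v_i(h_{i,t_1})\ge v_i(h_{j,t_2})$ whenever $t_1<t_2$. I will use the characterization of EF$1$ that removes either a chore from $A_i$ or a good from $A_j$. Throughout, goods in $A_i$ that $i$ dislikes and goods in $A_j$ that $i$ dislikes only help $i$, so they can be discarded. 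The only problematic item is the good $M_a$ if it lies in $A_j$.

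\textbf{Case 1: $M_a\notin A_j$.} Then $v_i(A_j)\le v_i(\{h_{j,t}\}_{t})$. Since $M_a$ is valued non-negatively by $i$ and other goods add to $A_i$ only if they are liked, we have $v_i(A_i)\ge v_i(\{h_{i,t}\}_t)+\min(0,\text{disliked goods})$. This is not quite right, so I refine it: a disliked good in $A_i$ is actually impossible for $i$, because $M_b$ goes only to agents in $T_b$. So $A_i\cap G\subseteq\{M_a\}$. Hence $v_i(A_i)\ge \sum_t v_i(h_{i,t})$. Removing $i$'s last-round chore $h_{i,k+1}$ (an objective chore, since later rounds have no dummies) and pairing $h_{i,t}$ with $h_{j,t+1}$ gives
\[
v_i(A_i\setminus\{h_{i,k+1}\})\ge\sum_{t\le k}v_i(h_{j,t+1})\ge v_i(A_j),
\]
using $v_i(h_{j,1})\le 0$ and $v_i(\text{goods in }A_j)<0$.

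\textbf{Case 2: $M_a\in A_j$, so $j\in T_a$.} By the hypothesis, either (a) $h_{j,1}$ is an objective chore, or (b) no agent of $T_a$, in particular $i$, got a chore in round 1. In subcase (a), chore-maximality (Property~3) gives $v_i(M_a\cup\{h_{j,1}\})<0$. Then the same pairing as in Case 1, after removing $h_{i,k+1}$, works: we bound $v_i(A_j)< \sum_{t\ge2}v_i(h_{j,t})$, and $i$'s own possible good only helps. In subcase (b), $h_{i,1}$ is a dummy, so $v_i(h_{i,1})=0$. Remove the good $M_a$ from $A_j$. Then compare $\sum_{t\ge2}v_i(h_{i,t})$ against $\sum_{t}v_i(h_{j,t})$: pair $h_{i,t}$ with $h_{j,t+1}$ for $2\le t\le k$, and bound the leftover as $v_i(h_{i,k+1})\ \text{vs}\ v_i(h_{j,1})+v_i(h_{j,2})$.

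The main obstacle is this last comparison. To handle it, I would instead pair $h_{i,t}$ with $h_{j,t}$ for $t\ge2$, using the round-wise symmetry of PS: within a round, items are not ordered. So I need a different argument. Round 1 dummies are eaten first because they are valued $0>$ any chore. Hence if $i$ eats a dummy in round 1, the dummies are its top choice. Any item $j$ receives in round $t$ is available to $i$ in round $t-1$, which gives $v_i(h_{i,t})\ge v_i(h_{j,t+1})$ as well as $v_i(h_{i,1})=0\ge v_i(h_{j,1})$. Then
\[
v_i(A_i)\ge\sum_{t}v_i(h_{i,t})\ge v_i(h_{i,k+1})+\sum_{t\le k} v_i(h_{j,t+1}).
\]
After also removing $h_{i,k+1}$ from $A_i$, this yields $v_i(A_i\setminus\{h_{i,k+1}\})\ge v_i(A_j\setminus\{M_a\})$. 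That removes two items, so in subcase (b) I must exploit $v_i(h_{i,1})=0\ge v_i(h_{j,1})$. Pair $h_{i,1}$ with $h_{j,1}$ and $h_{i,t}$ with $h_{j,t}$ for $t\ge2$, which is valid within the same round only via the PS envy-free-per-round guarantee in expectation and not ex post. If this fails, the fallback is to argue that round-$t$ items of $j$ are weakly worse for $i$ than round-$t$ items of $i$ under recursive PS whenever $i$ ate only dummies in round 1. This is the delicate step to check carefully.
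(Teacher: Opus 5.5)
Your Case~1 and subcase~(a) of Case~2 are correct and match the paper's argument. In both you remove $h_{i,k+1}$ and pair $h_{i,t}$ with $h_{j,t+1}$ via Proposition~\ref{prop:ps-lottery}; in~(a) you also absorb $M_a$ using chore-maximality of $M_a\cup\{h_{j,1}\}$.

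The proof has a gap in subcase~(b), which you leave open, and none of the routes you sketch there can work. Removing $M_a$ from $A_j$ forces you to compare an item of $i$ with an item that $j$ received in the same or an earlier round. Examples are $v_i(h_{i,k+1})\ge v_i(h_{j,2})$ or $\sum_{t\ge 2}v_i(h_{i,t})\ge \sum_{t\ge 2}v_i(h_{j,t})$. Proposition~\ref{prop:ps-lottery} only bounds $i$'s items against $j$'s items from strictly later rounds, and round-wise envy-freeness holds only in expectation. Your fallback claim, that $j$'s round-$t$ item is weakly worse for $i$ than $i$'s own round-$t$ item whenever $i$ ate only dummies in round~1, is false in general. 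Each round $t\ge 2$ is a fresh PS step on the remaining items. Its decomposition may hand $j$ an item that $i$ also ate in that round and strictly prefers to the one $i$ receives, regardless of round~1.

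The missing observation is about $j$'s round-1 item. In subcase~(b), agent $j\in T_a$ also received a dummy in round~1. All $n-r$ dummies are used up in round~1; you yourself noted that later rounds contain no dummies. Hence $h_{j,2}$ is an objective chore, and chore-maximality applies to $M_a\cup\{h_{j,2}\}$.

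Meanwhile, the dummy $h_{i,1}$ has value $0$, so $i$'s first $k$ items contain only $k-1$ chores. This lets you shift the pairing by one more round while still removing $h_{i,k+1}$ rather than $M_a$. Note that $A_i$ holds no subjective good here, since $i$ could only hold $M_a$:
\[
v_i(A_i\setminus\{h_{i,k+1}\})=\sum_{t=2}^{k}v_i(h_{i,t})\ \ge\ \sum_{t=3}^{k+1}v_i(h_{j,t})\ \ge\ v_i(h_{j,1})+v_i(M_a\cup\{h_{j,2}\})+\sum_{t=3}^{k+1}v_i(h_{j,t})=v_i(A_j).
\]
The first inequality is Proposition~\ref{prop:ps-lottery} applied to rounds $t<t+1$. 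The second uses $v_i(h_{j,1})=0$ and $v_i(M_a\cup\{h_{j,2}\})<0$. This is exactly how the paper closes the case, and it also covers $k=1$, where both sums are empty.
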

\begin{proof}
    We consider the integral allocation $(A_1,\ldots,A_n)$ before removing the dummy items.
    EF$1$ of these allocations implies EF$1$ after the dummy items are removed.
    In each such integral allocation, each agent receives $k+1$ items in $Z\cup D$.
    For each $i\in N$ and $t\in[k+1]$, let $c_{i,t}$ be the corresponding chore of agent $i$ received in the $t$-th round during the recursive PS.
    Due to Proposition~\ref{prop:ps-lottery} (which also holds for recursive PS), we have $0\geq v_i(c_{i,t_1})\ge v_i(c_{j,t_2})$ for any two agents $i$ and $j$ and any two rounds $t_1< t_2$.

    Consider any pair of agents $i$ and $j$.
    If agent $j$ receives no subjective good, or agents $i$ and $j$ belong to different interest sets, EF$1$ is satisfied due to
    $$v_i\left(A_i\setminus\{c_{i,k+1}\}\right)\ge \sum_{t=1}^k v_i(c_{i,t})\ge \sum_{t=2}^{k+1} v_i(c_{j,t})\ge v_i(A_j).$$
    Otherwise, assume $i$ and $j$ belong to the same interest set $T_o$ and the corresponding subjective good $M_o$ is allocated to agent $j$.
    If $c_{j,1}$ is an objective chore, we have
    $$v_i\left(A_i\setminus\{c_{i,k+1}\}\right)= \sum_{t=1}^k v_i(c_{i,t})\ge \sum_{t=2}^{k+1} v_i(c_{j,t})\ge v_i(M_o\cup\{c_{j,1}\})+ \sum_{t=2}^{k+1} v_i(c_{j,t})= v_i(A_j),$$
    where the second inequality holds due to chore-maximality of each subjective good after bundling.
    If $c_{j,1}$ is a dummy item, according to our constraint, we know that $c_{i,1}$ is also a dummy item.
    Then we have
    $$v_i\left(A_i\setminus\{c_{i,k+1}\}\right)= \sum_{t=2}^k v_i(c_{i,t})\ge \sum_{t=3}^{k+1} v_i(c_{j,t})\ge v_i(M_o\cup\{c_{j,2}\})+ \sum_{t=3}^{k+1} v_i(c_{j,t})= v_i(A_j).$$
    We have shown that EF$1$ is satisfied in all cases.
\end{proof}

Theorem~\ref{thm:main} is immediately implied by the existence of such a randomized allocation due to the two lemmas above.
It remains to show that the fractional allocation in Lemma~\ref{lem:exan-ef} is decomposable into integral allocations satisfying the conditions of Lemma~\ref{lem:expo-ef1}.
In particular, we will show that there exists a specific lottery over the first-round allocations that implements the fractional allocation obtained after executing recursive PS for the first unit of time, so that we can properly allocate the subjective goods to satisfy the marginal probability and the constraint specified in the two lemmas above.

The following lemma formalizes this property of the lottery over the first-round allocations, which introduces a key concept of \emph{Hall-type condition}.

Recall that each first-round allocation will assign $r$ objective chores to $r$ agents, and thus can be viewed as a matching of size $r$.
A matching $\mu$ will alternatively be represented by an indicator matrix $\mathbf{1}_{\mu}\in\{0,1\}^{n\times m}$, where $(\mathbf{1}_{\mu})_{i,j}=1$ if item $j$ is matched with agent $i$ under $\mu$, and $(\mathbf{1}_{\mu})_{i,j}=0$ otherwise.
\begin{lemma}\label{lem:halltypecondition}
    Let \(X\in[0,1]^{n\times m}\) be the fractional allocation obtained after executing recursive PS for one unit of time.
    Suppose $X$ admits a decomposition
    \[
    X=\sum_{\mu}\lambda_{\mu}\,\mathbf{1}_{\mu}
    \]
    into size-\(r\) matchings such that, for each interest set \(T_j\) of \(M_j\) (where the interest sets are pairwise disjoint), and every subset \(J\subseteq T_j\), the following \emph{Hall-type condition} in~\Cref{eqn:hall} holds:
    \begin{equation}\label{eqn:hall}
        \mathbb{P}_{\mu\sim\lambda}\!\left(K_{\mu}\cap T_j\neq\emptyset \text{ and } K_{\mu}\cap T_j\subseteq J\right)\le \frac{|J|}{|T_j|},
    \end{equation}
    where \(K_\mu\) denotes the random variable corresponding to the set of agents that are matched under \(\mu\).
    Then, the subjective goods can be allocated to satisfy both the marginal probability in Lemma~\ref{lem:exan-ef} and the constraint in Lemma~\ref{lem:expo-ef1}.
\end{lemma}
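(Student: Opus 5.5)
The plan is to build, for each matching $\mu$ in the support of $\lambda$, a conditional distribution $q_\mu$ over who receives $M_j$, and to treat each interest set $T_j$ independently. This is possible because the sets $T_j$ are pairwise disjoint and the constraint of Lemma~\ref{lem:expo-ef1} for $M_j$ involves only $K_\mu\cap T_j$. Fix $j$ and write $T=T_j$. For each $\mu$, the admissible recipients form the set $S_\mu=K_\mu\cap T$ if this set is nonempty, and $S_\mu=T$ otherwise. We need a fractional assignment $y_{\mu,i}\ge 0$, for $i\in S_\mu$, with $\sum_{i\in S_\mu}y_{\mu,i}=\lambda_\mu$ for every $\mu$ and $\sum_\mu y_{\mu,i}=1/|T|$ for every $i\in T$. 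Setting $q_\mu(i)=y_{\mu,i}/\lambda_\mu$, we give $M_j$ to $i$ with probability $q_\mu(i)$ conditional on $\mu$. The remaining rounds of recursive PS are then drawn as before, conditioned on $\mu$, and can be taken independent of this choice. All marginals are then correct: the chores follow $X^Z$, each agent in $T$ receives $M_j$ with probability $1/|T|$, and every realized allocation satisfies the rule of Lemma~\ref{lem:expo-ef1}.

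To find $y$ we use a transportation/flow argument. Build a bipartite network with source $s$ and an arc $s\to\mu$ of capacity $\lambda_\mu$ for each $\mu$. Add an arc $\mu\to i$ of infinite capacity for each $i\in S_\mu$, and an arc $i\to t$ of capacity $1/|T|$ for each $i\in T$. Both sides have total capacity $1$, so a feasible $y$ exists exactly when the max flow equals $1$. By max-flow min-cut, or equivalently Gale's/Hall's supply--demand theorem, this holds iff for every $J\subseteq T$ the total supply of the matchings that can only send to $J$ is at most the demand of $J$:
\[
\sum_{\mu:\,S_\mu\subseteq J}\lambda_\mu\le \frac{|J|}{|T|}.
\]
The arcs out of $\mu$ have infinite capacity, so a finite cut must contain all of $S_\mu$ on the source side whenever it contains $\mu$. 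Minimizing over cuts therefore reduces exactly to this family of inequalities.

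It remains to check these inequalities. If $J\neq T$, the condition $S_\mu\subseteq J$ rules out $S_\mu=T$, so $K_\mu\cap T\ne\emptyset$ and $K_\mu\cap T\subseteq J$. The left-hand side is then precisely the probability in~\eqref{eqn:hall}, which is at most $|J|/|T|$ by hypothesis. If $J=T$, the inequality reads $\sum_\mu\lambda_\mu\le 1$, which holds trivially. The case $J=\emptyset$ is vacuous because every $S_\mu$ is nonempty. This establishes feasibility for each $j$.

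Combining over $j$, the sets $T_j$ are disjoint and the goods are distinct, so the per-good conditional lotteries given $\mu$ can be run independently and composed with the chore lottery. The main point needing care is the min-cut reduction: the trivial fallback case $S_\mu=T$ is exactly what makes the $J=T$ constraint harmless. Everything else is bookkeeping of marginals.
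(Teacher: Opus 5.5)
Your proposal is correct and takes essentially the same route as the paper. The paper also uses a per-good transportation network between the agents of $T_j$ and the support matchings (it orients arcs from agents to matchings rather than from matchings to agents), reduces feasibility via max-flow min-cut to exactly the Hall-type inequalities (treating $J=T$ separately, as you do), and turns the flow into conditional assignment probabilities $f/\lambda_\mu$. Your explicit composition of the per-good lotteries with the remaining recursive PS rounds spells out a step the paper leaves implicit.
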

\begin{proof}
    Fix a subjective good $M_j\in G$, and let $T_j=\{1,\ldots,t\}$.
    Suppose the decomposition has support $\{\mu_1,\ldots,\mu_{t'}\}$, where each size-$r$ matching $\mu_{i'}$ (corresponding to a first-round integral allocation) occurs with probability $\lambda_{i'}$.

    We formulate the problem as an $s$-$s'$ network flow problem.
    The flow network is constructed with two additional sets of vertices besides $s$ and $s'$.
    The first set contains $t$ vertices, each representing an agent in $T_j$.
    With a slight abuse of notation, we denote the vertices by $\{1,\ldots,t\}$.
    There is an edge with capacity $\frac{1}{t}$ from $s$ to each $i\in [t]$.
    The second set contains $t'$ vertices, each representing a matching in the support.
    With a slight abuse of notation, we denote the vertices by $\{\mu_1,\ldots,\mu_{t'}\}$.
    There is an edge with capacity $\lambda_{i'}$ from each matching $\mu_{i'}$ to $s'$.
    Moreover, there is an edge with capacity $\infty$ from $i\in[t]$ to $\mu_{i'}$ for $i'\in[t']$, if assigning $M_j$ to agent $i$ under matching $\mu_{i'}$ does not violate the constraint in Lemma~\ref{lem:expo-ef1}.
    Specifically, the edge occurs if agent $i$ is matched with some objective chore under $\mu_{i'}$, or all agents in $T_j$ remain unmatched under $\mu_{i'}$.
    Figure~\ref{fig:flow1} illustrates the network flow instance.

\begin{figure}
    \centering
\begin{tikzpicture}[
    >=Stealth,
    vertex/.style={circle, draw, minimum size=1cm, thick, inner sep=0pt, font=\large},
    flow/.style={fill=white, inner sep=2pt, font=\small},
    edge/.style={->, thick}
]

    \node[vertex] (s) at (0, 0) {$s$};

    \node[vertex] (a1) at (3.5, 3) {$1$};
    \node[vertex] (a2) at (3.5, 1) {$2$};
    \node (adots) at (3.5, -1) {\Large $\vdots$};
    \node[vertex] (at) at (3.5, -3) {$t$};

    \node[vertex] (m1) at (8.5, 3) {$\mu_1$};
    \node[vertex] (m2) at (8.5, 1) {$\mu_2$};
    \node (mdots) at (8.5, -1) {\Large $\vdots$};
    \node[vertex] (mt) at (8.5, -3) {$\mu_{t'}$};

    \node[vertex] (sprime) at (12, 0) {$s'$};

    \node[draw, thick, dashed, rounded corners, fit=(a1) (at), inner xsep=10pt, inner ysep=15pt] (agents_box) {};
    \node[above=0.2cm of agents_box, font=\bfseries] {Agents in $T_j$};

    \node[draw, thick, dashed, rounded corners, fit=(m1) (mt), inner xsep=10pt, inner ysep=15pt] (matchings_box) {};
    \node[above=0.2cm of matchings_box, font=\bfseries] {Matchings in Support};

    \draw[edge] (s) -- (a1) node[flow, pos=0.5, sloped] {$1/t$};
    \draw[edge] (s) -- (a2) node[flow, pos=0.5, sloped] {$1/t$};
    \draw[edge] (s) -- (at) node[flow, pos=0.5, sloped] {$1/t$};

    \draw[edge] (a1) -- (m1) node[flow, pos=0.3] {$\infty$};
    \draw[edge] (a1) -- (m2);

    \draw[edge] (a2) -- (m2) node[flow, pos=0.3] {$\infty$};
    \draw[edge] (a2) -- (mt);
    \draw[edge] (at) -- (m2);
    \draw[edge] (at) -- (mt) node[flow, pos=0.3] {$\infty$};

    \draw[edge] (m1) -- (sprime) node[flow, pos=0.5, sloped] {$\lambda_1$};
    \draw[edge] (m2) -- (sprime) node[flow, pos=0.5, sloped] {$\lambda_2$};
    \draw[edge] (mt) -- (sprime) node[flow, pos=0.5, sloped] {$\lambda_{t'}$};

\end{tikzpicture}
    \caption{The flow network for the proof of Lemma~\ref{lem:halltypecondition}. An intermediate edge $(i,\mu_{i'})$ represents that assigning $M_j$ to agent $i$ under matching $\mu_{i'}$ does not violate the constraint in Lemma~\ref{lem:expo-ef1}.}
    \label{fig:flow1}
\end{figure}

    Note that the maximum flow in the above instance is at most $1$.
    Moreover, the lemma follows from the existence of a flow with value $1$ for every $M_j$, as we may construct the following randomized allocation.
    Assume we have a flow $f$ with value $1$, in which the flow on each edge $(u,v)$ is denoted by $f(u,v)$.
    Then, under each matching $\mu_{i'}$, the subjective good $M_j$ is allocated to agent $i$ with probability $\frac{f(i,\mu_{i'})}{\lambda_{i'}}$.
    Each integral allocation satisfies the constraint in Lemma~\ref{lem:expo-ef1} due to our edge construction between agents $\{1,\ldots,t\}$ and matchings $\{\mu_1,\ldots,\mu_{t'}\}$.
    The marginal probability that agent $i$ is allocated $M_j$ is given by $\sum_{i'=1}^{t'}(\lambda_{i'}\cdot\frac{f(i,\mu_{i'})}{\lambda_{i'}})=\sum_{i'=1}^{t'}f(i,\mu_{i'})$, which equals $f(s,i)$ by the flow conservation, which equals $\frac{1}{t}$ in any flow with value $1$.
    This matches the fractional allocation in which each agent $i\in T_j$ receives a fractional $\frac1t$ of $M_j$.

    We now prove the maximum flow is indeed $1$ by showing the instance has a minimum cut with value $1$.
    Each edge $(i,\mu_{i'})$ for $i\in[t]$ and $i'\in[t']$ will never appear in any minimum cut as it has capacity $\infty$.
    Therefore, any minimum cut contains only edges in $\{(s,i)\}_{i\in[t]}$, and edges in $\{(\mu_{i'},s')\}_{i'\in[t']}$.

    Consider any potential minimum cut.
    Define $U\subseteq \{1,\ldots,t\}$ such that the edge $(s,i)$ is included in the cut for each $i\in U$.
    If $U=T_j$, the cut has value at least $1$.
    Suppose $|U|=u<t$, then the cut involving edges in $\{(s,i)\}_{i\in[t]}$ has value $\frac{u}{t}$.
    For each edge $(\mu_{i'},s')$ where $i'\in [t']$, it should be included in the cut if $\mu_{i'}$ is an out-neighbor of any vertex in $T_j\setminus U$.
    This happens when all agents in $T_j$ are unmatched in $\mu_{i'}$, which corresponds to $K_{\mu_{i'}}\cap T_j=\emptyset$ (recall that $K_{\mu_{i'}}$ denotes the set of agents who are matched under $\mu_{i'}$), or there exists some agent in $T_j\setminus U$ that is matched in $\mu_{i'}$, which corresponds to $K_{\mu_{i'}}\cap T_j\nsubseteq U$.
    Since
    \[
    \mathbb{P}_{\mu\sim\lambda}\!\left(K_{\mu}\cap T_j\neq\emptyset \text{ and } K_{\mu}\cap T_j\subseteq J\right)\le \frac{|J|}{|T_j|}
    \]
    for each $J\subseteq T_j$ by the Hall-type condition, by taking $J$ to be $U$, we have
    \[
    \mathbb{P}_{\mu\sim\lambda}\!\left(K_{\mu}\cap T_j=\emptyset \text{ or } K_{\mu}\cap T_j\nsubseteq  U\right)\ge 1-\frac{|U|}{|T_j|} =1-\frac{u}{t}.
    \]
    It implies that the cut involving edges in $\{(\mu_{i'},s')\}_{i'\in[t']}$ has value at least $1-\frac{u}{t}$.
    Combining together, the minimum cut has value $1$.
\end{proof}

In the following, it suffices to show the lottery over the first-round allocations satisfying the Hall-type condition (\ref{eqn:hall}) is achievable.

\subsection{Decompositions Satisfying the Hall-Type Condition (\ref{eqn:hall})}\label{sect:hall}

In this section, we prove that decompositions satisfying the Hall-type condition (\ref{eqn:hall}) always exist.  We begin with the case of a single interest set \(T\), because it already contains the main ideas of the argument. The proof has three ingredients:
\begin{itemize}
    \item reformulate the Hall-type condition as a weighted inequality,
    \item use a minimax argument to reduce the problem to a fixed weight vector,
    \item construct the required distribution via a network flow based argument.
\end{itemize}
We then state the extension to multiple pairwise disjoint interest sets. The conceptual ideas are the same in the multi-set case, but carrying out the flow construction simultaneously for several sets introduces substantially more bookkeeping and notation. Since the extension is technically heavier rather than conceptually different, we defer its full proof to the appendix.

A key ingredient needed for our decomposition result is the Sion's minimax theorem, stated below.
\begin{theorem}[\cite{Sion58}]\label{thm:Sion}
Let \(P\subset \RR^p\)  and \(Q\subset \RR^q\) be compact convex sets. If $g:P \times Q\rightarrow \RR$ is a function such that:
\begin{itemize}
    \item for each fixed $x_0\in P$, the function $y\mapsto g(x_0,y)$  is continuous and concave;
    \item for each fixed $y_0\in Q$, the function $x\mapsto g(x,y_0) $ is continuous and convex.
\end{itemize}
Then,
\[
\min_{x\in P}\max_{y\in Q} g(x,y)
=
\max_{y\in Q}\min_{x\in P} g(x,y).
\]
\end{theorem}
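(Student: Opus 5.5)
Since the paper cites this as Sion's theorem, I would reproduce the short elementary proof (in the style of Komiya) rather than appeal to a fixed-point theorem. The route is: first argue that both optima are attained, then get the easy inequality, then prove the reverse inequality by contradiction using a two-point lemma, induction over finite sets of $y$'s, and a compactness argument.

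\textbf{Attainment and the easy direction.} The function $x\mapsto \max_{y\in Q} g(x,y)$ is a supremum of continuous functions, hence lower semicontinuous on the compact set $P$, so its minimum is attained. Symmetrically, $y\mapsto \min_{x\in P} g(x,y)$ is upper semicontinuous on $Q$, so its maximum is attained. The inequality $\max_y\min_x g\le \min_x\max_y g$ is immediate: for every $x'$ and $y'$ we have $\min_x g(x,y')\le g(x',y')\le \max_y g(x',y)$.

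\textbf{Setting up the reverse inequality.} I would argue by contradiction. Suppose some real $c$ satisfies $\max_y\min_x g<c<\min_x\max_y g$. For each $y$, define the sublevel set $L_y=\{x\in P: g(x,y)\le c\}$. By continuity and convexity in $x$, each $L_y$ is closed and convex. Because $\min_x\max_y g>c$, we get $\bigcap_{y\in Q} L_y=\emptyset$. By compactness of $P$, some finite subfamily $L_{y_1},\dots,L_{y_k}$ already has empty intersection. The goal is then to show that every finite family of the form $\{x: \max_{i\le k} g(x,y_i)\le c'\}$ with $c'>\max_y\min_x g$ is nonempty; taking $c'=c$ gives the contradiction.

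\textbf{The two-point lemma (main obstacle).} The key step is: for any $y_1,y_2\in Q$ and any $\alpha<\min_x\max\{g(x,y_1),g(x,y_2)\}$, there is a point $z$ on the segment $[y_1,y_2]$ with $\alpha<\min_x g(x,z)$. I expect this to be the hardest part. I would try the following connectedness argument first. Pick $\beta$ strictly between $\alpha$ and that min-max value, and suppose for contradiction that every $z$ on the segment has $\min_x g(x,z)\le\alpha$. For $z\in[y_1,y_2]$, the set $\{x: g(x,z)\le\alpha\}$ is nonempty, so concavity in $y$ forces $\{x:g(x,z)\le\beta\}$ to lie entirely inside $\{g(\cdot,y_1)\le\beta\}$ or entirely inside $\{g(\cdot,y_2)\le\beta\}$. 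These two sets are disjoint by the choice of $\beta$, and the sublevel set at $z$ is convex, hence connected, so exactly one inclusion holds. The segment therefore splits into two nonempty pieces $I_1\ni y_1$ and $I_2\ni y_2$. Using continuity of $g$ in $x$ (upper semicontinuity of the relevant quantities in $z$) together with the strict gap between $\alpha$ and $\beta$, I would show both pieces are closed. This contradicts the connectedness of the segment.

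\textbf{Induction to finite families.} I then extend to $k$ points $y_1,\dots,y_k$ by induction on $k$. Given $\alpha<\min_x\max_{i\le k}g(x,y_i)$, apply the inductive hypothesis to $g$ restricted to the convex compact set $P'=\{x: g(x,y_k)\le\alpha'\}$ for a suitable $\alpha'$ slightly above $\alpha$. This produces a point $y'$ among convex combinations of $y_1,\dots,y_{k-1}$ with $\min_{x\in P'}g(x,y')>\alpha'$. The two-point lemma applied to $y'$ and $y_k$ then yields a single $z\in Q$ with $\min_x g(x,z)>\alpha$. This contradicts $\max_y\min_x g\le\alpha$ whenever we choose $\alpha\ge\max_y\min_x g$. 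Combined with the compactness reduction in the second step, this gives $\min_x\max_y g\le\max_y\min_x g$, which completes the proof.
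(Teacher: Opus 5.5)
The paper gives no proof of this statement. It is quoted from Sion's 1958 paper and used as a black box in Proposition~\ref{prop:key-T} and in the proof of Theorem~\ref{thm:multi-T}. Your proposal is a correct outline of Komiya's 1988 elementary proof, so the real comparison is between citing and proving.

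\textbf{Citing versus proving.} The citation buys brevity and more generality than the paper needs. Sion's theorem holds in topological vector spaces, with quasi-convexity/quasi-concavity plus semicontinuity in place of convexity/concavity plus continuity, and with only one of the two sets required to be compact. Sion's own argument goes through a Knaster--Kuratowski--Mazurkiewicz-type covering lemma.

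Your route is self-contained. It uses only three ingredients:
\begin{itemize}
\item connectedness of segments and of convex sets;
\item the finite intersection property in the compact set \(P\);
\item an induction on the number of \(y\)'s.
\end{itemize}
It also makes visible exactly what is used: lower semicontinuity and quasi-convexity in \(x\), upper semicontinuity and quasi-concavity in \(y\), and compactness of \(Q\) only to replace \(\sup_y\) by \(\max_y\). The paper's \(g(\lambda,w)\) comfortably satisfies all of this, being linear in \(\lambda\) and continuous and concave in \(w\).

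\textbf{Two-point lemma: which continuity is used where.} Closedness of \(I_1\) and \(I_2\) comes from continuity of \(g(x,\cdot)\) in the \(y\)-variable, not in \(x\). Suppose \(z_n\in I_1\) and \(z_n\to z\). Pick \(x\) with \(g(x,z)\le\alpha<\beta\). Then \(g(x,z_n)<\beta\) for large \(n\), so \(x\in\{g(\cdot,y_1)\le\beta\}\). Since \(x\) also lies in the \(\beta\)-sublevel set at \(z\), that convex set must be contained in \(\{g(\cdot,y_1)\le\beta\}\), i.e.\ \(z\in I_1\). Continuity in \(x\) is what you need one step earlier: it makes \(\{g(\cdot,y_1)\le\beta\}\) and \(\{g(\cdot,y_2)\le\beta\}\) closed, so that a connected set inside their disjoint union lies in one of them.

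\textbf{Induction step.} Take \(\alpha'\) strictly below \(\min_x\max_{i\le k}g(x,y_i)\), so that the inductive hypothesis actually applies on \(P'\). Also treat the case \(P'=\emptyset\) separately by taking \(z=y_k\).
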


Another useful ingredient is the following standard flow-integrality lemma, which lets us pass from a feasible fractional flow to a probability distribution over integral ones.
\begin{lemma}\label{lem:flow}
Let \( \mathcal{N}=(V,E)\) be a flow network.
For each arc \(a\in E\), let \(\ell_a,u_a\in \mathbb{Z}\) satisfy \(\ell_a\le u_a\), and let \(r\in \mathbb{Z}\).

Define the set of feasible \(s\)-\(t\) flows of value \(r\) by
\[
\mathcal{F}(\mathcal{N},\ell,u,r)
=
\left\{
f\in \mathbb{R}^E :
\begin{array}{l}
\ell_a\le f_a\le u_a \quad \forall a\in E,\\[2pt]
\displaystyle
\sum_{a\in\delta^-(v)} f_a-\sum_{a\in\delta^+(v)} f_a=
\begin{cases}
-r,& v=s,\\
r,& v=t,\\
0,& v\in V\setminus\{s,t\}
\end{cases}
\end{array}
\right\}.
\]
If \(\mathcal{F}(\mathcal{N},\ell,u,r)\neq\varnothing\), then every \(f\in \mathcal{F}(\mathcal{N},\ell,u,r)\) can be written as a convex combination of integral feasible \(s\)-\(t\) flows of value \(r\).
\end{lemma}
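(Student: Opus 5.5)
We would prove Lemma~\ref{lem:flow} by a polyhedral argument: show that $\mathcal{F}(\mathcal{N},\ell,u,r)$ is a bounded polytope whose vertices are all integral, and then invoke the Minkowski--Carath\'eodory theorem. Concretely, we write $\mathcal{F}(\mathcal{N},\ell,u,r)=\{f\in\mathbb{R}^E : Bf=b,\ \ell\le f\le u\}$. Here $B$ is the node--arc incidence matrix of $\mathcal{N}$, with rows indexed by $V$, columns by $E$, and the entry $+1$ at the head and $-1$ at the tail of each arc. The right-hand side $b$ is the integral vector with $b_s=-r$, $b_t=r$ and $b_v=0$ otherwise. Since every coordinate is sandwiched between the finite integers $\ell_a$ and $u_a$, the set is a compact polyhedron, i.e.\ a polytope. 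It is nonempty by hypothesis.

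The key step is integrality of the vertices. The incidence matrix of a directed graph is totally unimodular, since each column has exactly one $+1$ and one $-1$; this is the classical Poincar\'e/Heller--Tompkins criterion. Stacking $B$, $-B$, $I$ and $-I$ preserves total unimodularity, so the system $Bf\le b$, $-Bf\le -b$, $f\le u$, $-f\le -\ell$ has a TU constraint matrix and an integral right-hand side. By the Hoffman--Kruskal theorem, every vertex of $\mathcal{F}$ is therefore integral. We also need that these vertices are feasible flows of value exactly $r$, but this is immediate because they lie in $\mathcal{F}$.

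Finally, since $\mathcal{F}$ is a nonempty polytope, it equals the convex hull of its finitely many vertices (Krein--Milman or Minkowski for polytopes). Hence every $f\in\mathcal{F}$ is a convex combination of integral feasible $s$-$t$ flows of value $r$; by Carath\'eodory, at most $|E|+1$ of them suffice. There is no real obstacle here. The only point needing care is recalling or justifying that appending identity rows and negated copies keeps the matrix TU, which is standard.

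If a constructive argument were preferred, for example to get an efficient sampler, we would instead proceed by induction on the number of fractional arcs. The arcs with $f_a\notin\mathbb{Z}$ form a subgraph in which every vertex has even fractional degree after accounting for the integral balances, so this subgraph contains an undirected cycle. Pushing $\pm\varepsilon$ around that cycle until some arc becomes integral writes $f$ as a convex combination of two feasible flows with strictly fewer fractional arcs. The polyhedral proof above is shorter, so it is the one we would write.
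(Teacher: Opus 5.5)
Your argument is correct and is essentially the paper's. You write $\mathcal{F}(\mathcal{N},\ell,u,r)=\{f : Bf=b,\ \ell\le f\le u\}$, use total unimodularity of the node--arc incidence matrix with integral $b,\ell,u$ to get integral vertices, and then use boundedness to write every point as a convex combination of those vertices, a final step the paper leaves implicit and you correctly supply. One small inaccuracy in your optional constructive aside: vertices need not have even fractional degree (e.g., three incoming arcs with flows $0.3,0.3,0.4$), but the true and sufficient fact is that no vertex meets exactly one fractional arc, since every net balance is an integer, so the cycle still exists.
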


\subsubsection*{Single Interest Set $T$}

\begin{theorem}\label{thm:fixed-T}
Let \(X\in[0,1]^{n\times m}\) be an arbitrary matrix, and let \(r\in\mathbb{Z}\) satisfy \(0\le r<n\).  Suppose that each row of \(X\) sums to exactly \(r/n\), and each column sums to at most $1$. Then, for any fixed nonempty set \(T\subseteq [n]\), there exists a decomposition
\[
X=\sum_{\mu}\lambda_{\mu}\,\mathbf{1}_{\mu}
\]
into size-\(r\) matchings such that, for every subset \(J\subseteq T\),
\[
\mathbb{P}_{\mu\sim\lambda}\!\left(K_{\mu}\cap T\neq\emptyset \text{ and } K_{\mu}\cap T\subseteq J\right)\le \frac{|J|}{|T|}.
\]
Here, \(K_\mu\) denotes the random variable corresponding to the set of rows in \([n]\) that are matched under \(\mu\).
\end{theorem}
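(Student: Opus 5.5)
We follow the three-step outline announced in the overview. Write $t=|T|$ and let $\Lambda$ denote the set of probability distributions over size-$r$ matchings $\mu$ with $\sum_\mu\lambda_\mu\mathbf 1_\mu=X$. This set is nonempty by Lemma~\ref{lem:flow}: take the network source $\to$ rows (capacity $r/n$, lower bound $r/n$) $\to$ columns (capacity $X_{ih}$) $\to$ sink (capacity $\le 1$) with value $r$, and rescale so that the row arcs have bounds $[0,1]$, since integral flows of value $r$ are exactly size-$r$ matchings. Moreover, $\Lambda$ is a compact convex polytope in a finite-dimensional space.

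\textbf{Continuous reformulation.} For $w\in\Delta(T)$ and a nonempty $S\subseteq T$, define $\varphi_w(S)=\min_{j\in S}w_j\cdot t$, or more naturally $\varphi_w(S)=\min_{j\in S} w_j$ with the right-hand side $\frac1t\sum_j w_j=\frac1t$. Set $\varphi_w(\emptyset)=0$. The claim is that the Hall-type condition for all $J$ is equivalent to
\[
\mathbb E_{\mu\sim\lambda}\bigl[\varphi_w(K_\mu\cap T)\bigr]\le \frac1t\sum_{j\in T}w_j \quad \text{for all } w\in\mathbb R_{\ge 0}^T.
\]
To see this, use a layer-cake decomposition: write $w=\sum_k \alpha_k\mathbf 1_{J_k}$ with nested $J_1\supseteq J_2\supseteq\cdots$ and $\alpha_k\ge 0$. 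Then $\min_{j\in S}w_j=\sum_k\alpha_k\,\mathbb I[S\subseteq J_k]$ for nonempty $S$, so the weighted inequality is a nonnegative combination of the subset inequalities. Conversely, taking $w=\mathbf 1_J$ recovers each subset inequality, and normalization lets us restrict to the simplex. Define $g(\lambda,w)=\mathbb E_\lambda[\varphi_w(K_\mu\cap T)]-\frac1t\sum_j w_j$. This function is linear in $\lambda$, and concave and continuous in $w$, because a minimum of linear functions is concave. Sion's theorem (Theorem~\ref{thm:Sion}) then gives $\min_\lambda\max_w g=\max_w\min_\lambda g$. It therefore suffices to exhibit, for each fixed $w$, some $\lambda_w\in\Lambda$ with $g(\lambda_w,w)\le 0$.

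\textbf{Fixed $w$.} Sort $T=\{j_1,\dots,j_t\}$ so that $w_{j_1}\le\cdots\le w_{j_t}$. Then $\varphi_w(S)=w_{j_{\min}(S)}$, where $j_{\min}(S)$ is the lowest-indexed element of $S$. The target is to make the event "$j_1,\dots,j_{a-1}$ unmatched and $j_a$ matched" as rare as possible for small $a$. Equivalently, we want to spread the matched rows in $T$ so that the smallest matched index is large. Since each row is matched with probability exactly $r/n$, a naive estimate gives $\Pr[j_a \text{ is the minimum}]\le r/n$. What we actually want is $\sum_a w_{j_a}\Pr[\min=a]\le\frac1t\sum_a w_{j_a}$. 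By Abel summation against the increasing $w$, this reduces to the prefix bounds
\[
\Pr\bigl[K_\mu\cap\{j_1,\dots,j_a\}\neq\emptyset\bigr]\le \frac at \quad \text{for all } a.
\]
These bounds would follow if the number of matched rows among the first $a$ sorted rows, call it $N_a$, were always as close as possible to its mean $ar/n$ in a nested way. In particular, whenever $ar/n<1$ it would hold that $N_a\in\{0,1\}$, giving $\Pr[N_a\ge1]=ar/n\le a/t$; this needs $r/n\le 1/t$, which fails in general. So the actual construction must exploit the structure better: we want $N_a\le\lceil ar/n\rceil$ and $N_a\ge\lfloor ar/n\rfloor$ simultaneously along the chain. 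To get this, build the biased network with chain nodes $s\to u_1\to u_2\to\cdots$, where row $j_a$ branches off $u_a$. The arc $u_a\to u_{a+1}$ carries the residual amount $r-ar/n$ and has integral bounds $\lfloor r-ar/n\rfloor$ and $\lceil r-ar/n\rceil$; rows outside $T$ are fed from the final chain node. The fractional flow induced by $X$ is feasible, so Lemma~\ref{lem:flow} decomposes it into integral matchings whose prefix counts $N_a$ are the floor or ceiling of $ar/n$.

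\textbf{Main obstacle.} The delicate step is verifying that this prefix-balancing in fact yields $g(\lambda_w,w)\le 0$, namely $\Pr[N_a\ge1]\le a/t$ for all $a$. If balanced counts alone do not suffice, I would bias the chain further. This means ordering the rows outside $T$ before $T$ in the chain, or reversing the order so that rows with small $w$ are processed last, and then choosing lower and upper bounds that force $\Pr[N_a=0]$ to be as large as possible. That probability should reach $1-\min(1,ar/n)$, and I expect the inequality $ar/n\le a/t\cdot(\text{something}\ge1)$ to need the averaging over the empty event to close the gap. Getting these bounds exactly right is where I expect the real work to lie.
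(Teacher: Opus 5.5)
There is a genuine gap: the Abel summation in your ``Fixed $w$'' step is done with the wrong sign, so the inequalities you then try to verify are not the ones that are needed. Take $w_{j_1}\le\cdots\le w_{j_t}$, $w_{j_0}:=0$ and $P_{\le a}:=\Pr[K_\mu\cap\{j_1,\dots,j_a\}\neq\emptyset]$. Then
\[
\mathbb{E}_{\mu\sim\lambda}\bigl[\varphi_w(K_\mu\cap T)\bigr]-\frac1t\sum_{a=1}^{t}w_{j_a}=\sum_{a=0}^{t-1}\bigl(w_{j_{a+1}}-w_{j_a}\bigr)\Bigl[\bigl(P_{\le t}-P_{\le a}\bigr)-\frac{t-a}{t}\Bigr].
\]
So what you need is $\Pr[K_\mu\cap T\neq\emptyset \text{ and } K_\mu\cap\{j_1,\dots,j_a\}=\emptyset]\le \frac{t-a}{t}$ for each $a$. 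This is just the Hall condition for the nested sets $J=T\setminus\{j_1,\dots,j_a\}$. In words, the low-weight rows must be matched \emph{often}. Your intuition that ``the smallest matched index should be large'' is backwards, since $\varphi_w$ takes the minimum weight.

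Your target $\Pr[N_a\ge1]\le a/t$ is the reverse inequality, and it fails for \emph{every} $\lambda\in\Lambda$ once $rt/n>1$: at $a=1$, $\Pr[N_1\ge 1]=r/n>1/t$. You noticed this failure but blamed the construction rather than the reduction. The remedies you sketch (making $\Pr[N_a=0]$ as large as possible, reordering the chain) push in the wrong direction, so the key verification is never carried out.

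Your network itself is fine. Since $r$ is an integer, bounding the chain flow $r-N_a$ by $\lfloor r-ar/n\rfloor$ and $\lceil r-ar/n\rceil$ forces $N_a\in\{\lfloor ar/n\rfloor,\lceil ar/n\rceil\}$ on the low-weight prefixes. This is exactly what the paper's chain $\tau_1\to\cdots\to\tau_{|T|}$ enforces on the suffixes $R_{k+1}$ of its decreasing order, with $a=|T|-k$. With the correct target the proof closes in a few lines:
\begin{itemize}
\item If $rt/n\le 1$, any $\lambda\in\Lambda$ works, because $\varphi_w(S)\le\sum_{i\in S}w_i$ and each row is matched with probability $r/n$.
\item If $rt/n>1$, bound the event by $\{N_a=0\}$. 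The paper notes it in fact equals $\{N_a=0\}$, since $N_t\ge\lfloor rt/n\rfloor\ge 1$ on the support.
\item If in addition $ar/n\ge 1$, this event has probability $0$.
\item If $ar/n<1$, then $N_a\in\{0,1\}$ with $\mathbb{E}[N_a]=ar/n$. Hence $\Pr[N_a=0]=1-ar/n<1-a/t$, precisely because $r/n>1/t$.
\end{itemize}
So the computation $\Pr[N_a\ge 1]=ar/n$ that you already had was the right one; you only needed to use it in the opposite direction.
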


At a high level, \Cref{thm:fixed-T} says that we can decompose \(X\) so that the matched agents from \(T\) do not concentrate too heavily on any particular subset \(J\subseteq T\). Rather than working directly with these events, we first prove a weighted version that is more flexible and better suited to a minimax argument.

For the rest of this subsection, \(X\) and \(r\) are fixed as in \Cref{thm:fixed-T}.

\begin{definition}
Let $w \in \mathbb{R}_+^T$ be a nonnegative weight vector. Define the set function $\varphi_w : 2^T \to \mathbb{R}_+$ by
\[
\varphi_w(S) :=
\begin{cases}
\min_{i \in S} w_i, & \text{if } S \neq \emptyset, \\
0, & \text{otherwise}.
\end{cases}
\qquad \text{for all } S \subseteq T.
\]
\end{definition}

Let \(\Pi(X)\) denote the polytope of all probability distributions over size-\(r\) matchings that implement \(X\). The next proposition is the main technical step; \Cref{thm:fixed-T} will follow from it by choosing an appropriate weight vector.

\begin{proposition}\label{prop:key-T}
There exists a probability distribution $\lambda^*$ over matchings of size $r$ such that $\lambda^* \in \Pi(X)$ and
\[
\mathbb{E}_{\mu\sim\lambda^*}\!\left[\varphi_w(K_\mu \cap T)\right]
\le \frac{1}{|T|}\sum_{i\in T} w_i
\qquad \text{for every } w \in \mathbb{R}_+^T.
\]
\end{proposition}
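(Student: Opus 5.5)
The plan is to prove the statement by minimax. Restrict first to $w\in\Delta(T)$: both sides of the desired inequality are positively homogeneous in $w$, and $w=0$ is trivial, so this costs nothing. Define
\[
g(\lambda,w)\coloneqq \mathbb{E}_{\mu\sim\lambda}[\varphi_w(K_\mu\cap T)]-\frac{1}{|T|}\sum_{i\in T}w_i
\qquad\text{on } \Pi(X)\times\Delta(T).
\]
The hypotheses of Sion's theorem (Theorem~\ref{thm:Sion}) hold as follows.
\begin{itemize}
\item $\Pi(X)$ is a compact convex polytope (finitely many matchings), and it is nonempty because the flow construction below produces an element of it.
\item $\Delta(T)$ is compact and convex.
\item $g$ is linear in $\lambda$.
\item $g$ is concave and continuous in $w$: for each fixed $S$, $w\mapsto\varphi_w(S)$ is a minimum of linear functions, hence concave, and an expectation of concave functions is concave.
\end{itemize}
Hence $\min_\lambda\max_w g=\max_w\min_\lambda g$. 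It therefore suffices to exhibit, for each fixed $w$, some $\lambda_w\in\Pi(X)$ with $g(\lambda_w,w)\le 0$. The minimizing $\lambda^*$ then works for all of $\Delta(T)$, and by scaling for all of $\mathbb{R}_+^T$.

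For fixed $w$, relabel $T=\{1,\dots,t\}$ so that $w_1\le\dots\le w_t$, and set $w_0=0$. By Abel summation,
\[
\mathbb{E}[\varphi_w(K\cap T)]=\sum_{k=1}^t (w_k-w_{k-1})\,\mathbb{P}\bigl(\emptyset\ne K\cap T\subseteq\{k,\dots,t\}\bigr),
\qquad
\frac1t\sum_j w_j=\sum_{k=1}^t (w_k-w_{k-1})\frac{t-k+1}{t}.
\]
All coefficients $w_k-w_{k-1}$ are nonnegative, so it suffices to control the nested suffixes only. With prefixes $P_k=\{1,\dots,k-1\}$, the target is
\[
\mathbb{P}\bigl(K\cap T=\emptyset\ \text{or}\ K\cap P_k\ne\emptyset\bigr)\ge \frac{k-1}{t}\qquad\text{for all } k\in[t].
\]
Write $a=r/n$, so $\mathbb{E}|K\cap P_k|=(k-1)a$. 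Suppose $\lambda_w$ is such that every count $|K\cap P_k|$, for $k=1,\dots,t+1$ (so $P_{t+1}=T$), always lies in $\{\lfloor (k-1)a\rfloor,\lceil (k-1)a\rceil\}$. Then the target follows in two cases.
\begin{itemize}
\item If $ta\ge1$, then $K\cap T\neq\emptyset$ surely. Moreover $\mathbb{P}(K\cap P_k\ne\emptyset)=\min\{1,(k-1)a\}\ge (k-1)/t$.
\item If $ta<1$, then $\mathbb{P}(K\cap T=\emptyset)=1-ta$ and $\mathbb{P}(K\cap P_k\neq\emptyset)=(k-1)a$, and these events are disjoint. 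So the probability in question is $1-ta+(k-1)a$, and
\[
1-ta+(k-1)a-\frac{k-1}{t}=(1-ta)\Bigl(1-\frac{k-1}{t}\Bigr)\ge0.
\]
\end{itemize}

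To build such a $\lambda_w$, use a flow network with a chain of nodes. The source $s$ feeds $u_{t+1}$. For each $k$ there is an arc $u_{k+1}\to u_k$ whose flow equals $|K\cap P_k|$, with integral bounds $[\lfloor (k-1)a\rfloor,\lceil (k-1)a\rceil]$, and an arc $u_{k+1}\to$ row $k$ with bounds $[0,1]$. Rows outside $T$ receive flow directly from $s$ with bounds $[0,1]$. Row $i$ connects to column $h$ with bounds $[0,1]$, and each column connects to the sink with bounds $[0,1]$. The flow value is $r$. The matrix $X$ induces a feasible fractional flow: the chain arcs carry exactly $(k-1)a$, which lies in its own floor and ceiling interval, and the total is $na=r$. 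By Lemma~\ref{lem:flow}, this flow is a convex combination of integral feasible flows. Each integral flow is a size-$r$ matching obeying all prefix bounds, and the row–column marginals of the combination reproduce $X$, so this yields $\lambda_w\in\Pi(X)$ as required.

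The main obstacle is this last step. One must check that every integral flow really is a matching of size $r$ with each matched row unique; the bounds on the row and column arcs enforce this. One must also verify carefully that the Abel-summation reduction to suffix events loses nothing, in particular the boundary case $k=t+1$ and the interplay with the event $K\cap T=\emptyset$.
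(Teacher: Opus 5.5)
Your proof is correct and follows essentially the same route as the paper: Sion's minimax theorem over $\Pi(X)\times\Delta(T)$, a layer-cake decomposition of $\varphi_w$ into nested events, and a biased chain network with floor/ceiling bounds on the number of matched lowest-weight rows, decomposed via the flow-integrality lemma. The differences are cosmetic. You order the weights increasingly and track prefix counts, where the paper tracks the equivalent suffix counts. You also handle the regime $r|T|/n<1$ with the same network, which uses the bound $[\lfloor ta\rfloor,\lceil ta\rceil]$ on $s\to u_{t+1}$ that you flagged. The paper instead dispatches $r|T|/n\le 1$ by a union bound that holds for every $\lambda\in\Pi(X)$.
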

\begin{proof}

For $\lambda\in \Pi(X)$ and $w\in \RR^T_+$, define
\[
g(\lambda,w):=\mathbb{E}_{\mu\sim\lambda}\!\left[\varphi_w(K_\mu \cap T)\right]
- \frac{1}{|T|}\sum_{i\in T} w_i.
\]
The statement of the proposition is equivalent to showing that $\min_{\lambda\in \Pi(X)} \max_{w\in \RR^T_+}g(\lambda,w) \leq 0$.

Observe that, for any constant $c>0$, we have $g(\lambda,cw) = c g(\lambda,w)$, i.e., the function $g(\lambda,w)$ is positively homogeneous in $w$. Hence,  it suffices to consider weight vectors that are in the standard simplex
\[
\Delta(T):=\Bigl\{w\in \RR_+^T : \sum_{i\in T} w_i \leq 1\Bigr\}.
\]
This domain normalization is useful since $\Delta(T)$ is compact.

The function $g(\lambda,w): \Pi(X) \times \Delta(T) \rightarrow \RR$ satisfies:
\begin{itemize}
    \item For each fixed $\lambda_0 \in \Pi(X)$, the map $w \mapsto g(\lambda_0,w)$ is continuous and concave. The concavity follows because, for every fixed subset $S \subseteq T$, the map $w \mapsto \varphi_w(S)$ is concave, and taking expectations preserves concavity. Moreover, subtracting a linear function from a concave function preserves concavity.
    \item  For each fixed $w_0 \in \Delta(T)$, the map $\lambda \mapsto g(\lambda,w_0)$ is linear. This is because $\varphi_{w_0}(K_\mu\cap T)$ is a constant number for each $K_\mu$.
\end{itemize}
The above properties imply that \(g(\lambda,w)\) satisfies the assumptions of \Cref{thm:Sion}. Hence,
\begin{align}\label{eq:minimax}
    \min_{\lambda\in \Pi(X)} \max_{w\in \Delta(T)} g(\lambda,w)
=
\max_{w\in \Delta(T)} \min_{\lambda\in \Pi(X)} g(\lambda,w).
\end{align}
The main advantage of this minimax identity is that, instead of having to construct a single distribution \(\lambda' \in \Pi(X)\) satisfying \(g(\lambda',w)\le 0\) for all $w$ simultaneously, it suffices to prove that for every fixed \(w\in \Delta(T)\) there exists some \(\lambda_w\in \Pi(X)\) such that \(g(\lambda_w,w)\le 0\). This is a much more manageable task. Indeed,  \Cref{lem:Single-T} shows that such a distribution can be constructed for every \(w\in \Delta(T)\).

By \Cref{lem:Single-T}, we have
\(
\max_{w\in \Delta(T)} \min_{\lambda\in \Pi(X)} g(\lambda,w)\le 0
\).
Combining with \Cref{eq:minimax}, we obtain
\(
\min_{\lambda\in \Pi(X)} \max_{w\in \Delta(T)} g(\lambda,w)\le 0
\).
By the positive homogeneity of \(g(\lambda, w) \) in \(w\), this is equivalent to
\[
\min_{\lambda\in \Pi(X)} \max_{w\in \mathbb{R}_+^T} g(\lambda,w)\le 0,
\]
which proves the proposition.
\end{proof}

\begin{lemma}\label{lem:Single-T}
For each fixed \(w\in \Delta(T)\), there exists \(\lambda_w\in \Pi(X)\) such that \(g(\lambda_w,w)\le 0\).
\end{lemma}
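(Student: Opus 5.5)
Fix $w\in\Delta(T)$, write $t:=|T|$, and relabel $T=\{1,\dots,t\}$ so that $w_1\ge w_2\ge\cdots\ge w_t\ge 0$; set $w_{t+1}:=0$. This sorting is the only place where $\lambda_w$ will depend on $w$. The plan has three steps: reduce $g(\lambda_w,w)\le 0$ to $t$ suffix constraints, construct $\lambda_w$ by integral rounding on a biased flow network, and then verify the suffix constraints.

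\textbf{Step 1: layer-cake reduction.} For any $S\subseteq T$ we have $\varphi_w(S)=\sum_{k=1}^{t}(w_k-w_{k+1})\,\mathbb{I}[S\neq\emptyset,\ S\subseteq[k]]$, because the minimum weight in $S$ is attained at its largest index. Likewise $\frac1t\sum_{i\in T}w_i=\sum_{k=1}^t(w_k-w_{k+1})\frac kt$. All coefficients $w_k-w_{k+1}$ are nonnegative. Hence it suffices to find $\lambda_w\in\Pi(X)$ such that, for each $k\in[t]$,
\[
\mathbb{P}_{\mu\sim\lambda_w}\bigl(K_\mu\cap T\neq\emptyset,\ K_\mu\cap S_k=\emptyset\bigr)\le \frac kt,
\qquad S_k:=\{k+1,\dots,t\}.
\]
This replaces the exponential family of conditions by a single chain of suffixes $T=S_0\supseteq S_1\supseteq\cdots\supseteq S_{t-1}$.

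\textbf{Step 2: construct $\lambda_w$ by rounding on a chain network.} I will decompose $X$ into size-$r$ matchings $\mu$ such that $|K_\mu\cap S_k|\in\{\lfloor (t-k)r/n\rfloor,\lceil (t-k)r/n\rceil\}$ for every $k=0,\dots,t-1$. The network is as follows.
\begin{itemize}
\item A source $s$ has an arc to each column node $j$ with capacity $[0,1]$.
\item Each column $j$ has an arc to each row $i$ with capacity $[0,1]$.
\item Each row $i\notin T$ has an arc to the sink with capacity $[0,1]$.
\item The rows of $T$ feed a chain of nodes $C_{t-1},\dots,C_0$. Row $t$ enters $C_{t-1}$; for $k=t-1,\dots,1$, row $k$ and $C_k$ both enter $C_{k-1}$; finally $C_0$ goes to the sink. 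The arc leaving $C_k$ carries exactly the number of matched rows in $S_k$, and I give it bounds $\lfloor (t-k)r/n\rfloor$ and $\lceil (t-k)r/n\rceil$.
\item The flow value is $r$.
\end{itemize}
Routing $X_{ij}$ on the arc $(j,i)$ gives a feasible fractional flow. Indeed, column sums are at most $1$. Each row $i\in T$ carries exactly $r/n$ (here I use that row sums are exactly $r/n$, not just at most), so the chain arc leaving $C_k$ carries exactly $(t-k)r/n$, which lies within its integer bounds. The total flow is $n\cdot r/n=r$. All bounds are integers, so Lemma~\ref{lem:flow} writes this flow as a convex combination of integral flows. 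Each integral flow is a size-$r$ matching obeying the suffix-count bounds, and the convex combination preserves the marginals $X$. This gives $\lambda_w\in\Pi(X)$.

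\textbf{Step 3: verify the suffix inequality for each $k$.} There are three cases.
\begin{itemize}
\item If $(t-k)r/n\ge 1$, then every matching in the support hits $S_k$, so the probability is $0$.
\item If $(t-k)r/n<1$ and $tr/n\ge1$, then $|K_\mu\cap T|\ge1$ always and $|K_\mu\cap S_k|\in\{0,1\}$. So the probability equals $1-\mathbb{E}|K_\mu\cap S_k|=1-(t-k)r/n$. This is at most $k/t$ exactly when $(t-k)/t\le (t-k)r/n$, which holds since $tr\ge n$.
\item If $tr/n<1$, then $|K_\mu\cap T|\le 1$, so the event is exactly ``the matched row of $T$ lies in $[k]$.'' Its probability is $\mathbb{E}|K_\mu\cap[k]|=kr/n\le k/t$.
\end{itemize}
The main obstacle is Step 2: guessing that the right rounding constraints are floor/ceiling bounds on the \emph{suffixes} (the low-weight end), and checking that these bounds fit the integral flow framework via the chain gadget. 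Once that structure is found, the case analysis in Step 3 is routine.
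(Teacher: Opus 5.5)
Your proof is correct and is essentially the paper's argument. Both sort by $w$, use the layer-cake identity to reduce to the chain of suffix events, round on a chain network whose arcs carry floor/ceiling bounds on suffix counts, and finish with a two-case check. The differences are cosmetic: your network is the paper's run in reverse, and you fold the case $|T|r/n<1$ into the same construction, whereas the paper uses a union-bound argument valid for every $\lambda\in\Pi(X)$.

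One detail to add: the arcs from the rows of $T$ into the chain need capacity $[0,1]$, as the paper's arcs $\tau_k\to k$ have. Otherwise an integral flow could push two units through a single row, and the resulting flow would not be a matching.
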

\begin{proof}
Fix \(w \in \Delta(T)\).
Note that if \(\frac{r|T|}{n} \leq 1\), then any $\lambda\in \Pi(X)$ satisfies the desired bound. To see this,  note that
\[
\varphi_w(S)\le \sum_{i\in S} w_i,
\]
and hence,
\[
\EE_{\mu\sim\lambda}[\varphi_w(K_\mu \cap T)]
\le
\EE_{\mu\sim\lambda}\left[\sum_{i\in T} w_i \ \mathbb{I}[i\in K_\mu]\right]
=
\sum_{i\in T} w_i\,\PP_{\mu\sim\lambda}(i\in K_{\mu})
=
\frac{r}{n} \sum_{i\in T} w_i
 \le \frac{1}{|T|} \sum_{i\in T} w_i
\]
The second equality holds because each row of \(X\) sums to exactly \(\frac{r}{n}\), and \(\lambda\) implements \(X\); therefore, for every \(i\), the probability that agent \(i\) is matched is exactly \(\frac{r}{n}\). Hence, whenever  \(\frac{r|T|}{n} \leq 1\), then any $\lambda\in \Pi(X)$ satisfies $g(\lambda,w) \leq 0$.

For the remainder of the proof, assume that \(\frac{r|T|}{n} > 1\). Relabel the indices of \(T\) so that their weights are ordered as
\[
w_1 \ge w_2 \ge \cdots \ge w_{|T|}.
\]
The distribution $\lambda_w$ that we construct will depend solely on the ordering of $w$.

Next, for each $k\in \{1,\dots, |T|\}$, define \( \delta_k := w_k - w_{k+1} \) with the convention that \(w_{|T|+1}=0\), and set \(J_k := \{1,\dots,k\} \).
With this notation, every nonempty subset \(S \subseteq T\) satisfies
\begin{align}\label{eq:phi-Identity}
    \varphi_w(S)=\sum_{k=1}^{|T|} \delta_k\, \mathbb{I}[S\subseteq J_k].
\end{align}
Indeed, if \(q\) is the largest index in \(S\), then \(\varphi_w(S)=w_q\), since the weights are arranged in nonincreasing order. Also, \(S \subseteq J_k\) holds exactly when \(k \ge q\). Therefore,
\[
\sum_{k=1}^{|T|} \delta_k\,\mathbb{I}[S\subseteq J_k]
=
\sum_{k=q}^{|T|} \delta_k
=
\sum_{k=q}^{|T|}(w_k-w_{k+1})
=
w_q
=
\varphi_w(S),
\]
as required.

Substituting \Cref{eq:phi-Identity} into the expectation yields, for any \(\lambda\in \Pi(X)\),
\begin{align}\label{id:exp-eq-T}
    \EE_{\mu\sim\lambda}[\varphi_w(K_\mu \cap T)] &=   \sum_{k=1}^{|T|} \delta_k \ \PP_{\mu\sim\lambda} \!\left(K_{\mu}\cap T\neq\emptyset \text{ and } K_{\mu}\cap T\subseteq J_k\right).
\end{align}
Therefore, it is enough to find \(\lambda_w \in \Pi(X)\) such that
\begin{align}\label{ineq:T}
    \PP_{\mu\sim\lambda_w} \!\left(K_{\mu}\cap T\neq\emptyset \text{ and } K_{\mu}\cap T\subseteq J_k\right) \leq \frac{k}{|T|} \qquad \text{ for each } k\in \{1,...,|T|\}.
\end{align}
Indeed, once \Cref{ineq:T} is established, the lemma follows by substituting the bound to \Cref{id:exp-eq-T} as shown below,
 \begin{align*}
    \EE_{\mu\sim\lambda_w}[\varphi_w(K_\mu \cap T)] &\leq \sum_{k=1}^{|T|} \delta_k \frac{k}{|T|}
    = \frac{1}{|T|} \sum_{k=1}^{|T|} k \ \delta_k
    = \frac{1}{|T|} \sum_{k=1}^{|T|} w_k.
\end{align*}
Thus \(g(\lambda_w,w)\le 0\), as needed.

We have therefore reduced the lemma to constructing a distribution $\lambda_w$ that satisfies the  chain of bounds in \Cref{ineq:T}. We obtain such decomposition $\lambda_w$ of $X$ by a carefully constructed flow network with a feasible flow and subsequently applying \Cref{lem:flow}.

\begin{figure}[t]
    \centering
\begin{tikzpicture}[
    >=Stealth,
    every node/.style={font=\sffamily},
    nd/.style={
        circle, draw=black!70, fill=white,
        line width=0.8pt, minimum size=24pt,
        font=\sffamily\small
    },
    arr/.style={->, line width=0.7pt, black!70}, scale = 0.7
]

\def\srcX{-2}
\def\chainX{3}
\def\rowX{6}
\def\itemX{10}
\def\sinkX{13}

\def\vgap{1.8}

\node[nd] (s) at (\srcX, -4.5) {$s$};

\node[nd] (t1) at (\chainX, 0) {$\tau_1$};
\node[nd] (t2) at (\chainX, -\vgap) {$\tau_2$};
\node[nd] (t3) at (\chainX, -2*\vgap) {$\tau_3$};

\node[font=\sffamily\bfseries, text=black!60] at (\chainX, -3*\vgap + 0.3) {$\vdots$};
\node[nd] (tT) at (\chainX, -3*\vgap - 0.8) {$\tau_{|T|}$};

\draw[arr] (t1) -- (t2);
\draw[arr] (t2) -- (t3);
\draw[arr] (t3) -- ++(0, -1.3);

\draw[arr] (s) -- (t1);

\node[nd] (r1) at (\rowX, 0) {$1$};
\node[nd] (r2) at (\rowX, -\vgap) {$2$};
\node[nd] (r3) at (\rowX, -2*\vgap) {$3$};

\node[font=\sffamily\bfseries, text=black!60] (rdots) at (\rowX, -3*\vgap + 0.7) {$\vdots$};
\node[nd] (rT) at (\rowX, -3*\vgap - 0.8) {$|T|$};

\begin{scope}[on background layer]
    \node[rounded corners=6pt, draw=black!40, line width=0.7pt,
          inner xsep=10pt, inner ysep=10pt,
          fit=(r1)(r2)(r3)(rdots)(rT)] {};
\end{scope}

\draw[arr] (t1) -- (r1);
\draw[arr] (t2) -- (r2);
\draw[arr] (t3) -- (r3);
\draw[arr] (tT) -- (rT);

\node[font=\sffamily\bfseries, text=black!60] at (\rowX, -3*\vgap - 2.3) {$\vdots$};
\node[nd] (rn) at (\rowX, -3*\vgap - 3.5) {$n$};

\draw[arr] (s) -- (rn);

\node[nd] (c1) at (\itemX, 0.8) {$c_1$};
\node[nd] (c2) at (\itemX, -1.2) {$c_2$};

\node[font=\sffamily\bfseries, text=black!60] at (\itemX, -3.2) {$\vdots$};
\node[nd] (cmm2) at (\itemX, -5.2) {$c_{m\!-\!2}$};
\node[nd] (cmm1) at (\itemX, -7.0) {$c_{m\!-\!1}$};
\node[nd] (cm) at (\itemX, -3*\vgap - 3.5) {$c_m$};

\draw[arr] (r1) -- (c1);
\draw[arr] (r1) -- (c2);
\draw[arr] (r2) -- (c1);
\draw[arr] (r2) -- (c2);
\draw[arr] (r3) -- (c2);
\draw[arr] (rT) -- (c2);
\draw[arr] (rT) -- (cmm1);
\draw[arr] (rn) -- (cmm2);
\draw[arr] (rn) -- (cm);

\node[nd] (t) at (\sinkX, -2.5) {$t$};

\draw[arr] (c1) -- (t);
\draw[arr] (c2) -- (t);
\draw[arr] (cmm2) -- (t);
\draw[arr] (cmm1) -- (t);
\draw[arr] (cm) -- (t);

\end{tikzpicture}

    \caption{The flow network used in the proof of \Cref{lem:Single-T}. The interest set $T$ is highlighted by a rectangle}
    \label{fig:flowT}
\end{figure}

\medskip
\noindent\textit{Biased Flow Network.}
We construct a directed flow network as shown in \Cref{fig:flowT}. Formally, we
build a directed network with source $s$, sink $t$, chain nodes
$\tau_1,\dots,\tau_{|T|}$, row nodes $i$ for $i\in[n]$, and item nodes $c_j$ for
$j\in[m]$.
The arcs are given as follows:
\begin{itemize}[leftmargin=2em]
\item $s\to \tau_1$ with lower and upper capacities $\ell = \lfloor \frac{r}{n} |T| \rfloor $ and $u= \lceil \frac{r}{n} |T| \rceil$;
\item for each $k\in \{1,\dots,|T|-1\}$, an arc $\tau_k\to \tau_{k+1}$ with lower and upper capacities $\ell = \lfloor \frac{r}{n} (|T|-k) \rfloor $ and $u= \lceil \frac{r}{n} (|T|-k) \rceil$ ;
\item for each $k\in \{1,\dots,|T|\}$, an arc $\tau_k\to k$ with lower and upper capacities 0 and 1;
\item for each row $i\notin T$, an arc $s\to i$ with lower and upper capacities 0 and 1;
\item for each pair $(i,j)$ with $X_{ij}>0$, an arc $i\to c_j$ with lower and upper capacities 0 and 1;
\item for each item node $c_j$, an arc $c_j\to t$ with lower and upper capacities 0 and 1.
\end{itemize}
Note that lower and upper bounds on the flows on each arc are integral.
We now define a feasible fractional $s$-$t$ flow $f$ on this network with value $r$:
\[
f(s,\tau_1)=\frac{r}{n} |T|,
\qquad
f(\tau_k,\tau_{k+1})=\frac{r}{n}(|T|-k),
\qquad
f(\tau_k,k)=\frac{r}{n}
\quad \text{ for each }k\in \{ 1,...,|T|\},
\]
\[
f(s,i)=\frac{r}{n}
\qquad\text{for each } i\notin T,
\]
\[
f(i,c_j)=X_{ij},
\qquad
f(c_j,t)=\sum_{i=1}^n X_{ij}.
\]
Note that the flow $f$ is feasible since $0\leq \frac{r}{n}\leq 1$, and since each column sum of $X$ is at most 1, we have $0\leq \sum_{i=1}^n X_{ij}\leq 1$. Flow conservation holds since each row of $X$ sums to exactly $\frac{r}{n}$.
The total flow value is $\sum_{i\in [n]}\sum_{j\in [m]} X_{ij}=r$. Thus, the network and the flow satisfy the conditions of \Cref{lem:flow}, hence can be expressed as a distribution $\lambda_w$ over integral feasible flows of value $r$.

Now fix any integral feasible flow in this decomposition. This flow defines a matching of size $r$, since every row node has inflow at most $1$ and therefore sends flow along at most one row--item arc, while every item node has capacity $1$ on its outgoing arc to the sink and hence is incident to at most one matched row. Hence, the distribution $\lambda_w$ over the integral feasible flows of value $r$ also gives distribution over size $r$ matchings.

Moreover, this distribution over matchings implements \(X\) i.e., \(\lambda_w\in \Pi(X)\). Indeed, for every pair \((i,j)\) with \(X_{ij}>0\), the edge \((i,j)\) is included in the matching with probability \(f(i,c_j)=X_{ij}\), since the integral flows decompose the fractional flow.

Let $R_{k}:=\{k,...,|T|\}$ be the suffix of $T$ starting at index $k$. Observe that for any $\mu$ in the support of $\lambda_w$, the set of rows $K_\mu$ that are matched under $\mu$ satisfies
\begin{align}\label{eq:suffixT}
    |K_\mu \cap R_{k+1}| \in \left\{   \left\lfloor \frac{r}{n} (|T|-k) \right\rfloor , \left\lceil \frac{r}{n} (|T|-k) \right\rceil \right\} \quad \text{ for each } k\in \{0,...,|T|-1\}.
\end{align}
To see this, first consider $k\geq 1$, note that the arc $\tau_k\to \tau_{k+1}$ has lower and upper capacities $\ell = \lfloor \frac{r}{n} (|T|-k) \rfloor $ and $u= \lceil \frac{r}{n} (|T|-k) \rceil$. Since flow corresponding to $\mu$ is integral, the amount of flow on this arc must be an integer between these two bounds. Moreover, by flow conservation, this flow is exactly $|K_\mu \cap R_{k+1}|$ because each matched row contributes one unit of flow and each row can be matched to at most one item node.
When $k=0$, the arc $s\rightarrow \tau_1$ has  lower and upper capacities $ \lfloor \frac{r}{n} |T| \rfloor $ and $ \lceil \frac{r}{n} |T| \rceil$, by the same argument, the flow on this arc is exactly $|K_\mu \cap R_{1}|=|K_\mu \cap T|$. This establishes the identity \eqref{eq:suffixT}.

Recall that \(\frac{r|T|}{n} > 1\), and hence by  \eqref{eq:suffixT}, we see that $|K_\mu \cap T|\geq\lfloor \frac{r}{n} |T| \rfloor  \geq 1$. Thus, every matching $\mu$ in the support of $\lambda_w$ satisfies $K_{\mu}\cap T \neq \emptyset$. It follows that, for each $k\in \{1,...,|T|-1\}$, the following events are equivalent:
\[K_{\mu}\cap T\neq\emptyset \text{ and } K_{\mu}\cap T\subseteq J_k \iff |K_{\mu} \cap R_{k+1}|=0. \]

\noindent\textit{Verifying $\lambda_w$ satisfies \Cref{ineq:T}.} \Cref{ineq:T} is restated below.
\begin{align*}
    \PP_{\mu\sim\lambda_w} \!\left(K_{\mu}\cap T\neq\emptyset \text{ and } K_{\mu}\cap T\subseteq J_k\right) \leq \frac{k}{|T|} \qquad \text{ for each } k\in \{1,...,|T|\} \tag{\ref{ineq:T}}
\end{align*}
Note that \Cref{ineq:T} is trivial for \(k=|T|\), so we may assume that \(k\in\{1,\dots,|T|-1\}\). Define
\[
Y_{k+1}:=|K_\mu\cap R_{k+1}|.
\]
Note that this random variable takes values in $\left\{   \left\lfloor \frac{r}{n} (|T|-k) \right\rfloor , \left\lceil \frac{r}{n} (|T|-k) \right\rceil \right\}$ by identity \eqref{eq:suffixT}.
We now establish  \Cref{ineq:T} by considering the following two cases.

\medskip

\noindent\textit{Case 1: $\frac{r}{n} (|T|-k)\geq 1$.} In this case, we have that $Y_{k+1} = |K_\mu\cap R_{k+1}| \geq \lfloor \frac{r}{n} (|T|-k) \rfloor\geq 1$ for any matching $\mu$ in the support of $\lambda_w$. Hence,
\[
\PP_{\mu\sim\lambda_w} \!\left(K_{\mu}\cap T\neq\emptyset \text{ and } K_{\mu}\cap T\subseteq J_k\right) = \PP_{\mu\sim\lambda_w} \!\left( Y_{k+1}=0\right)=0.
\]

\medskip

\noindent\textit{Case 2:  $\frac{r}{n} (|T|-k)< 1$.}
Since \(\lambda_w\) implements \(X\), each row \(i\) is matched with probability exactly \(\frac{r}{n}\). As \(|R_{k+1}|=|T|-k\), it follows by linearity of expectation that
\(
\EE_{\mu\sim\lambda_w}[Y_{k+1}]
=
\EE_{\mu\sim\lambda_w}[|K_\mu\cap R_{k+1}|]
=
\frac{r}{n}(|T|-k).
\)
Furthermore, by identity \eqref{eq:suffixT} and the case distinction, we know that the random variable $Y_{k+1}$ takes values in $\{0,1\}$. Hence, we have $\PP_{\mu\sim\lambda_w} \!\left( Y_{k+1} =1 \right) = \EE_{\mu\sim\lambda_w}[|K_\mu\cap R_{k+1}|] =  \frac{r}{n}(|T|-k)$. We obtain,
\begin{align*}
    \PP_{\mu\sim\lambda_w} \!\left(K_{\mu}\cap T\neq\emptyset \text{ and } K_{\mu}\cap T\subseteq J_k\right) = \PP_{\mu\sim\lambda_w} \!\left( Y_{k+1}=0\right)
    = 1 - \frac{r}{n}(|T|-k)
    < 1- \frac{|T|-k}{|T|}
    =\frac{k}{|T|}.
\end{align*}
Here, the inequality follows from the assumption that \(\frac{r|T|}{n} > 1\).

Thus, in both of the cases \Cref{ineq:T} holds for the decomposition $\lambda_w$. This concludes the proof of the Lemma.
\end{proof}

\begin{proof}[Proof of \Cref{thm:fixed-T}]
We now return from the weighted statement to the original Hall-type condition.
We  show that $\lambda^*$ given in \Cref{prop:key-T} gives the desired distribution. By \Cref{prop:key-T}, we have $\lambda^* \in \Pi(X)$, and hence $X=\sum_{\mu}\lambda^*_{\mu}\,\mathbf{1}_{\mu}$.

Fix any subset \(J \subseteq T\), and let \(\widetilde{w} = \mathbf{1}_J\). That is, \(\widetilde{w}\) is the weight vector defined coordinatewise by
\[
\widetilde{w}_i =
\begin{cases}
1, & \text{if } i \in J,\\
0, & \text{if } i \notin J.
\end{cases}
\]
For this choice of weight vector, we have, for every $S \subseteq T$,
\[
\varphi_{\widetilde{w}}(S)=
\begin{cases}
1, & \text{if } S \neq \emptyset \text{ and } S \subseteq J,\\
0, & \text{otherwise}.
\end{cases}
\]
Therefore,
\begin{align*}
 \mathbb{P}_{\mu\sim\lambda^*}\!\left(K_{\mu}\cap T\neq\emptyset \text{ and } K_{\mu}\cap T\subseteq J\right) &=    \mathbb{E}_{\mu\sim\lambda^*}\!\left[\varphi_{\widetilde{w}}(K_\mu \cap T)\right] \\
&\le \frac{1}{|T|}\sum_{i\in T} \widetilde{w}_i \\
&= \frac{|J|}{|T|}.
\end{align*}
The inequality follows from \Cref{prop:key-T} applied to the weight vector $\widetilde{w}$. Since $J \subseteq T$ was arbitrary, this establishes the desired inequality.
\end{proof}

\subsubsection*{Multiple Disjoint Interest Sets}
We now extend the result to multiple disjoint interest sets, thereby establishing the Hall-type condition~(\ref{eqn:hall}). The proof is conceptually similar to the case of a single interest set, though it entails significant additional technical overhead in constructing the flow and applying the minimax argument. We therefore defer the proof to the appendix.

\begin{theorem}\label{thm:multi-T}
Let \(X\in[0,1]^{n\times m}\) be an arbitrary matrix, and let \(r\in\mathbb{Z}\) satisfy \(0<r<n\). Suppose that each row of \(X\) sums to exactly \(r/n\), and each column sums to at most one. Then, for any fixed family of pairwise disjoint sets \(\{T_1,\dots,T_{\ell}\}\) with \(T_i\subseteq [n]\), there exists a decomposition
\[
X=\sum_{\mu}\lambda_{\mu}\,\mathbf{1}_{\mu}
\]
into size-\(r\) matchings such that, for each \(i\in[\ell]\) and every  subset \(J\subseteq T_i\),
\[
\mathbb{P}_{\mu\sim\lambda}\!\left(K_{\mu}\cap T_i\neq\emptyset \text{ and } K_{\mu}\cap T_i\subseteq J\right)\le \frac{|J|}{|T_i|}.
\]
Here, \(K_\mu\) denotes the random variable corresponding to the set of rows in \([n]\) that are matched under \(\mu\).
\end{theorem}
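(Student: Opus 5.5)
We follow the single-set argument of \Cref{thm:fixed-T} with three ingredients: a weighted reformulation over a product of simplices, Sion's minimax theorem, and a biased flow network containing one chain gadget per set. For $w=(w^1,\dots,w^\ell)\in\prod_{i=1}^{\ell}\Delta(T_i)$ and $\lambda\in\Pi(X)$ we define
\[
G(\lambda,w):=\sum_{i=1}^{\ell} g^i(\lambda,w^i),\qquad g^i(\lambda,w^i):=\mathbb{E}_{\mu\sim\lambda}\!\left[\varphi_{w^i}(K_\mu\cap T_i)\right]-\frac{1}{|T_i|}\sum_{j\in T_i}w^i_j .
\]
Each $g^i$ is concave and continuous in $w^i$, as in \Cref{prop:key-T}, and linear in $\lambda$. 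Hence $G$ is concave in $w$ and linear in $\lambda$ on the compact convex set $\Pi(X)\times\prod_i\Delta(T_i)$, and \Cref{thm:Sion} applies. Suppose we show that for every $w$ some $\lambda_w$ satisfies $g^i(\lambda_w,w^i)\le 0$ for all $i$. Then $\min_\lambda\max_w G\le 0$, giving a $\lambda^*$ with $G(\lambda^*,w)\le 0$ for all $w$. Now take $w^i=\mathbf{1}_J$ for a fixed $J\subseteq T_i$ and $w^{i'}=0$ for $i'\ne i$; this is allowed because $0\in\Delta(T_{i'})$, after rescaling by homogeneity. Then $G(\lambda^*,w)=g^i(\lambda^*,\mathbf{1}_J)\le 0$, which is exactly the Hall-type inequality for the pair $(T_i,J)$, just as in the proof of \Cref{thm:fixed-T}. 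The trivial case $r|T_i|/n\le 1$ can be treated per set: the bound $g^i\le 0$ then holds for every $\lambda\in\Pi(X)$.

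For a fixed $w$, the telescoping identity \eqref{eq:phi-Identity} reduces everything, for each $i$ with $r|T_i|/n>1$, to the chain bounds \eqref{ineq:T} with respect to the ordering of $T_i$ by $w^i$. We enforce these bounds with a single network. From the source $s$, for each such $T_i$ we attach its own chain $\tau^i_1\to\tau^i_2\to\dots\to\tau^i_{|T_i|}$, with the rows of $T_i$ sorted locally by $w^i$. Arc $s\to\tau^i_1$ gets bounds $\lfloor\frac rn|T_i|\rfloor,\lceil\frac rn|T_i|\rceil$, and arc $\tau^i_k\to\tau^i_{k+1}$ gets bounds $\lfloor\frac rn(|T_i|-k)\rfloor,\lceil\frac rn(|T_i|-k)\rceil$. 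Each $\tau^i_k$ feeds row $k$ of $T_i$ with capacity $[0,1]$. Every row outside all the chained sets gets an arc $s\to$ row with capacity $[0,1]$. The row–item and item–sink arcs are as before.

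The fractional flow $f$ defined as before is feasible, since the chains are disjoint because the $T_i$ are disjoint. \Cref{lem:flow} then decomposes $f$ into integral flows of value $r$, that is, into size-$r$ matchings implementing $X$. In every such matching, $|K_\mu\cap R^i_{k+1}|$ lies in the floor/ceiling pair for every $i$ and $k$ simultaneously. The Case 1/Case 2 analysis of \Cref{lem:Single-T} then applies verbatim to each $i$ separately, because it uses only this support property and the marginal $\Pr(j\in K_\mu)=r/n$, both of which hold for all sets at once.

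The main point to check is that the coupled network is still a valid instance of \Cref{lem:flow}. The bounds at $s$ must be consistent with total value exactly $r$. The source's outflow is not pinned arc by arc; only the total value $r$ is imposed. Since $f$ itself is a feasible flow of value $r$ satisfying all the integral lower and upper bounds, feasibility of the polytope is witnessed, and the integral decomposition follows from total unimodularity of the network matrix. The rest of the multi-set case is bookkeeping: making sure the local sorting within each $T_i$ is the only place $\lambda_w$ depends on $w$, and that the per-set chain constraints do not interfere, which holds precisely because of disjointness.
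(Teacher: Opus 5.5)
Your proposal is correct and follows essentially the same route as the paper: the product-simplex weighted reformulation $G=\sum_i g^i$, Sion's minimax theorem, the per-set trivial bound when $r|T_i|/n\le 1$, and one biased flow network with a parallel chain gadget for each remaining set, sorted locally by $w^i$, followed by the same Case~1/Case~2 analysis. The only cosmetic difference is that the paper also puts sets with $\|w^i\|_1=0$ in the easy family, whereas you chain every set with $r|T_i|/n>1$ regardless of its weight; this changes nothing.
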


\section{Easy Case: At Most \texorpdfstring{$n$}{n} Objective Chores}
\label{sect:easycase}

The case with at most $n$ objective chores after bundling is handled via a different approach.
Consider the instance with a set of subjective goods $G$ and a set of objective chores $Z$.
For each subjective good $g\in G$ after bundling, it is referred to as a \emph{meta-subjective good} if it is packed, that is, it contains multiple subjective goods from the original instance.
We further apply the following update rules iteratively, until neither is applicable.
Note that some meta-subjective goods in $G$ may get unbundled.
\begin{itemize}
    \item[(i)] Let $M_j=\{g_1,\ldots,g_K\}$ be a meta-subjective good in $G$ for $K\ge 2$, where each $g_k$ is a subjective good from the original instance.
    While there exists an agent $i\in N$ with $v_i(g) \ge 0$ such that for every $g\in M_j$ with $v_i(g)\ge 0$, we have $v_i(M_j\setminus\{g\})\ge 0$, arbitrarily remove $g$ with $v_i(g)\ge 0$ from $M_j$ and add $g$ to $G$.
    \item[(ii)] While $Z\neq\emptyset$, if there exists an agent $i\in N$ and an objective chore $c\in Z$ such that $v_i(G_i\cup \{c\})\ge 0$ for $G_i=\{g\in G:v_i(g)\ge 0\}$, we find a minimal subset $S$ of $G_i$ such that $v_j(S\cup\{c\})\ge 0$ for some agent $j$ (which may or may not be $i$).
    We consider $S\cup\{c\}$ as a single packed subjective good $g'$, that is, we remove $S$ and $c$ respectively from $G$ and $Z$, and add the $g'$ to $G$.
\end{itemize}

Note that the above process always terminates as objective chores are only consumed throughout the update.
Additionally, the number of objective chores remains at most $n$.
We may easily verify that the resulting instance satisfying the following proposition.
\begin{proposition}\label{prop:le_n}
    The above process outputs a set of subjective goods $G$ and a set of objective chores $Z$ that have the following properties:
    \begin{enumerate}
        \item each meta-subjective good $M_j\in G$ is \emph{good-minimal}, that is, for each agent $i\in N$, there exists $g\in M_j$ such that $v_i(M_j\setminus\{g\})<0$, where $g$ is a subjective good from the original instance;
        \item when $Z\neq\emptyset$, the instance is a \emph{small-goods-only} instance, that is, for each agent $i\in N$ and each objective chore $c\in Z$, it holds that $$-v_i(c)> \sum_{g\in G:v_i(g)\ge 0}v_i(g).$$
        Note that chore-maximality of each subjective good $g\in G$ is implied from above, that is, $v_i(g\cup\{c\})<0$ for each $i\in N$ and each $c\in Z$.
    \end{enumerate}
\end{proposition}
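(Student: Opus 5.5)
The proof reads both properties off the two termination conditions: when the process stops, neither update rule applies, and the negation of each rule's trigger is exactly the corresponding property. Before doing that, I would check three bookkeeping facts.

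\emph{Termination.} Rule (ii) strictly decreases $|Z|$, so it fires at most $|Z|\le n$ times. Between two applications of rule (ii), each application of rule (i) moves one original item out of a meta-good. This strictly increases the number of elements of $G$, which is bounded by the number of original items, so rule (i) also fires only finitely often.

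\emph{Every element of $G$ stays a subjective good.} In rule (i), the removed $g$ satisfies $v_i(g)\ge 0$, and the remainder satisfies $v_i(M_j\setminus\{g\})\ge 0$. In rule (ii), the new bundle satisfies $v_j(S\cup\{c\})\ge 0$ by the choice of $j$.

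\emph{The chore count stays at most $n$.} No update adds objective chores.

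\emph{Property 1.} Take a meta-good $M_j$ with $K\ge 2$ and any agent $i$. Since rule (i) does not apply, one of two cases holds.
\begin{itemize}
\item Agent $i$ values some $g\in M_j$ non-negatively. Then the negated trigger of rule (i) gives a $g\in M_j$ with $v_i(g)\ge 0$ and $v_i(M_j\setminus\{g\})<0$, which is what good-minimality asks for.
\item Agent $i$ values every item of $M_j$ negatively. Since $K\ge 2$, for any $g\in M_j$ the set $M_j\setminus\{g\}$ is a non-empty sum of negative values, so $v_i(M_j\setminus\{g\})<0$ again.
\end{itemize}
Bundles that shrink to a single item are no longer meta-goods, so the property is vacuous for them. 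This case split, and in particular noticing that the all-negative case has to be handled, is the only delicate point.

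\emph{Property 2.} Suppose $Z\neq\emptyset$. Since rule (ii) does not apply, every agent $i$ and every $c\in Z$ satisfy $v_i(G_i\cup\{c\})<0$, where $G_i=\{g\in G: v_i(g)\ge 0\}$. Rearranging gives
\[
-v_i(c)>\sum_{g\in G:\,v_i(g)\ge 0} v_i(g),
\]
which is the small-goods-only condition.

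\emph{Chore-maximality.} Fix $g\in G$, an agent $i$ and $c\in Z$. If $v_i(g)\ge 0$, then $g\in G_i$, and since every other element of $G_i$ is non-negative for $i$,
\[
v_i(g\cup\{c\})\le v_i(G_i)+v_i(c)<0 .
\]
If $v_i(g)<0$, then $v_i(g\cup\{c\})<0$ directly, because $v_i(c)<0$.
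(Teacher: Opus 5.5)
Your proof is correct and follows the paper's approach. The paper's proof is just the one-line observation that if property 1 or 2 failed, rule (i) or rule (ii) would still apply. You carry out the same idea more carefully. You handle agents who value every item of $M_j$ negatively, which the paper glosses over. In that case the definition wants $g$ to be an original subjective good, so you should say explicitly that $M_j$ contains one: your bookkeeping shows some agent values $M_j$ non-negatively, hence some item of $M_j$ is non-negative for that agent. You also verify termination and the chore-maximality remark, which the paper only asserts.
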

\begin{proof}
    If 1 is not satisfied, we should continue to apply rule (i); if 2 is not satisfied, we should continue to apply rule (ii).
\end{proof}

We define a randomized allocation as follows.
Let $|Z|=t$ and we label the objective chores in $Z$ in an arbitrary order $c_1,\ldots,c_t$.
Firstly, uniformly at random sample a set $T$ of $t$ agents from $\binom{n}{t}$ possible outcomes ($T=\emptyset$ if $Z=\emptyset$), and uniformly at random allocate the $t$ objective chores in $Z$ to the agents in $T$ such that each agent receives exactly one objective chore from $t!$ possible outcomes.
We assume without loss of generality that the agent receiving item $c_i$ has index $i$.
In the second step, we implement a \emph{serial dictatorship algorithm} for agents in $T$ in the order $1,\ldots,t$: let each agent $i\in T$ further receive all the remaining subjective goods which $i$ values non-negatively.
In other words, let $G_i=\{g\in G:v_i(g)\geq0\}$, each agent $i$ is further allocated $G_i\setminus\bigcup_{i'=1}^{i-1}G_{i'}$.
In the last step, a PS-lottery is implemented among the agents $N\setminus T$ and the remaining subjective goods, with the dummy items added to ensure no one receives an item with a negative value.

The following lemma immediately implies Theorem~\ref{thm:main} for the case of $|Z|\le n$.
\begin{lemma}
    The above randomized allocation is \exan EF and \expo EF$1$.
\end{lemma}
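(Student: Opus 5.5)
The plan is to prove \expo EF$1$ by a case analysis on whether the two agents belong to the sampled set $T$, and \exan EF by a coupling that pairs each outcome with the one where agents $i$ and $j$ swap roles.

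\textbf{\Expo EF$1$.} Fix a realized allocation $A$ and two agents $i,j$. Every subjective good an agent receives is one she values non-negatively (dictators take only $G_i$; PS is padded with dummies). We use the small-goods-only property of Proposition~\ref{prop:le_n} in the form $v_i(c)+\sum_{g\in G: v_i(g)\ge 0}v_i(g)<0$ for every chore $c$.
\begin{itemize}
\item If $j\in T$, this gives $v_i(A_j)<0$. If also $i\in T$, then $v_i(A_i\setminus\{c_i\})\ge 0>v_i(A_j)$, where $c_i$ is an original objective chore. If $i\notin T$, then $v_i(A_i)\ge 0>v_i(A_j)$.
\item If $i\in T$ and $j\notin T$, then when $i$ acted as dictator she took every remaining good of $G_i$. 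So $A_j$ contains no good that $i$ values non-negatively, and $v_i(A_j)\le 0\le v_i(A_i\setminus\{c_i\})$.
\item If $i,j\notin T$, Proposition~\ref{prop:ps-lottery} for the PS-lottery gives $v_i(A_i)\ge v_i(A_j\setminus\{M\})$ for the meta-good $M$ that $j$ receives in the first round. Removing a meta-good is not removing one original item, so we use good-minimality. If $M$ is packed, pick $g'\in M$ with $v_i(M\setminus\{g'\})<0$. Then
\[v_i(A_j\setminus\{g'\})=v_i(A_j\setminus\{M\})+v_i(M\setminus\{g'\})<v_i(A_i).\]
\end{itemize}
The case $Z=\emptyset$ is the last case with $T=\emptyset$.

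\textbf{\Exan EF.} Each agent receives each chore with probability $1/n$ by symmetry, so the chores contribute equally to $v_i(X_i)$ and $v_i(X_j)$; it suffices to compare subjective goods. Let $\sigma'$ be the outcome obtained from $\sigma$ by exchanging the roles of $i$ and $j$, meaning their membership in $T$, their chores and their dictator positions. This map is a measure-preserving involution, so it suffices to show
\[v_i(A_i^\sigma)+v_i(A_i^{\sigma'})\ge v_i(A_j^\sigma)+v_i(A_j^{\sigma'})\]
for the goods parts.
\begin{itemize}
\item If both agents lie outside $T$, the swap does not change the residual instance, and ex-ante EF of PS on that residual instance gives the claim after conditioning on $T$.
\item If both lie in $T$ at positions $p<q$, let $R$ be the set of goods remaining before position $p$. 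Agents before $p$ and after $q$ take the same sets in $\sigma$ and $\sigma'$, so the PS phase is identical. In $\sigma$, where $i$ acts first, $v_i(A_j)\le 0\le v_i(A_i)$. In $\sigma'$, $A_j=G_j\cap R$ and $A_i=(G_i\cap R)\setminus G_j$. The right-hand side of the pair inequality is at most
\[2v_i\bigl((G_j\cap R)\setminus G_i\bigr)+v_i(G_i\cap G_j\cap R)\le v_i(G_i\cap R),\]
which is at most the left-hand side.
\end{itemize}

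\textbf{Main obstacle.} The case $i\in T$, $j\notin T$ (swapped with $j\in T$, $i\notin T$) is where I expect the difficulty. Here the swap changes which goods the dictator removes, so the later dictators' takes and the PS residual differ between $\sigma$ and $\sigma'$. I would first try to show that in $\sigma$ agent $i$'s dictator haul $G_i\cap R$ weakly dominates, in $i$'s valuation, everything $i$ could get from PS in $\sigma'$, because PS only distributes goods of $G_i$ left in the residual. Agent $j$'s haul $G_j\cap R$ in $\sigma'$ is worth at most $v_i(G_i\cap G_j\cap R)$ to $i$. What $j$ gets from PS in $\sigma$ contains no good that $i$ values non-negatively. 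If that pairwise bound fails because the later dictators' takes differ, I would refine the coupling to also match the orders of the remaining dictators and use PS ex-ante EF within each residual instance.
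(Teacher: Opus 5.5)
Your ex-post EF$1$ argument is correct and matches the paper's, and swapping the roles of $i$ and $j$ is also the paper's coupling. The ex-ante part has a gap: the case $i\in T$, $j\notin T$ (paired with $j\in T$, $i\notin T$) is left as a plan, and the plan targets the wrong inequality. Comparing $i$'s dictator haul $G_i\cap R$ with what $i$ gets from PS in $\sigma'$ relates two terms on the same side of the pair inequality, and refining the coupling is unnecessary.

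Write $P_k^{\sigma}$ for agent $k$'s random PS share under $\sigma$, and $R$ for the goods remaining before position $p$. The needed inequality is
\[
v_i(G_i\cap R)+\mathbb{E}\bigl[v_i(P_i^{\sigma'})\bigr]\ \ge\ \mathbb{E}\bigl[v_i(P_j^{\sigma})\bigr]+v_i(G_j\cap R).
\]
It follows from three sign facts, none of which depends on how the later dictators' takes or the PS residual change:
\begin{itemize}
\item $v_i(P_i^{\sigma'})\ge 0$, because PS gives out only non-negatively valued items;
\item $v_i(P_j^{\sigma})\le 0$, because $i$ removed every remaining good of $G_i$ before the PS phase;
\item $v_i(G_j\cap R)\le v_i(G_i\cap G_j\cap R)\le v_i(G_i\cap R)$.
\end{itemize}
The paper packages this differently. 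It first observes that, conditional on any outcome of steps 1--2, the PS step is EF: agents in $T$ get nothing from it and value everything it distributes non-positively, and agents outside $T$ are covered by PS ex-ante EF. The coupling is then needed only for steps 1--2, where $j$ gets nothing in $\sigma$ and $i$ gets nothing in $\sigma'$.

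There is also a slip in the case $i,j\in T$ at positions $p<q$. Suppose some dictators sit strictly between $p$ and $q$. Then in $\sigma'$ agent $i$ gets only a subset of $(G_i\cap R)\setminus G_j$. In $\sigma$ agent $j$ gets a subset of $(G_j\cap R)\setminus G_i$, all of whose items are negative to $i$, so $v_i(A_j^{\sigma})\ge v_i\bigl((G_j\cap R)\setminus G_i\bigr)$. That is the wrong direction for your displayed bound. The conclusion still holds via the bound $v_i(A_j^{\sigma})\le 0$ you stated just before: the right-hand side is at most $v_i(G_j\cap R)\le v_i(G_i\cap R)\le v_i(A_i^{\sigma})+v_i(A_i^{\sigma'})$.

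Your global symmetry treatment of the chores, where each agent receives each chore with probability $1/n$, is a valid alternative to the paper's pairwise bookkeeping.
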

\begin{proof}
We first show that each of the above integral allocations $(A_1,\ldots,A_n)$ satisfies \expo EF1, which follows straightforwardly from the following observations and the property of PS-lottery.
\begin{enumerate}
    \item For each $i\in T$ who receives an objective chore $c_i$, we have $v_i(A_i\setminus\{c_i\})\geq0$ since $c_i$ is the only negatively-valued item.
    Note that each objective chore in the updated instance is also an objective chore in the original instance.
    \item For each $i\in N\setminus T$, we have $v_i(A_i)\geq 0$ since $i$ only receives subjective goods with non-negative value during the PS-lottery.
    \item For each $i\in N$ and $j\in T$, we have $v_i(A_j)<0$, due to the small-goods-only property in Proposition~\ref{prop:le_n}.
    \item For each $i\in T$ and $j\in N\setminus T$, we have $v_i(A_j)<0$ as all non-negatively-valued subjective goods for agent $i$ have been allocated in the second step (the serial dictatorship algorithm).
\end{enumerate}
For two agents in $T$, the EF1 property is guaranteed by 1 and 3.
For two agents $i$ and $j$ in $N\setminus T$, guaranteed by the PS-lottery, we have $v_i(A_i)\ge v_i(A_j\setminus\{g\})$ for some $g\in A_j$, where $g$ is a subjective good in the updated instance (after iteratively applying (i) and (ii) described at the beginning of this section).
If $g$ is also a subjective good in the original instance, EF$1$ trivially holds.
Otherwise, if $g$ is a meta-subjective good, there exists a subjective good $g'\in g$ from the original instance such that $$v_i(A_j\setminus\{g'\})= v_i(A_j\setminus \{g\})+v_i(g\setminus\{g'\})<v_i(A_j\setminus \{g\})\le v_i(A_i),$$
where the first inequality holds due to good-minimality of each meta-subjective good.
An agent in $T$ will not EF1-envy an agent in $N\setminus T$ because of 1 and 4.
An agent in $N\setminus T$ will not envy an agent in $T$ due to 2 and 3.

It then remains to show \exan EF.
Notice that \exan EF among agents in $N\setminus T$ is known for the PS-lottery in the third step.
Since each item allocated to $N\setminus T$ in the third step has a negative value to each agent in $T$ (due to the serial dictatorship nature of the second step), EF is guaranteed for all agents in the third step.
It then suffices to show that \exan EF is satisfied after the first two steps of the algorithm.
We consider two arbitrary agents $i$ and $j$ and show that $i$ will not envy $j$ in the \exan sense after the first two steps.
In every outcome where both $i$ and $j$ do not belong to $T$, \exan EF is satisfied as both of them receive no item at all.
For the remaining possible outcomes, we use a coupling argument that ``pairs'' the outcomes (notice that the probabilities for all the outcomes are equal) by groups of two, and prove that agent $i$ will not envy agent $j$ in expectation for every pair of outcomes.

We first consider the outcome where both $i$ and $j$ belong to $T$, and agent $i$ and $j$ receive objective chores $c_{i}$ and $c_{j}$ respectively in the first step.
This outcome is paired with the outcome where the ranking of $i$ and $j$ is swapped while the remaining part of the allocation in the first step is the same, i.e., the outcome where $i$ receives $c_{j}$ and $j$ receives $c_{i}$ in the first step.
We assume $i<j$ without loss of generality when considering these two outcomes.
For the objective chore received, the average utility for agent $i$ is $\frac12(v_i(c_{i})+v_i(c_{j}))$, and this is also how agent $i$ views agent $j$'s allocation in the first step.
Thus, \exan EF holds for now.
Next, let $H_i$ be the set of all remaining subjective goods that agent $i$ values non-negatively after agents who have received $c_1,\ldots,c_{i-1}$ take their non-negatively-valued subjective goods in the second step.
By our algorithm, in the outcome where agent $i$ receives $c_{i}$, agent $i$ receives all items in $H_i$ and all items received by agent $j$ in the second step are chores to agent $i$; in the paired outcome where agent $j$ receives $c_{i}$, agent $j$ receives only a subset of $H_i$ in agent $j$'s perspective and the remaining items received by agent $j$ have negative values in agent $i$'s perspective.
The average utility for agent $i$ is no less than $\frac12 v_i(H_i)$, and the average utility that agent $i$ views agent $j$ is no more than $\frac12v_i(H_i)$.
Therefore, \exan EF also holds for the second step.

We finally consider the case where one of $i,j$ belongs to $T$.
The outcome where agent $i$ receives $c_{i}$ and agent $j$ receives nothing in the first round is paired with the outcome where agent $i$ receives nothing and agent $j$ receives $c_{i}$, where the allocation for the remaining agents in the first step is fixed.
The analysis is similar to before.
For the first step, the average utility for agent $i$ is $\frac12 v_i(c_{i})$, and agent $i$ views agent $j$'s bundle in the same way.
For the second step, if agent $i$ receives $c_{i}$, agent $i$ will further receive a non-negatively-valued bundle $H_i$, and all items received by agent $j$ are chores to agent $i$; if agent $i$ receives nothing, agent $j$ receives only a subset of $H_i$ which are goods in agent $j$'s perspective and the remaining items received by agent $j$ are chores in agent $i$'s perspective.
This concludes \exan EF.
\end{proof}

\section{Discussion}

In this paper, we have studied the problem of fairly allocating a mix of goods and chores among agents who have additive utilities over the set of indivisible items.
We show that there always exists a randomized allocation (i.e., a lottery over deterministic allocation) that simultaneously satisfies \exan \EF{} and \expo \EFOne.

All components of our construction can be implemented in polynomial time except for computing the probabilistic Hall-type decomposition in Section~\ref{sect:hall}.  A direct implementation that explicitly enumerates the relevant matchings and constraints gives an exponential-time algorithm.  On the other hand, the decomposition's minimax formulation and its polynomial-time network-flow subproblem for each fixed weight vector suggest that an oracle-based or more compact implementation may yield a polynomial-time algorithm.  Establishing such an implementation is beyond the scope of this work, whose focus is the compatibility of the ex-ante and ex-post fairness guarantees.

An interesting direction is to explore further avenues for applying our techniques. Given the generality of the probabilistic Hall-type decomposition, we believe our framework has additional applications both within fair division and beyond. In particular, since the decomposition allows us to choose an arbitrary set and bound the concentration of matchings over any of its subsets, it has potential applications to a wide range of correlated matching problems.

Our work focuses exclusively on fairness.
In future research, an immediate next step could be understanding the interplay between fairness and economic efficiency and charting the boundary between impossibilities and positive results when applying the BoBW approach.
Another interesting direction is to consider agents' utility functions beyond additive utilities.
Nevertheless, a solid understanding for the goods- and chores-only settings would be the first step.
Going beyond additive utilities, it is not known whether \exan \EF{} and \expo \EFOne are compatible or not even for the goods-only or chores-only setting.

It would also be interesting to consider unequal entitlements of the agents and study the weighted setting of mixed goods and chores.
As discussed in Section~\ref{sec:related}, for goods- and chores-only settings, weighted \EF{} is compatible with a version of its relaxation.
Even considering only \expo fairness in the weighted setting, it is intriguing if weighted \EFOne (with proper adaptation for mixed goods and chores) or its relaxation can always be satisfied~\citep[Open problem~8]{Suksompong25}.

\section*{Acknowledgements}

This work was partially supported
by the NSF-CSIRO grant on ``Fair Sequential Collective Decision Making'',
by the ARC Laureate Project FL200100204 on ``Trustworthy AI'',
and by JST ERATO Grant Number JPMJER2301.

\bibliographystyle{plainnat}
\bibliography{bibliography}

\appendix
\section{Omitted Proofs}
\subsection*{Proof of \Cref{lem:flow}}

    Let \(B\in\mathbb{R}^{V\times E}\) be the node-arc incidence matrix of the directed network \(N\), defined by
\[
B_{v,a}=
\begin{cases}
1,& \text{if } a \text{ enters } v,\\
-1,& \text{if } a \text{ leaves } v,\\
0,& \text{otherwise}.
\end{cases}
\]
Let \(b\in\mathbb{Z}^V\) be the demand vector given by
\[
b_v=
\begin{cases}
-r,& v=s,\\
r,& v=t,\\
0,& v\in V\setminus\{s,t\}.
\end{cases}
\]
Then
\[
\mathcal{F}(\mathcal{N},\ell,u,r)=\{f\in\mathbb{R}^E : Bf=b,\ \ell\le f\le u\}.
\]

It is a standard fact that the node-arc incidence matrix \(B\) of a directed graph is totally unimodular. Since \(b,\ell,u\) are integral, the polyhedron $P=\{f\in\mathbb{R}^E : Bf=b,\ \ell\le f\le u\}$ is therefore integral.

\subsection{Proof of \texorpdfstring{\Cref{thm:multi-T}}{the multiple-set theorem}}\label{append:proof-thm-multi-T}

The proof closely parallels the single interest set argument in \Cref{thm:fixed-T}, so we streamline the exposition and focus on the new notation and technical details.

Let \(\{T_1,\dots,T_{\ell}\}\) be a family of pairwise disjoint interest sets, where \(T_i\subseteq [n]\) for each \(i\in[\ell]\). For each \(i\in[\ell]\) and each weight vector \(w^i\in \RR_+^{T_i}\), define \(\varphi_{w^i}: 2^{T_i}\rightarrow \RR_+\) by
\[
\varphi_{w^i}(S) :=
\begin{cases}
\min_{j \in S} w^i_j, & \text{if } S \neq \emptyset, \\
0, & \text{otherwise}.
\end{cases}
\qquad \text{for all } S \subseteq T_i.
\]

The Hall-type condition for the family \(\{T_1,\dots,T_\ell\}\)  can be obtained by  proving that there exists \(\lambda\in \Pi(X)\) such that
\begin{align}\label{eq:multi-T}
\sum_{i=1}^\ell \EE_{\mu\sim\lambda}\left[\varphi_{w^i}(K_\mu \cap T_i)\right]
\le
\sum_{i=1}^\ell \frac{1}{|T_i|} \sum_{j \in T_i} w^i_j,
\quad \text{for every collection } (w^1,\dots,w^\ell) \in \prod_{i=1}^\ell \RR_+^{T_i}.
\end{align}

To see this, fix any \(i\in[\ell]\) and any subset \(J\subseteq T_i\). Let \(\widetilde{w}^i = \mathbf{1}_J\), and let \(\widetilde{w}^k=\mathbf{0}\) for every \(k\neq i\). Then, for every \(S \subseteq T_i\),
\[
\varphi_{\widetilde{w}^i}(S)=
\begin{cases}
1, & \text{if } S \neq \emptyset \text{ and } S \subseteq J,\\
0, & \text{otherwise}.
\end{cases}
\]
Hence,
\begin{align*}
 \mathbb{P}_{\mu\sim\lambda}\!\left(K_{\mu}\cap T_i\neq\emptyset \text{ and } K_{\mu}\cap T_i\subseteq J\right) &= \mathbb{E}_{\mu\sim\lambda}\!\left[\varphi_{\widetilde{w}^i}(K_\mu \cap T_i)\right] \\
&\le \frac{1}{|T_i|}\sum_{j\in T_i} \widetilde{w}^i_j \\
&= \frac{|J|}{|T_i|}
\end{align*}
where the inequality follows from \Cref{eq:multi-T} applied to the collection \((\widetilde{w}^1,\dots,\widetilde{w}^\ell)\). Since $i\in [\ell]$ and \(J \subseteq T_i\) was arbitrary, this establishes statement of \Cref{thm:multi-T}. Thus, it remains to prove the existence of a distribution \(\lambda \in \Pi(X)\) satisfying \Cref{eq:multi-T}.

For \(\lambda \in \Pi(X)\) and \(w=(w^1,\dots,w^\ell) \in \prod_{i=1}^\ell \RR_+^{T_i}\), define
\[
G(\lambda,w) := \sum_{i\in [\ell]} \left( \EE_{\mu\sim\lambda} [ \varphi_{w^i} (K_\mu \cap T_i)  ] - \frac{1}{|T_i|} \sum_{j\in T_i} w_j^i \right).
\]
Then \Cref{eq:multi-T} is equivalent to the statement that \(\min_{\lambda}\max_w G(\lambda,w)\le 0\).

For each \(i\in[\ell]\), let
\[
g^{i}(\lambda,w^i):= \EE_{\mu\sim\lambda} [ \varphi_{w^i} (K_\mu \cap T_i)  ] - \frac{1}{|T_i|} \sum_{j\in T_i} w_j^i.
\]
Then \(G(\lambda,w) = \sum_{i\in[\ell]} g^{i}(\lambda,w^i)\). Moreover, for every \(c\ge 0\),
\[
g^{i}(\lambda, c w^i)= c\cdot g^{i}(\lambda, w^i).
\]
Thus each summand is positively homogeneous of degree one, and it is enough to restrict attention to the compact convex domain
\[
\Delta\coloneqq \prod_{i=1}^{\ell} \Delta(T_i),
\qquad
\Delta (T_i)\coloneqq \left\{\,w^{i}\in \bbR_+^{T_i} :
\sum_{j\in T_i} w_j^{i}\le 1\,\right\}.
\]

By the same reasoning as in \Cref{prop:key-T}, for each \(i\in[\ell]\) and each fixed \(\lambda_0 \in \Pi(X)\), the map \(w^i \mapsto g^{i}(\lambda_0,w^i)\) is continuous and concave. Likewise, for each fixed \(w_0^i\), the map \(\lambda \mapsto g^{i}(\lambda,w_0^i)\) is linear. Consequently, for fixed \(w_0=(w_0^1,\dots,w_0^\ell)\), the map \(\lambda \mapsto G(\lambda, w_0)\) is linear; and for fixed \(\lambda_0\), the map \(w\mapsto G(\lambda_0,w)\) is concave because each \(g^{i}(\lambda_0,w^i)\) is concave in \(w^i\). Therefore, \(G(\lambda,w):\Pi(X)\times \Delta \rightarrow \RR\) satisfies the hypotheses of Sion's minimax theorem. It follows that, to establish \Cref{eq:multi-T}, it is enough to prove
\[
\max_{w\in \Delta} \min_{\lambda \in \Pi(X)} G(\lambda,w) \leq 0.
\]
Equivalently, for every \(w\in \Delta\), there exists a decomposition \(\lambda_w\) such that \(G(\lambda_w,w)\le 0\). We prove a slightly stronger statement in the following lemma, which then implies \Cref{eq:multi-T} and hence \Cref{thm:multi-T}.

\begin{lemma}\label{lem:MultiT}
For each fixed $w=(w^1,...,w^\ell) \in \Delta$, there exists $\lambda_w \in \Pi(X)$ such that
\[
g^{i}(\lambda_w,w^i)\leq 0 \quad \text{ for every } i\in [\ell].
\]
\end{lemma}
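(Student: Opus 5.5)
We mirror the proof of \Cref{lem:Single-T}, running one chain gadget per interest set in parallel, and construct a single $\lambda_w$ that depends only on the orderings of the weight vectors $w^i$. Call $T_i$ \emph{small} if $\frac{r}{n}|T_i|\le 1$ and \emph{large} otherwise.

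For a small $T_i$, the argument of \Cref{lem:Single-T} applies to \emph{any} $\lambda\in\Pi(X)$. Since $\varphi_{w^i}(S)\le\sum_{j\in S}w^i_j$ and every row is matched with probability exactly $r/n$, we get $g^i(\lambda,w^i)\le 0$. So small sets impose no constraint, and in the network their rows are treated like rows outside every $T_i$: they get an arc $s\to j$ with capacities $[0,1]$.

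For each large $T_i$, relabel its elements locally as $j^i_1,\dots,j^i_{|T_i|}$ with $w^i$ nonincreasing, and put $J^i_k=\{j^i_1,\dots,j^i_k\}$ and $R^i_{k}=\{j^i_k,\dots,j^i_{|T_i|}\}$. Exactly as in \eqref{eq:phi-Identity} and \eqref{id:exp-eq-T}, $g^i(\lambda,w^i)\le 0$ reduces to the chain of bounds
\[
\PP_{\mu\sim\lambda}\bigl(K_\mu\cap T_i\neq\emptyset \text{ and } K_\mu\cap T_i\subseteq J^i_k\bigr)\le \tfrac{k}{|T_i|}\quad\text{for all }k.
\]
The network then has:
\begin{itemize}
\item for each large $T_i$, chain nodes $\tau^i_1,\dots,\tau^i_{|T_i|}$;
\item an arc $s\to\tau^i_1$ with capacities $\lfloor\frac rn|T_i|\rfloor$ and $\lceil\frac rn|T_i|\rceil$;
\item arcs $\tau^i_k\to\tau^i_{k+1}$ with capacities $\lfloor\frac rn(|T_i|-k)\rfloor$ and $\lceil\frac rn(|T_i|-k)\rceil$;
\item arcs $\tau^i_k\to j^i_k$ with capacities $[0,1]$.
\end{itemize}
The row--item and item--sink arcs are as before. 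Because the $T_i$ are pairwise disjoint, every row node has exactly one incoming arc, either from $s$ or from a single chain, so the gadgets do not interfere. The fractional flow is the one from \Cref{lem:Single-T} applied gadgetwise: $\frac rn(|T_i|-k)$ on the chain arcs, $r/n$ into each row, $X_{jc}$ on row--item arcs, and column sums into $t$. This flow is feasible, conserves flow (row sums equal $r/n$), and has value $r$. By \Cref{lem:flow} it is a convex combination of integral flows. Each integral flow is a size-$r$ matching, and the resulting distribution $\lambda_w$ implements $X$.

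Verification for a fixed large $T_i$ then goes through unchanged. For every $\mu$ in the support, flow conservation on chain $i$ gives $|K_\mu\cap R^i_{k+1}|\in\{\lfloor\frac rn(|T_i|-k)\rfloor,\lceil\frac rn(|T_i|-k)\rceil\}$. In particular $K_\mu\cap T_i\neq\emptyset$, so the event in question is $\{|K_\mu\cap R^i_{k+1}|=0\}$.
\begin{itemize}
\item If $\frac rn(|T_i|-k)\ge1$, this event has probability $0$.
\item Otherwise $|K_\mu\cap R^i_{k+1}|\in\{0,1\}$ with mean $\frac rn(|T_i|-k)$, so the event has probability $1-\frac rn(|T_i|-k)<\frac{k}{|T_i|}$, using $\frac rn|T_i|>1$.
\end{itemize}
Summing with the weights $\delta^i_k=w^i_k-w^i_{k+1}$ gives $g^i(\lambda_w,w^i)\le0$. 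This holds for all $i$ simultaneously, since the same $\lambda_w$ serves every gadget.

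The only real obstacle is checking that merging the gadgets into one network keeps the integral capacities and the single-incoming-arc structure of the row nodes. That structure is what makes \eqref{eq:suffixT} hold separately for every set, and disjointness of the $T_i$ is exactly what is needed for it.
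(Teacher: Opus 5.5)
Your proof is correct and follows the paper's argument essentially step for step: the easy/hard split via $\varphi_{w^i}(S)\le\sum_{j\in S}w^i_j$, one biased chain gadget per hard set feeding pairwise disjoint row nodes, \Cref{lem:flow}, and the same two-case analysis of the suffix counts. The only difference is cosmetic: the paper also places sets with $\|w^i\|_1=0$ in the easy family, whereas you build a chain for every set with $\frac{r}{n}|T_i|>1$ regardless of its weights, which is harmless since the chain bounds hold for any ordering of $T_i$.
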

\begin{proof}
Fix any $w\in \Delta$, write \(\|w^i \|_1 =\sum_{j\in T_i} w_j^{i}\), and partition the interest sets into \textit{hard} and \textit{easy} families:
\[
H=\left\{ i\in [\ell] : |T_i| \frac{r}{n} > 1 \text{ and } \|w^i \|_1 >0 \right\}, \qquad E=[\ell] \setminus H.
\]
\noindent\textbf{\textit{Easy interest sets.}} We first show that for every \(i\in E\), any \(\lambda \in \Pi(X)\) already satisfies \(g^{i}(\lambda,w^i)\le 0\). If \(\|w^i \|_1 =0\), then \(g^{i}(\lambda,w^i)=0\) trivially. Otherwise, since \(i\in E\), we must have \(|T_i| \frac{r}{n} \leq 1\). Therefore,
\begin{align*}
    \EE_{\mu\sim\lambda}[\varphi_{w^i}(K_\mu \cap T_i)] &\leq  \EE_{\mu\sim\lambda}\left[\sum_{j\in T_i} w^i_j \ \mathbb{I}[j\in K_\mu]\right] \\
    &= \sum_{j\in T_i} w^i_j\,\PP_{\mu\sim\lambda}(j\in K_{\mu}) \\
    &= \frac{r}{n} \sum_{j\in T_i} w^i_j \\
    &\leq \frac{1}{|T_i|} \sum_{j\in T_i} w^i_j.
\end{align*}
Here, the second equality holds because each row of \(X\) sums to exactly \(\frac{r}{n}\), and \(\lambda\) implements \(X\); therefore, for every \(j\), the probability that row \(j\) is matched is  \(\frac{r}{n}\).

Hence, for each $i\in E$, we have $ g^{i}(\lambda,w^i)\le 0$ for all $\lambda \in \Pi(X)$. It therefore suffices to construct \(\lambda_w\in \Pi(X)\) such that \(g^j(\lambda_w,w^j)\le 0\) for all \(j\in H\).

\medskip

\noindent\textbf{\textit{Hard interest sets.}}
For each $i\in H$, relabel the indices of $T_i$ so that
\begin{align}\label{index-mT}
    w^i_{1} \ge w^i_{2} \ge \cdots \ge w^i_{|T_i|}.
\end{align}
where these indices are understood locally within \(T_i\). This local indexing is well defined because the sets \(T_i\) are pairwise disjoint.

For each $i\in H$, and each \(k\in \{1,\dots, |T_i|\}\), define \(\delta^i_k := w^i_k - w^i_{k+1}\), with the convention that \(w^i_{|T_i| +1}=0\), and set \(J^i_k := \{1,\dots,k\} \subseteq T_i\). With this notation, the same argument as in \Cref{lem:Single-T} shows that every nonempty subset \(S \subseteq T_i\) satisfies
\begin{align}\label{eq:phi-Identity-multi}
    \varphi_{w^i}(S)=\sum_{k=1}^{|T_i|} \delta^i_k\, \mathbb{I}[S\subseteq J^i_k].
\end{align}
Substituting \Cref{eq:phi-Identity-multi} into the expectation yields, for any \(\lambda\in \Pi(X)\),
\begin{align}\label{id:exp-eq-multi-T}
    \EE_{\mu\sim\lambda}[\varphi_{w^i}(K_\mu \cap T_i)] &=   \sum_{k=1}^{|T_i|} \delta^i_k \ \PP_{\mu\sim\lambda} \!\left(K_{\mu}\cap T_i\neq\emptyset \text{ and } K_{\mu}\cap T_i\subseteq J^i_k\right).
\end{align}
Thus, to find a single \(\lambda_w \in \Pi(X)\) such that \(g^i(\lambda_w, w^i)\le 0\) for every \(i\in H\), it is enough to prove that, for each \(i\in H\),
\begin{align}\label{ineq:multi-T}
    \PP_{\mu\sim\lambda_w} \!\left(K_{\mu}\cap T_i\neq\emptyset \text{ and } K_{\mu}\cap T_i\subseteq J^i_k\right) \leq \frac{k}{|T_i|} \qquad \text{ for each } k\in \{1,...,|T_i|\}.
\end{align}
Indeed, once \Cref{ineq:multi-T} is established, the lemma follows by substituting this bound into \Cref{id:exp-eq-multi-T}:
 \begin{align*}
    \EE_{\mu\sim\lambda_w}[\varphi_{w^i}(K_\mu \cap T_i)] &\leq \sum_{k=1}^{|T_i|} \delta^i_k \frac{k}{|T_i|}
    = \frac{1}{|T_i|} \sum_{k=1}^{|T_i|} k \ \delta^i_k
    = \frac{1}{|T_i|} \sum_{k=1}^{|T_i|} w^i_k \quad \text{ for each }i\in H.
\end{align*}
Hence \(g^i(\lambda_w,w^i)\le 0\) for all \(i\in H\), as required.

We have  reduced the lemma to constructing a distribution $\lambda_w$ that satisfies the family of bounds given in \Cref{ineq:multi-T}. We obtain such decomposition $\lambda_w$ of $X$ by a carefully constructed flow network with a feasible flow and subsequently applying \Cref{lem:flow}.

\begin{figure}[t]
    \centering
\begin{tikzpicture}[
    >=Stealth,
    every node/.style={font=\sffamily},
    scale=0.88,
    transform shape,
    term/.style={
        circle, draw=blue!55!black, fill=blue!10,
        line width=0.9pt, minimum size=18pt,
        font=\sffamily\scriptsize\bfseries
    },
    gadget/.style={
        circle, draw=teal!60!black, fill=teal!10,
        line width=0.8pt, minimum size=15pt
    },
    rownode/.style={
        circle, draw=orange!75!black, fill=orange!12,
        line width=0.8pt, minimum size=15pt
    },
    itemnode/.style={
        rounded corners=3pt, draw=violet!55!black, fill=violet!10,
        line width=0.8pt, minimum width=15pt, minimum height=15pt
    },
    arr/.style={->, line width=0.8pt, black!65},
    dots/.style={font=\Large, text=black!55},
    groupbox/.style={
        rounded corners=6pt, draw=teal!55!black, fill=teal!4,
        line width=0.9pt, dashed, inner xsep=8pt, inner ysep=7pt
    },
    collabel/.style={font=\sffamily\footnotesize\bfseries, text=black!65},
    grouplabel/.style={
        font=\sffamily\scriptsize\bfseries,
        text=teal!60!black,
        fill=white,
        inner sep=1.5pt,
        rounded corners=2pt
    }
]

\newcommand{\vstrongdots}{\(\vdots\)}

\def\srcX{-0.9}
\def\chainX{2.2}
\def\rowX{4.2}
\pgfmathsetmacro{\midX}{(\chainX+\rowX)/2}
\def\itemX{7.2}
\def\sinkX{9.4}

\node[collabel] at (\chainX, 1.35) {Chain nodes};
\node[collabel] at (\rowX, 1.35) {Rows};
\node[collabel] at (\itemX, 1.35) {Items};

\node[term] (s) at (\srcX, -7.5) {$s$};

\node[gadget] (c1a) at (\chainX, 0) {};
\node[gadget] (c1b) at (\chainX, -1.0) {};
\node[dots]    at (\chainX, -1.75) {\vstrongdots};
\node[gadget] (c1c) at (\chainX, -2.35) {};

\draw[arr] (c1a) -- (c1b);
\draw[arr] (c1b) -- ++(0, -0.55);

\node[rownode] (r1a) at (\rowX, 0) {};
\node[rownode] (r1b) at (\rowX, -1.0) {};
\node[dots]    at (\rowX, -1.75) {\vstrongdots};
\node[rownode] (r1c) at (\rowX, -2.35) {};

\draw[arr] (c1a) -- (r1a);
\draw[arr] (c1b) -- (r1b);
\draw[arr] (c1c) -- (r1c);
\draw[arr] (s) to[out=55, in=180] (c1a);

\begin{scope}[on background layer]
    \node[groupbox, fit=(c1a)(c1b)(c1c)(r1a)(r1b)(r1c)] (TOneBox) {};
\end{scope}
\node[grouplabel, anchor=south west] at ($(TOneBox.north west)+(0.08,0.05)$) {$T_1$};

\def\gTwoY{-3.9}

\node[gadget] (c2a) at (\chainX, \gTwoY) {};
\node[gadget] (c2b) at (\chainX, \gTwoY - 1.0) {};
\node[dots]    at (\chainX, \gTwoY - 1.75) {\vstrongdots};
\node[gadget] (c2c) at (\chainX, \gTwoY - 2.35) {};

\draw[arr] (c2a) -- (c2b);
\draw[arr] (c2b) -- ++(0, -0.55);

\node[rownode] (r2a) at (\rowX, \gTwoY) {};
\node[rownode] (r2b) at (\rowX, \gTwoY - 1.0) {};
\node[dots]    at (\rowX, \gTwoY - 1.75) {\vstrongdots};
\node[rownode] (r2c) at (\rowX, \gTwoY - 2.35) {};

\draw[arr] (c2a) -- (r2a);
\draw[arr] (c2b) -- (r2b);
\draw[arr] (c2c) -- (r2c);
\draw[arr] (s) to[out=28, in=180] (c2a);

\begin{scope}[on background layer]
    \node[groupbox, fit=(c2a)(c2b)(c2c)(r2a)(r2b)(r2c)] (TTwoBox) {};
\end{scope}
\node[grouplabel, anchor=south west] at ($(TTwoBox.north west)+(0.08,0.05)$) {$T_2$};

\node[dots] at (\chainX, -7.2) {\vstrongdots};
\node[dots] at (\rowX,   -7.2) {\vstrongdots};
\node[dots] at (\midX,  -7.2) {\vstrongdots};

\def\gEllY{-8.6}

\node[gadget] (cla) at (\chainX, \gEllY) {};
\node[gadget] (clb) at (\chainX, \gEllY - 1.0) {};
\node[dots]    at (\chainX, \gEllY - 1.75) {\vstrongdots};
\node[gadget] (clc) at (\chainX, \gEllY - 2.35) {};

\draw[arr] (cla) -- (clb);
\draw[arr] (clb) -- ++(0, -0.55);

\node[rownode] (rla) at (\rowX, \gEllY) {};
\node[rownode] (rlb) at (\rowX, \gEllY - 1.0) {};
\node[dots]    at (\rowX, \gEllY - 1.75) {\vstrongdots};
\node[rownode] (rlc) at (\rowX, \gEllY - 2.35) {};

\draw[arr] (cla) -- (rla);
\draw[arr] (clb) -- (rlb);
\draw[arr] (clc) -- (rlc);
\draw[arr] (s) to[out=-28, in=180] (cla);

\begin{scope}[on background layer]
    \node[groupbox, fit=(cla)(clb)(clc)(rla)(rlb)(rlc)] (Tellbox) {};
\end{scope}
\node[grouplabel, anchor=south west] at ($(Tellbox.north west)+(0.08,0.05)$) {$T_\ell$};

\node[rownode] (rn) at (\rowX, -11.9) {};
\node[dots] (moremid) at ($(rlc)!0.5!(rn)$) {\vstrongdots};

\draw[arr] (s) .. controls +(1.3,-3.8) and +(-1.0,-0.2) .. (rn);

\node[itemnode] (i1)   at (\itemX, 0) {};
\node[itemnode] (i2)   at (\itemX, -1.8) {};
\node[dots]      at (\itemX, -3.9) {\vstrongdots};
\node[itemnode] (imm2) at (\itemX, -6.3) {};
\node[itemnode] (imm1) at (\itemX, -8.4) {};
\node[dots]      at (\itemX, -10.2) {\vstrongdots};
\node[itemnode] (im)   at (\itemX, -11.8) {};

\draw[arr] (r1a) -- (i1);
\draw[arr] (r1b) -- (i2);
\draw[arr] (r1c) -- (i2);
\draw[arr] (r2a) -- (i1);
\draw[arr] (r2b) -- (imm2);
\draw[arr] (r2c) -- (i2);
\draw[arr] (rla) -- (imm1);
\draw[arr] (rlb) -- (imm1);
\draw[arr] (rlc) -- (imm1);
\draw[arr] (rn) -- (imm2.west);
\draw[arr] (rn) -- (im.west);

\node[term] (t) at (\sinkX, -5.0) {$t$};

\draw[arr] (i1) -- (t);
\draw[arr] (i2) -- (t);
\draw[arr] (imm2) -- (t);
\draw[arr] (imm1) -- (t);
\draw[arr] (im) -- (t);

\end{tikzpicture}
    \caption{Flow network used in the proof of \Cref{thm:multi-T}. Each dashed box represents the gadget associated with one interest set \(T_i\).}
    \label{fig:multi-T}
\end{figure}

\medskip
\noindent\textit{Biased Flow Network for Multiple Interest Sets.} We construct a directed flow network as shown in \Cref{fig:multi-T}. Formally, we build a directed network with source $s$, sink $t$, row nodes $i$ for each $i \in [n]$, item nodes $c_j$ for each $j \in [m]$, and, for each $q \in H$, a chain of nodes $\tau^q_1, \dots, \tau^q_{|T_q|}$.
The arcs are given as follows:
\begin{itemize}[leftmargin=2em]
\item for each $q\in H$:
\begin{itemize}
    \item[$\scriptstyle\blacktriangleright$] $s\to \tau^q_1$ with lower and upper capacities $\ell = \lfloor \frac{r}{n} |T_q| \rfloor $ and $u= \lceil \frac{r}{n} |T_q| \rceil$;
    \item[$\scriptstyle\blacktriangleright$] for each $k\in \{1,\dots,|T_q|-1\}$, an arc $\tau^q_k\to \tau^q_{k+1}$ with lower and upper capacities $\ell = \lfloor \frac{r}{n} (|T_q|-k) \rfloor $ and $u= \lceil \frac{r}{n} (|T_q|-k) \rceil$ ;
    \item[$\scriptstyle\blacktriangleright$] for each $k\in \{1,\dots,|T_q|\}$, an arc $\tau^q_k\to k$ with lower and upper capacities 0 and 1, where $k$ here denotes the row node in $T_q$;
\end{itemize}
\item for each row $i\notin \bigcup_{q\in H} T_q$, an arc $s\to i$ with lower and upper capacities 0 and 1;
\item for each pair $(i,j)$ with $X_{ij}>0$, an arc $i\to c_j$ with lower and upper capacities 0 and 1;
\item for each item node $c_j$, an arc $c_j\to t$ with lower and upper capacities 0 and 1.
\end{itemize}
Note that lower and upper bounds on the flows on each arc are integral.
We now define a feasible fractional $s$-$t$ flow $f$ on this network with value $r$:
\begin{itemize}
    \item  for each $q\in H$:
    \begin{itemize}
        \item[$\scriptstyle\blacktriangleright$] $f(s,\tau^q_1)=\frac{r}{n} |T_q| $;
        \item[$\scriptstyle\blacktriangleright$] for each $k\in \{1,\dots,|T_q|-1\}$, set $f(\tau^q_k,\tau^q_{k+1})=\frac{r}{n}(|T_q|-k)$;
        \item[$\scriptstyle\blacktriangleright$] for each $k\in \{1,\dots,|T_q|\}$, set $f(\tau^q_k,k)=\frac{r}{n}$.
        Recall that the indexing within each $T_q$ is defined locally (see~\eqref{index-mT}), with index $k$ denoting the row node in $T_q$ having the $k$th highest weight.
    \end{itemize}
    \item  for each $i\notin \bigcup_{q\in H} T_q$, set $f(s,i)=\frac{r}{n}$;
    \item for each arc $(i,c_j)$, set $f(i,c_j)=X_{ij}$;
    \item for each arc $(c_j,t)$, set $f(c_j,t)=\sum_{i=1}^n X_{ij}$.
\end{itemize}

Note that the flow $f$ is feasible since $0\leq \frac{r}{n}\leq 1$, and since each column sum of $X$ is at most 1, we have $0\leq \sum_{i=1}^n X_{ij}\leq 1$. Flow conservation at row nodes follows from the fact that each row of \(X\) sums to exactly \(\frac{r}{n}\), and flow conservation at the chain nodes is immediate from the definition of \(f\).
The total flow value is $\sum_{i\in [n]}\sum_{j\in [m]} X_{ij}=r$. Thus, the network and the flow satisfy the conditions of \Cref{lem:flow}, hence can be expressed as a distribution $\lambda_w$ over integral feasible flows of value $r$.

Fix any integral feasible flow in this decomposition. This flow defines a matching of size $r$, since every row node has inflow at most $1$ and therefore sends flow along at most one row--item arc, while every item node has capacity $1$ on its outgoing arc to the sink and hence is incident to at most one matched row. Hence, the distribution $\lambda_w$ over the integral feasible flows of value $r$ also gives distribution over size $r$ matchings.

Moreover, this distribution over matchings implements \(X\) i.e., \(\lambda_w\in \Pi(X)\). Indeed, for every pair \((i,j)\) with \(X_{ij}>0\), the edge \((i,j)\) is included in the matching with probability \(f(i,c_j)=X_{ij}\), since the integral flows decompose the fractional flow.

\medskip

Now fix any $i\in H$, let $R^i_{k}:=\{k,...,|T_i|\}$ be the suffix of $T_i$ starting at index $k$. For every \(\mu\) in the support of \(\lambda_w\), the set of matched rows \(K_\mu\) satisfies
\begin{align}\label{eq:suffix-multi-T}
    |K_\mu \cap R^i_{k+1}| \in \left\{   \left\lfloor \frac{r}{n} (|T_i|-k) \right\rfloor , \left\lceil \frac{r}{n} (|T_i|-k) \right\rceil \right\} \quad \text{ for each } k\in \{0,...,|T_i|-1\}.
\end{align}
To see this, first consider \(k\ge 1\). The arc \(\tau^i_k\to \tau^i_{k+1}\) has lower and upper capacities \(\lfloor \frac{r}{n} (|T_i|-k) \rfloor\) and \(\lceil \frac{r}{n} (|T_i|-k) \rceil\). Since the flow corresponding to \(\mu\) is integral, the amount of flow on this arc must be an integer between these two bounds. By flow conservation, this quantity is exactly \(|K_\mu \cap R^i_{k+1}|\), because each matched row in \(R^i_{k+1}\) contributes one unit of flow and each row can be matched to at most one item node.
When \(k=0\), the arc \(s\to \tau^i_1\) has lower and upper capacities \(\lfloor \frac{r}{n} |T_i| \rfloor\) and \(\lceil \frac{r}{n} |T_i| \rceil\), and the same argument shows that its flow value is exactly \(|K_\mu \cap R^i_{1}|=|K_\mu \cap T_i|\). This proves \Cref{eq:suffix-multi-T}.

Since $i\in H$, we have \(\frac{r|T_i|}{n} > 1\). Hence  \eqref{eq:suffix-multi-T} implies that $|K_\mu \cap T_i|\geq\lfloor \frac{r}{n} |T_i| \rfloor  \geq 1$. Thus, every matching $\mu$ in the support of $\lambda_w$ satisfies $K_{\mu}\cap T_i \neq \emptyset$. It follows that, for each $k\in \{1,...,|T_i|-1\}$, the following events are equivalent:
\[K_{\mu}\cap T_i\neq\emptyset \text{ and } K_{\mu}\cap T\subseteq J^i_k \iff |K_{\mu} \cap R^i_{k+1}|=0. \]

\noindent\textit{Verifying $\lambda_w$ satisfies \Cref{ineq:multi-T}.}
Recall that \Cref{ineq:multi-T} imposes, for each hard interest set, a family of inequalities. For the fixed \(i\in H\), it requires
\begin{align}\label{eq:mTT}
    \PP_{\mu\sim\lambda_w} \!\left(K_{\mu}\cap T_i\neq\emptyset \text{ and } K_{\mu}\cap T_i\subseteq J^i_k\right) \leq \frac{k}{|T_i|} \qquad \text{ for each } k\in \{1,...,|T_i|\}.
\end{align}
Note that the inequality \eqref{eq:mTT} is trivial for \(k=|T_i|\), so we may assume that \(k\in\{1,\dots,|T_i|-1\}\). Define
\[
Y^i_{k+1}:=|K_\mu\cap R^i_{k+1}|.
\]
By identity \eqref{eq:suffix-multi-T}, this random variable takes values in $\left\{   \left\lfloor \frac{r}{n} (|T_i|-k) \right\rfloor , \left\lceil \frac{r}{n} (|T_i|-k) \right\rceil \right\}$.
We now verify inequality~\eqref{eq:mTT} by considering the following two cases.

\medskip

\noindent\textit{Case 1: $\frac{r}{n} (|T_i|-k)\geq 1$.} In this case, we have that $Y^i_{k+1} = |K_\mu\cap R^i_{k+1}| \geq \lfloor \frac{r}{n} (|T_i|-k) \rfloor\geq 1$ for any matching $\mu$ in the support of $\lambda_w$. Hence,
\[
\PP_{\mu\sim\lambda_w} \!\left(K_{\mu}\cap T_i\neq\emptyset \text{ and } K_{\mu}\cap T_i\subseteq J^i_k\right) = \PP_{\mu\sim\lambda_w} \!\left( Y^i_{k+1}=0\right)=0.
\]

\medskip

\noindent\textit{Case 2:  $\frac{r}{n} (|T_i|-k)< 1$.}
Since \(\lambda_w\) implements \(X\), each row \(i\) is matched with probability exactly \(\frac{r}{n}\). As \(|R^i_{k+1}|=|T_i|-k\), it follows by linearity of expectation that
\(
\EE_{\mu\sim\lambda_w}[Y^i_{k+1}]
=
\EE_{\mu\sim\lambda_w}[|K_\mu\cap R^i_{k+1}|]
=
\frac{r}{n}(|T_i|-k).
\)
Furthermore, by identity \eqref{eq:suffix-multi-T} and the case distinction, we know that the random variable $Y^i_{k+1}$ takes values in $\{0,1\}$. Hence, we have $\PP_{\mu\sim\lambda_w} \!\left( Y^i_{k+1} =1 \right) = \EE_{\mu\sim\lambda_w}[|K_\mu\cap R^i_{k+1}|] =  \frac{r}{n}(|T_i|-k)$. We obtain,
\begin{align*}
&\PP_{\mu\sim\lambda_w}
\!\left(K_{\mu}\cap T_i\neq\emptyset
\text{ and }K_{\mu}\cap T_i\subseteq J^i_k\right)\\
&\qquad = \PP_{\mu\sim\lambda_w}\!\left(Y^i_{k+1}=0\right)
= 1 - \frac{r}{n}(|T_i|-k)
< 1- \frac{|T_i|-k}{|T_i|}
=\frac{k}{|T_i|}.
\end{align*}
Here, the inequality follows since $i\in H$, we know that \(\frac{r|T_i|}{n} > 1\).

Thus inequality~\eqref{eq:mTT} holds for the fixed $i\in H$.
Since \(i \in H\) was arbitrary, the same conclusion holds for every hard interest set. This completes the proof that \(\lambda_w\) satisfies the family of inequalities given in \Cref{ineq:multi-T}, and thus establishes the lemma.

\end{proof}

\section{Stochastic Dominance Fairness}
\label{sect:SD-BoBW}
In this section, we consider the setting where agents only have ordinal preferences, and consider the corresponding fairness notions, stochastic dominance envy-freeness (SD-EF) and stochastic dominance EF$1$.
We show that the PS-lottery by~\citet{AzizFrSh24} can be extended to guarantee \exan SD-EF and \expo SD-EF$2$ (Theorem~\ref{thm:SD-EF+SD-EF2}).
However, in contrast to the main result of this paper, we show that \exan SD-EF is incompatible with \expo SD-EF$1$ (Theorem~\ref{thm:neg:SD-EF+SD-EF1-goods-chores}).

For an agent $i\in N$ and any pair of items $h,h'\in M$, denote by $h\succsim_i h'$ (resp., $h\succ_i h'$) if $i$ weakly prefers (resp., strictly prefers) $h$ over $h'$.
For each agent $i$, let $G_i$ be the set of items that $i$ weakly prefers over a null item, i.e., $G_i:=\{h\in M:h\succsim_i\emptyset\}$.
Let $Z_i$ be the set of items such that $i$ strictly prefers a null item over it, i.e., $Z_i:=\{h\in M:h\prec_i \emptyset\}$.

An item $g$ is called a \emph{subjective good} if there exists an agent $i$ such that $g\in G_i$.
An item $c$ is called an \emph{objective chore} if for all agents $i$, $c\in Z_i$.
As before, we still partition the item set $M$ into a set of subjective goods $G$ (in which each item is denoted by $g$) and a set of objective chores $Z$ (in which each item is denoted by $c$).
Note that $G=\bigcup_{i\in N}G_i$, and $Z=\bigcap_{i\in N}Z_i$.

Envy-freeness and EF$1$ naturally extend to the stochastic dominance envy-freeness (SD-EF) and stochastic dominance EF$1$ under the ordinal setting.

\begin{definition}\label{def:sd-ef}
    A fractional allocation~$X$ satisfies \emph{stochastic dominance envy-freeness (\SDEF)} if for any pair of agents~$i, j \in N$, we have $X_i \succsim_i^{\text{SD}} X_j$, i.e., for each item $g\succ_i \emptyset$,
    we have
    \[
    \sum_{g' \in G \colon g' \succsim_i g} X_{i g'} \geq \sum_{g' \in G \colon g' \succsim_i g} X_{j g'},
    \]
    and for each item $c\prec_i \emptyset$, we have
    \[
    \sum_{c' \in Z \colon c' \precsim_i c} X_{i c'} \leq \sum_{c' \in Z \colon c' \precsim_i c} X_{j c'}.
    \]
\end{definition}

\begin{definition}\label{def:sd-ef1}
    An integral allocation~$A$ satisfies \emph{stochastic dominance EF$1$ (\SDEFOne)} if for any pair of agents~$i, j \in N$, either of the following holds:
    \begin{itemize}
        \item $i$ does not SD-envy $j$: $A_i\succsim^{\text{SD}}_i A_j$, or
        \item there exists $h\in A_i\cup A_j$ such that $A_i\setminus\{h\}\succsim^{\text{SD}}_i A_j\setminus\{h\}$.
    \end{itemize}
    Alternatively, an integral allocation $A$ is \SDEFOne if for any pair of agents~$i, j \in N$, either of the following holds:
    \begin{itemize}
        \item $A_i \succsim_i^{\text{SD}} A_j \setminus \{h\}$ for some~$h \in A_j$ with $h\succsim_i\emptyset$, or
        \item $A_i \setminus \{h\} \succsim_i^{\text{SD}} A_j$ for some~$h \in A_i$ with $\emptyset\succ_ih$.
    \end{itemize}
\end{definition}

Given an integral allocation~$A$ and a subset of items~$S \subseteq M$, let $A|_{S} \coloneqq (A_1|_{S}, A_2|_{S}, \dots, A_n|_{S})$ be the allocation restricted to items~$S$, where for each~$i \in N$, $A_i|_{S} \coloneqq A_i \cap S$.

\SDEFOne allocations satisfy the following property.
\begin{lemma}
\label{lemma:no-simultaneous-SD-envy}
Consider any \SDEFOne allocation~$A$.
For all~$i, j \in N$, $A_j|_{G_i} \succ_i^{\text{SD}} A_i|_{G_i}$ and $A_j|_{Z_i} \succ_i^{\text{SD}} A_i|_{Z_i}$ cannot happen simultaneously.
\end{lemma}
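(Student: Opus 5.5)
We argue by contradiction: fix an \SDEFOne allocation~$A$ and agents $i,j$ with both $A_j|_{G_i} \succ_i^{\text{SD}} A_i|_{G_i}$ and $A_j|_{Z_i} \succ_i^{\text{SD}} A_i|_{Z_i}$. The aim is to show that removing any single item $h \in A_i \cup A_j$ leaves $A_i\setminus\{h\}\not\succsim_i^{\text{SD}} A_j\setminus\{h\}$. This contradicts the first formulation of \SDEFOne in Definition~\ref{def:sd-ef1}. (Since $i$ SD-envies $j$ under the two hypotheses, the \SDEFOne condition would have to be met by some removal.)

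The key observation is that SD-comparison, as in Definition~\ref{def:sd-ef}, splits into two independent families of inequalities. The first family consists of upper-contour counts over $G_i$, one for each threshold $g\succ_i\emptyset$. The second family consists of lower-contour counts over $Z_i$, one for each threshold $c\prec_i\emptyset$. The two families share no items. Removing one item $h$ therefore affects only one side, depending on whether $h\in G_i$ or $h\in Z_i$, and it does so by decreasing some counts by exactly one.

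Case 1: $h\in G_i$. Then $(A_i\setminus\{h\})|_{Z_i}=A_i|_{Z_i}$ and $(A_j\setminus\{h\})|_{Z_i}=A_j|_{Z_i}$, so the chore-side comparison is unchanged. The hypothesis $A_j|_{Z_i}\succ_i^{\text{SD}}A_i|_{Z_i}$ gives a threshold $c$ at which agent $i$'s bundle contains strictly more chores weakly worse than $c$ than agent $j$'s bundle does. That inequality persists after the removal, so $A_i\setminus\{h\}\not\succsim_i^{\text{SD}}A_j\setminus\{h\}$.

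Case 2: $h\in Z_i$. Symmetrically, the goods-side counts are unchanged. The strict dominance $A_j|_{G_i}\succ_i^{\text{SD}}A_i|_{G_i}$ supplies a threshold $g$ at which $j$ has strictly more goods at least as good as $g$. This persists after the removal, so the SD-comparison again fails.

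The only delicate point is what ``strict'' SD-dominance means. I read $\succ^{\text{SD}}$ as $\succsim^{\text{SD}}$ with strict inequality at some threshold, and it is exactly that strict threshold inequality I use. Under this reading the restricted bundles are compared using the relevant half of Definition~\ref{def:sd-ef} only, so the argument reduces to checking that the threshold sets over $G_i$ and over $Z_i$ are disjoint. The remaining subtlety is items $h$ with $h\sim_i\emptyset$. These lie in $G_i$, and Case 1 covers them, so no real obstacle remains.
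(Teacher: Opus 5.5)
Your proof is correct and follows the same route as the paper: you split on whether the removed item $h$ lies in $G_i$ or in $Z_i$, and observe that the strict SD-dominance on the other side is untouched by the removal. Your version is just more explicit than the paper's two-line argument about the threshold inequalities, the no-removal case, and items with $h\sim_i\emptyset$.
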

\begin{proof}
Suppose for the sake of contradiction that there exists an \SDEFOne allocation~$A$ of the instance such that $A_j |_{G_i} \succ_i^{\text{SD}} A_i |_{G_i}$ and $A_j |_{Z_i} \succ_i^{\text{SD}} A_i |_{Z_i}$.
Since allocation~$A$ is \SDEFOne, there exists an item~$h \in A_i \cup A_j$ such that $A_i \setminus \{h\} \succsim_i^{\text{SD}} A_j \setminus \{h\}$.
However, if $h \in G_i$, we still have $A_j|_{Z_i} \succ_i^{\text{SD}} A_i|_{Z_i}$, contradicting the fact that $A$ is \SDEFOne.
A similar argument holds when $h \in Z_i$.
\end{proof}

We further define a relaxation of SD-EF$1$, which allows envy up to the removal of two items.

\begin{definition}\label{def:sd-ef2}
    An integral allocation~$A$ satisfies \emph{stochastic dominance EF$2$ (\SDEFTwo)} if for any pair of agents~$i, j \in N$, there exist two items from $A_i\cup A_j$ such that $i$ does not SD-envy $j$ after the removal of the two items from their corresponding bundles, i.e.,
    there exist $h,h'\in A_i\cup A_j$ such that $A_i\setminus\{h,h'\}\succsim^{\text{SD}}_i A_j\setminus\{h,h'\}$.
\end{definition}

Under the ordinal preferences, known approaches achieve \exan \SDEF and \expo \SDEFTwo.
\begin{theorem}\label{thm:SD-EF+SD-EF2}
    Under the mixed goods and chores setting with ordinal preferences, \exan \SDEF and \expo \SDEFTwo are compatible.
\end{theorem}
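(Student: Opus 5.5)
We will realize the guarantee by running a PS-lottery separately on the subjective goods $G$ and on the objective chores $Z$, with ordinal eating, and then combining the two lotteries independently (for instance via their product). This mirrors the EF$2$ construction sketched at the end of Section~\ref{sect:PS}. The PS rule only needs each agent's ranking, so it fits the ordinal setting directly.

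On $G$, we pad with enough dummy items so that each agent $i$ eats items of $G_i$ in her preference order and dummies afterwards. She never eats an item of $Z_i$, and she breaks ties arbitrarily but consistently.

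On $Z$, every item is a chore to everyone. There we run the eating procedure on ``least-bad first'': each agent eats her most preferred remaining chore, with dummies added so that the total is a multiple of $n$. These dummies are eaten first, as in the hard case construction.

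Ex ante SD-EF then follows from the classical fact (Bogomolnaia--Moulin) that PS is SD-envy-free. For goods, at every moment agent $i$ eats her favourite available item. Hence for every threshold $g\succ_i\emptyset$, the mass of items in $\{g'\succsim_i g\}$ that $i$ holds is at least what any $j$ holds. The chore part uses the symmetric argument: $i$ always eats her least-disliked chore first. So for each $c$, her mass on $\{c'\precsim_i c\}$ (the worst chores) is at most $j$'s. The combined fractional allocation satisfies both families of inequalities in Definition~\ref{def:sd-ef}, since the goods part only involves $G$ and the chores part only involves $Z$. One caveat is that an item in $G\setminus G_i$ is a chore for $i$ yet is allocated in the goods lottery. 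Because $i$ never eats such items, $i$ holds zero mass of them while $j$ may hold positive mass. The concern is that this could violate the chore inequalities, which is why I would need to check the definition carefully. Definition~\ref{def:sd-ef} only sums chores over $Z$, so these items do not enter it, and the good inequalities only involve $g'\in G$ with $g'\succsim_i g\succ_i\emptyset$, which lie in $G_i$.

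For ex post, we invoke Proposition~\ref{prop:ps-lottery} on each realized integral allocation. On goods, removing $j$'s first-round item $h_{j,1}$ makes $i$'s goods bundle (restricted to $G_i$) SD-dominate $j$'s. Round by round, $i$'s $t$-th item is weakly better than $j$'s $(t+1)$-th item, and a pairwise-dominating injection gives SD dominance. On chores, removing $i$'s last-round chore works in the same way. Items of $G\setminus G_i$ received by $j$ only make $j$'s bundle worse for $i$, and $i$ never receives any. Removing these two items, one from each part, yields $A_i\setminus\{h,h'\}\succsim^{\text{SD}}_i A_j\setminus\{h,h'\}$.

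The main obstacle is making the SD comparison on the whole bundle rigorous, because it merges the good and chore parts and includes $j$'s items of $G\setminus G_i$. I would handle this by exhibiting an explicit injection from $j$'s remaining items to $i$'s that respects $\succsim_i$, padding with the null item. This construction has to be consistent with whichever formal definition of SD comparison between integral bundles the paper uses.
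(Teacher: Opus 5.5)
Your proposal is correct and follows essentially the same route as the paper: run PS-lotteries separately on $G$ (padded with dummies so that no agent receives an item she ranks below the null item) and on $Z$, combine them while preserving marginals, obtain ex-ante SD-EF from each part, and obtain ex-post SD-EF2 by removing at most one item from each part. Your round-by-round argument via Proposition~\ref{prop:ps-lottery}, which removes $j$'s first-round good and $i$'s last-round chore, spells out in more detail the paper's one-line remark that ``at most one item needs to be removed on $G$ and at most one on $Z$.''
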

\begin{proof}
    It is known that PS-lottery achieves \exan \SDEF and \expo \SDEFOne for objective goods \cite{AzizFrSh24}, and can be similarly extended to the setting that contains only subjective goods or objective chores as in Section~\ref{sect:PS}.
    We can apply PS-lotteries on the set of subjective goods $G$ (with a sufficient number of dummy goods added to ensure no one receives any chore), and the set of objective chores $Z$ separately.
    We then arbitrarily combine the two obtained lotteries while preserving the marginal probabilities.

    Ex-ante \SDEF of the combined lottery directly follows from \exan \SDEF of the lotteries on $G$ and $Z$.
    Moreover, since in each integral allocation in the lotteries, at most one item needs to be removed on $G$ and at most one on $Z$ (as each lottery is \expo \SDEFOne), combining them leads to at most two items' removal in total, thus \expo \SDEFTwo.
\end{proof}

Different from the cardinal setting where \exan EF and \expo EF$1$ are compatible, we show in the following that \exan \SDEF and \expo \SDEFOne cannot be simultaneously achieved.
Therefore, \expo \SDEFTwo is the best we can guarantee when restricting to \exan \SDEF.

\begin{theorem}
\label{thm:neg:SD-EF+SD-EF1-goods-chores}
There exists an instance with three agents and mixed goods and chores in which no randomized allocation is simultaneously \exan \SDEF and \expo \SDEFOne.
\end{theorem}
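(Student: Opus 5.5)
The plan is to build a small instance in which ex-ante \SDEF pins down the marginal fractional allocation completely, and then show that no integral allocation compatible with those marginals can be \SDEFOne. The tool for the second step is Lemma~\ref{lemma:no-simultaneous-SD-envy}.

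\emph{Step 1: fix the marginals.} I would give all three agents the same ordinal preference, of the form $g_1 \succ g_2 \succ \emptyset \succ c_1 \succ c_2$ or a slight variant. The first step is to check that \SDEF then forces a unique fractional allocation. For every item $g$, the goods inequalities of Definition~\ref{def:sd-ef} in both directions between two agents force equal cumulative sums. Differencing these sums, every item goes to every agent with probability exactly $1/3$.

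\emph{Step 2: constrain the support.} Any lottery in the support must consist of \SDEFOne integral allocations. For each such allocation, Lemma~\ref{lemma:no-simultaneous-SD-envy} says that no agent $i$ may be strictly worse than another agent $j$ on both the goods part and the chores part. In addition, the definition of \SDEFOne directly gives finer constraints when only one item is removed. I would enumerate, up to permuting agents, the allocations of the few items that are \SDEFOne. From these I would extract correlation rules of the form ``the holder of $g_1$ must also hold the worst chore $c_2$''.

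A trial with a single good and two chores already produces such a rule. If the holder of $g$ has no chore, then a chore holder with no good violates Lemma~\ref{lemma:no-simultaneous-SD-envy}. If the holder of $g$ holds only $c_1$ while another agent holds $c_2$, the SD-comparison also fails. That instance alone is consistent, though, since the rule ``the holder of $g$ holds $c_2$'' is symmetric and compatible with uniform marginals. The second good must be chosen so that the forced correlations conflict. For example, I would aim for both ``the holder of $g_1$ holds $c_2$'' and ``the holder of $g_2$ holds $c_2$, or something equally bad''. This would force $g_1$ and $g_2$ onto the same agent. The marginals still allow this, so the chore side must additionally force $g_1$ and $g_2$ apart, or force some agent to receive nothing with probability larger than the marginals permit.

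\emph{Step 3: derive the contradiction.} Once the set $\mathcal{S}$ of \SDEFOne allocations is known, I would show that no convex combination of $\mathcal{S}$ gives every agent each item with probability $1/3$. A counting argument should suffice here. For instance, some agent receives a specific item in every allocation of $\mathcal{S}$ where she receives anything, which is incompatible with the uniform marginals. Alternatively, I would exhibit a linear functional, such as the probability that $g_1$ and $g_2$ land on the same agent, which is forced to be $1$ on $\mathcal{S}$ but at most some value $<1$ under the marginals combined with another forced rule.

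\emph{Main obstacle.} The hard part is choosing the right instance, since identical preferences may make the constraints too symmetric to clash. If that happens, I would break symmetry by letting one agent regard $g_2$ as a chore, or swap the order of $c_1$ and $c_2$ for one agent. I would then recompute which marginals \SDEF still forces; they will be uniform only on the items all agents rank identically. After that, I would redo the finite case analysis, searching by hand over instances with at most two goods and two or three chores.
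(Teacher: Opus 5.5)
\textbf{There is a genuine gap.} The statement is an existence claim, so the proof consists of a specific instance plus a verification that no lottery works for it. Your proposal never fixes an instance and never carries out that case analysis.

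Moreover, the family you start from provably cannot work, not just heuristically. Suppose all agents share one ordinal preference and $A$ is any SD-EF1 allocation. Apply a uniformly random permutation of the agents to $A$. Since preferences are identical, SD-EF1 is invariant under relabeling agents, so the lottery is ex post SD-EF1. Every agent also gets the same marginal row, so the lottery is ex ante SD-EF. Your instance $g_1\succ g_2\succ\emptyset\succ c_1\succ c_2$ has the SD-EF1 allocation $(\{g_1,c_2\},\{g_2,c_1\},\emptyset)$. Hence your set $\mathcal{S}$ is closed under permuting agents, and its symmetrization realizes exactly the uniform marginals from Step~1, so no counting argument in Step~3 can succeed.

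Your fallback search over instances with at most two goods is unanalyzed, and its most natural member also fails. Let agents $1$ and $3$ rank $a\succ b$, agent $2$ rank $b\succ a$, and all agents rank $c_1\succ c_2\succ c_3$. The uniform lottery over the following six allocations is ex ante SD-EF and ex post SD-EF1:
$(\{a,c_2\},\{b,c_3\},\{c_1\})$, $(\{a,c_3\},\{b,c_2\},\{c_1\})$, $(\{a,c_3\},\{c_1\},\{b,c_2\})$, $(\{b,c_2\},\{c_1\},\{a,c_3\})$, $(\{c_1\},\{b,c_2\},\{a,c_3\})$, $(\{c_1\},\{b,c_3\},\{a,c_2\})$.
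Your other fallback, reversing the chore order for one agent, removes precisely the feature the paper's proof depends on.

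\textbf{The missing idea is the mechanism behind the paper's instance.}
\begin{itemize}
\item \emph{Keep the chores identically ranked.} Then SD-EF forces every agent to receive every chore with probability $1/3$.
\item \emph{Use three goods and three chores.} SD-EF1 then forces each agent to hold exactly one of each. By Lemma~\ref{lemma:no-simultaneous-SD-envy}, if $i$ SD-envies $j$ on goods, then $i$'s chore must be better than $j$'s. Mutual goods-envy between two agents is therefore impossible ex post.
\item \emph{Make the goods rankings heterogeneous so that mutual envy can arise.} Take agent 1: $g_1\succ g_2\succ g_3$; agent 2: $g_1\succ g_3\succ g_2$; agent 3: $g_2\succ g_1\succ g_3$. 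The lemma then rules out agent $2$ holding $g_2$ and agent $3$ holding $g_1$, leaving three goods allocations. SD-EF pins the lottery on these down to weights $1/2, 1/4, 1/4$.
\item \emph{Derive the clash.} Agent 1 goods-envies agent 2 with probability $1/2$, and agent 2 goods-envies agent 1 otherwise. So agent 1's chore beats agent 2's with probability exactly $1/2$. Combined with uniform chore marginals, this gives the chore allocation $(\{c_2\},\{c_3\},\{c_1\})$ probability exactly $1/6$. But the goods allocation $(\{g_2\},\{g_1\},\{g_3\})$ has probability $1/4$ and forces that very chore allocation.
\end{itemize}

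Your ingredients are the right ones: the lemma, forced marginals, and a linear-functional contradiction. They only produce a contradiction, however, on an instance with this rigid asymmetric structure. Finding and verifying such an instance is the proof, and the proposal does not supply it.
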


\begin{proof}
Consider an instance with agents $N = \{1, 2, 3\}$ and items $G \cup Z$ where $G = \{g_1, g_2, g_3\}$ and $Z = \{c_1, c_2, c_3\}$.
The agents have the following ordinal preferences over the items:
\begin{itemize}
\item Agent~$1$: $g_1 \succ g_2 \succ g_3 \succ \emptyset \succ c_1 \succ c_2 \succ c_3$;
\item Agent~$2$: $g_1 \succ g_3 \succ g_2 \succ \emptyset \succ c_1 \succ c_2 \succ c_3$;
\item Agent~$3$: $g_2 \succ g_1 \succ g_3 \succ \emptyset \succ c_1 \succ c_2 \succ c_3$.
\end{itemize}
In other words, all agents consider items in~$G$ as (objective) goods and items in~$Z$ as (objective) chores, and they have an identical ordinal preference over the chores.

By the definition of \SDEFOne, in any \SDEFOne integral allocation, every agent gets exactly one item from~$G$ and exactly one item from~$Z$.
First, we claim that in any \SDEFOne integral allocation of the instance, agent~$2$ does not receive item~$g_2$.
Suppose for the sake of contradiction that there exists an \SDEFOne allocation~$A$ in which agent~$2$ gets item~$g_2$.
Since each agent gets exactly one item from~$G$ and item~$g_2$ is the least (resp., most) preferred item for agent~$2$ (resp., agent~$3$), we have
\[
A_3|_G \succ_2^\SD A_2|_G \qquad \text{and} \qquad A_2|_G \succ_3^\SD A_3|_G.
\]
By \Cref{lemma:no-simultaneous-SD-envy}, we must have
\[
A_3|_Z \precsim_2^\SD A_2|_Z \qquad \text{and} \qquad A_2|_Z \precsim_3^\SD A_3|_Z.
\]
Put differently, in allocation~$A$, both agent~$2$ and~$3$ prefer their own chore to the chore received by the other agent.
This is impossible as both agents have an identical ordinal preference over chores~$Z$.
Next, we claim that in any \SDEFOne integral allocation, agent~$3$ does not receive item~$g_1$.
Suppose for the sake of contradiction that in an \SDEFOne allocation~$A$, agent~$3$ gets item~$g_1$.
Since each agent gets exactly one item from~$G$ and agent~$2$ never gets item~$g_2$, we have $$A|_G = (\{g_2\}, \{g_3\}, \{g_1\}),$$ and thus
\[
A_3|_G \succ_1^\SD A_1|_G \qquad \text{and} \qquad A_1|_G \succ_3^\SD A_3|_G.
\]
Again, by \Cref{lemma:no-simultaneous-SD-envy}, we must have that in allocation~$A$, both agent~$1$ and~$3$ prefer their own chore to that of the other agent, and this is impossible.
To summarize, we have so far shown that in any \SDEFOne integral allocation of the instance, agent~$2$ does not receive item~$g_2$ and agent~$3$ does not receive item~$g_1$.
As a result, we are left with the following three possible integral allocations of items~$G$ in any \SDEFOne integral allocation of the instance:
\begin{enumerate}[label=(\roman*)]
\item $(\{g_1\}, \{g_3\}, \{g_2\})$;
\item $(\{g_3\}, \{g_1\}, \{g_2\})$;
\item $(\{g_2\}, \{g_1\}, \{g_3\})$.
\end{enumerate}

Suppose for the sake of contradiction that there exists an \exan \SDEF fractional allocation~$X$ that can be implemented by \expo \SDEFOne integral allocations.
First, following the definition of \SDEF, we must have $X_{1 g_1} \leq 1/2$ and $X_{2 g_1} \leq 1/2$, i.e., each of agents~$1$ and~$2$ must get at most~$\frac{1}{2}$ of item~$g_1$.
Moreover, since agent~$3$ never gets item~$g_1$ in \expo integral allocations, agent~$3$ must not receive any fraction of~$g_1$ \exan, i.e., $X_{3 g_1} = 0$.
It follows that
\[
X_{1 g_1} = X_{2 g_1} = 1/2.
\]
Similarly, as agent~$2$ never gets item~$g_2$ in any \expo integral allocation, we must have $X_{2 g_2} = 0$.
Now consider agents~$1$ and~$3$.
Due to \SDEF, we must satisfy the following two inequalities simultaneously:
\[
X_{1 g_1} + X_{1 g_2} \geq X_{3 g_1} + X_{3 g_2} \qquad \text{and} \qquad X_{3 g_2} + X_{3 g_1} \geq X_{1 g_2} + X_{1 g_1},
\]
which leads to $X_{1 g_2} = 1/4$ and $X_{3 g_2} = 3/4$.
Applying a similar argument to reason the \SDEF property between agents, we have
\[
X_{1 g_3} = X_{3 g_3} = 1/4 \qquad \text{and} \qquad X_{2 g_3} = 1/2.
\]
To summarize, we have so far reached the unique fractional allocation of items~$G$ that can be decomposed into the three possible integral allocations of items~$G$.
Furthermore, the implementation of~$X|_G$ is unique and as follows:
\[
\begin{bmatrix}
1/2 & 1/4 & 1/4 \\
1/2 & 0 & 1/2 \\
0 & 3/4 & 1/4
\end{bmatrix}
= 1/2 \cdot (\{g_1\}, \{g_3\}, \{g_2\}) + 1/4 \cdot (\{g_3\}, \{g_1\}, \{g_2\}) + 1/4 \cdot (\{g_2\}, \{g_1\}, \{g_3\}).
\]

From the above implementation, we have the following facts:
\begin{itemize}
\item With probability~$1/2$, agent~$1$ SD-envies agent~$2$.
\item With probability~$1/2$, agent~$2$ SD-envies agent~$1$.
\end{itemize}
It implies that when allocating items~$Z$, agent~$1$ should be SD-envy-free towards agent~$2$ with probability~$1/2$, and vice versa.
Among the six possible integral allocations of items~$Z$ between the three agents, agent~$1$ is SD-envy-free towards agent~$2$ in the following three allocations of~$Z$:
\begin{itemize}
\item Agent~$1$ gets~$c_1$ and allocates $c_2, c_3$ to agents~$2$ and~$3$ arbitrarily; and
\item $(\{c_2\}, \{c_3\}, \{c_1\})$.
\end{itemize}
Recall that fractional allocation~$X$ is \SDEF and all agents have the identical ordinal preference over items~$Z$, which implies that each agent gets exactly~$1/3$ of each item in~$Z$.
We thus conclude that the allocation $(\{c_2\}, \{c_3\}, \{c_1\})$ appears with probability~$1/2 - 1/3 = 1/6$.

Recall from the implementation, with probability~$1/4$, the integral allocation of items~$G$ is $(\{g_2\}, \{g_1\}, \{g_3\})$, in which agent~$1$ SD-envies agent~$2$ and agent~$3$ SD-envies both agents~$1$ and~$2$.
Then, conditioning on $(\{g_2\}, \{g_1\}, \{g_3\})$, in order to have an \SDEFOne allocation of items~$G \cup Z$, there is a unique integral allocation of items~$Z$: $(\{c_2\}, \{c_3\}, \{c_1\})$.
In other words, the allocation $(\{c_2\}, \{c_3\}, \{c_1\})$ should appear with probability~$1/4$, which contradicts the probability~$1/6$ in the last paragraph, as desired.
\end{proof}
\end{document}